\newtheorem{theorem}{Theorem}
\newtheorem{lemma}[theorem]{Lemma}
\newtheorem{proposition}[theorem]{Proposition}
\newtheorem{corollary}[theorem]{Corollary}
\theoremstyle{definition}
\newtheorem{definition}[theorem]{Definition}
\theoremstyle{remark}
\newtheorem{remark}[theorem]{Remark}
\numberwithin{equation}{section} \numberwithin{theorem}{section}
\def \T{\Theta}
\def \O{\Omega}
\def \tilde{\widetilde}
\def \la{\lambda}
\def \p{\partial}
\def \pa{\partial}
\def \cp{\mathbb C[\partial]}
\def \l0{L_{\geq 0}}
\renewcommand{\O}{\Omega}
\newcommand{\CC}{{\mathbb C}}
\newcommand{\ZZ}{{\mathbb Z}}
\newcommand{\E}{\mathcal{E}}
\newcommand{\M}{\mathcal{M}}
\newcommand{\F}{\mathcal{F}}
\newcommand{\A}{\mathcal{A}}
\newcommand{\R}{\mathcal{R}}
\newcommand{\fg}{{\mathfrak g}}
\newcommand{\for}{{\rm for}}
\newcommand{\Res}{{\rm Res}}
\newcommand{\Alg}{{\rm Alg}}
\newcommand{\Conf}{{\rm Conf}}
\newtheorem*{corollary*}{Corollary}
\newtheorem*{remark*}{Remark}
\newtheorem*{remarks*}{Remarks}
\newcommand{\alphaparenlist}{
\renewcommand{\theenumi}{\alph{enumi}}%
\renewcommand{\labelenumi}{(\theenumi)}%
}
\def\@maketitle{\newpage
\begin{center}%
\vskip 3em
{\Large\bf \@title \par}%
\vskip 1.5em {\normalsize \lineskip .5em
\begin{tabular}[t]{c}\@author
\end{tabular}\par}%
\vskip 1em

\end{center}%
\par
\vskip .5em} \makeatother
\begin{document}

\title{Representations of simple finite Lie conformal superalgebras of type
$W$ and $S$}

\author{Carina Boyallian
\and Victor G. Kac
\and Jose I.~Liberati \and  Alexei Rudakov
}

\maketitle

\begin{abstract}  We construct all finite irreducible modules over Lie conformal
superalgebras of type $W$ and $S$.

\end{abstract}

\vskip 3cm

\section{Introduction}

Lie conformal superalgebras encode the singular part of the
operator product expansion of chiral fields in two-dimensional
quantum field theory \cite{K1}. On the other hand, they are
closely connected to the notion of formal distribution Lie
superalgebra $(\fg , \F)$, that is a Lie superalgebra $\fg$
spanned by the coefficients of a family $\F$ of mutually local
formal distributions. Namely, to a Lie conformal superalgebra $R$
one can associate a formal distribution Lie superalgebra $($Lie
$R, R)$ which establishes an equivalence between the category of
Lie conformal superalgebras and the category of equivalence
classes of formal distribution Lie superalgebras obtained as
quotients of Lie $R$ by irregular ideals \cite{K1}.

Finite simple Lie conformal algebras were classified in \cite{DK}
and all their finite irreducible representations were constructed
in \cite{CK}. According to \cite{DK}, any finite simple Lie
conformal algebra is isomorphic either to the current Lie
conformal algebra Cur$\fg$, where $\fg$ is a simple
finite-dimensional Lie algebra, or to the Virasoro conformal
algebra.

However, the list of finite simple Lie conformal superalgebras is
much richer, mainly due to existence of several series of super
extensions of the Virasoro conformal algebra. The complete
classification of finite simple Lie conformal superalgebras was
obtained in \cite{FK}. The list consists of current Lie conformal
superalgebras Cur$\fg$, where $\fg$ is a simple finite-dimensional
Lie superalgebra, four series of ``Virasoro like'' Lie conformal
superalgebras $W_n$ ($n \geq 0$), $S_{n,b}$ and $\tilde S_n$ ($n
\geq 2, b \in \CC$), $K_n$ ($n \geq 0$), and the exceptional Lie
conformal superalgebra $CK_6$.

All finite irreducible representations of the simple Lie conformal
superalgebras Cur$\fg$, $K_0 = Vir$ and $K_1$ were constructed in
\cite{CK}, and those of $S_{2,0}$, $W_1=K_2$, $K_3$, $K_4$  in
\cite {CL}.

The main result of the present paper is the construction of all
finite irreducible modules over the Lie conformal superalgebras
$W_n$, $S_{n,b}$ and $\tilde S_n$. The proof relies on the method
developed  in \cite{CK}, that is, the observation that
representation theory of  Lie conformal superalgebras is
controlled by the representation theory of the (extended)
annihilation superalgebra, and a lemma from \cite{BDK}. In our
cases, this reduces to the study of certain representations of the
Lie superalgebra $W(1,n)_+$ of all vector fields on a superline
(an affine superspace of dimension $(1|n)$) and the Lie
superalgebra $S(1,n)_+$ of such vector fields with zero
divergence. As in \cite{KR1}, \cite{KR2}, we follow the approach
developed for representations of infinite-dimensional simple
linearly compact Lie algebras by A. Rudakov in \cite{R}. The
problem reduces to the description of the so called degenerate
modules, and for the later we have to study singular vectors.

The paper is organized as follows. In Section \ref{sec:formal}, we
recall the notions and some basic facts on of formal
distributions, Lie conformal superalgebras and their modules. In
Section \ref{sec:2}, we recall some simple facts of the
representation theory of infinite-dimensional simple linearly
compact Lie superalgebras. In Section \ref{sec:3}, we describe the
conformal Lie superalgebra $W_n$ and we classify its finite
irreducible conformal modules by studying the corresponding
singular vectors. In Section  \ref{sec:4}, we obtain similar
results for the Lie conformal superalgebra $S_n = S_{n,0}$.
Finally, in Section \ref{sec:5}, we complete the cases $S_{n,b}$
and $\tilde S_n$. In all cases (as in \cite{R}) the answer has a
geometric meaning: all finite irreducible modules are either
``non-degenerate'' tensor modules, or occur as cokernels of the
differential in the conformal de Rham complex, or are duals of the
latter.

Note that similar results for arbitrary non-super Lie
pseudoalgebras of types $W$ and $S$ have been obtained in
\cite{BDK1}.

The remaining cases of the Lie conformal superalgebras $K_n$ and
$CK_6$ will be worked out in the subsequent publication.

\section{Formal distributions, Lie conformal superalgebras and their modules}\label{sec:formal}


First we introduce the basic definitions and notations, see
\cite{K1, DK}. Let $\fg$ be a Lie superalgebra. A $\fg$-valued
{\it formal distribution} in one indeterminate $z$ is a series in
the indeterminate $z$
\begin{displaymath}
a(z)=\sum_{n\in \ZZ} a_n z^{-n-1}, \qquad a_n\in \fg.
\end{displaymath}
The vector superspace of all formal distributions, $\fg[[z,
z^{-1}]]$, has a natural structure of a $\CC[\p_z]$-module. We
define
\begin{displaymath}
\Res_z a(z) =a_0.
\end{displaymath}

Let $a(z), b(z)$ be two $\fg$-valued formal distributions. They
are called $local$ if
\begin{displaymath}
(z-w)^N [a(z), b(w)]=0  \qquad \for \quad N>>0.
\end{displaymath}

Let $\fg$ be a Lie superalgebra, a  family $\F$ of $\fg$-valued
formal distributions is called a {\it local family} if  all pairs
of formal distributions from $\F$ are local. Then, the pair $(\fg
, \F)$ is called a {\it formal distribution Lie superalgebra} if
$\F$ is a local family of $\fg$-valued  formal distributions and
$\fg$ is spanned by the coefficients of all formal distributions
in $\F$. We define the {\it formal $\delta$-function} by
\begin{displaymath}
\delta(z-w)=z^{-1} \sum_{n\in\ZZ} \left(\frac{w}{z}\right)^n.
\end{displaymath}
Then it is easy to show (\cite{K1}, Corollary 2.2)), that  two
local formal distributions are local if and only if  the bracket
can be represented as a finite sum of the form
\begin{displaymath}
[a(z), b(w)]=\sum_j [a(z)_{(j)} b(w)] \  \p_w^j \delta(z-w)/j!,
\end{displaymath}
where $[a(z)_{(j)} b(w)]=\Res_z (z-w)^j[a(z), b(w)]$. This is
called the {\it operator product expansion}. Then we obtain a
family of operations $_{(n)}$, $n\in \ZZ_+$, on the space of
formal distributions. By taking the generating series of these
operations, we define the $\la$-bracket:
\begin{displaymath}
[a_\la b]=\sum_{n\in\ZZ_+} \frac{\la^n}{n!} [a_{(n)} b].
\end{displaymath}
The properties of the $\la$-bracket motivate the following
definition:

\begin{definition} A {\it   Lie conformal superalgebra} $R$ is  a left
$\ZZ/2\ZZ$-graded $\cp$-module endowed with a $\CC$-linear map,
\begin{displaymath}
R\otimes R  \longrightarrow \CC[\la]\otimes R, \qquad a\otimes b
\mapsto a_\la b
\end{displaymath}
called the $\la$-bracket, and  satisfying the following axioms
$(a,\, b,\, c\in R)$,

\

\noindent Conformal sesquilinearity $ \qquad  [\pa a_\la b]=-\la
[a_\la b],\qquad [a_\la \pa b]=(\la+\pa) [a_\la b]$,

\vskip .3cm

\noindent Skew-symmetry $\ \qquad\qquad\qquad [a_\la
b]=-(-1)^{p(a)p(b)}[b_{-\la-\pa} \ a]$,

\vskip .3cm

\noindent Jacobi identity $\quad\qquad\qquad\qquad [a_\la [b_\mu
c]]=[[a_\la b]_{\la+\mu} c] + (-1)^{p(a)p(b)}[b_\mu [a_\la c]]. $

\vskip .5cm

Here and further $p(a)\in \ZZ/2\ZZ$ is the parity of $a$.
\end{definition}

A Lie conformal superalgebra is called $finite$ if it has finite
rank as a $\CC[\pa]$-module. The notions of homomorphism, ideal
and subalgebras of a Lie conformal superalgebra are defined in the
usual way. A Lie conformal superalgebra $R$ is $simple$ if $[R_\la
R]\neq 0$ and contains no ideals except for zero and itself.

Given a  formal distribution Lie superalgebra $(\fg , \F)$ denote
by $\bar\F$ the minimal subspace of $\fg[[z, z^{-1}]]$ which
contains $\F$ and is closed under all $j$-th products and
invariant under $\p_z$. Due to Dong's lemma, we know that $\bar\F$
is a local family as well. Then $\Conf(\fg, \F):=\bar\F$ is the
Lie conformal superalgebra associated to the formal distribution
Lie superalgebra $(\fg , \F)$.

In order to give  the (more or less) reverse functorial
construction, we need the notion of {\it affinization} $\tilde R$
of a conformal algebra $R$ (which is a generalization of that for
vertex algebras \cite{B}). We let $\tilde R= R[t, t^{-1}]$ with
$\tilde\p=\p +\p_t$ and the $\la$-bracket \cite{K1}:
\begin {equation} \label{eq:0}
[a f(t)_\la bg(t)]=[a_{\la+\p_t}b] f(t)g(t')|_{t'=t}.
\end{equation}
The 0-th product is:
\begin{equation}
\label{eq:1} [a t^n_{(0)} b t^m]=\sum_{j\in\ZZ_+}
\left(\begin{array}{c} m\\j\end{array}\right) [a_{j}b] t^{m+n-j}.
\end{equation}
Observe that $\tilde\p\tilde R$ is an ideal of $\tilde R$ with
respect to the 0-th product. We let $\Alg R=\tilde R /
\tilde\p\tilde R$ with the 0-th product and let
\begin{displaymath}
\R=\{\sum_{n\in \ZZ} (at^n) z^{-n-1}= a \delta(t-z)\ |\ a\in R \}.
\end{displaymath}
Then $(\Alg R,\R)$ is a formal distribution Lie superalgebra. Note
that Alg is a functor from the category of Lie conformal
superalgebras to the category of  formal distribution Lie
superalgebras. On has \cite {K1}:
\begin{displaymath}
{\Conf}({\Alg}R)=R, \ \ {\rm Alg}({\rm Conf}(\fg,\F)) = ({\rm
Alg}\bar\F,\bar\F).
\end{displaymath}
Note also that $(\Alg R,\R)$ is the {\it maximal formal
distribution superalgebra} associated to the conformal
superalgebra $R$, in the sense that all formal distribution Lie
superalgebras $(\fg, \F)$ with $\Conf(\fg , \F)=R$ are quotients
of $(\Alg R,\R)$ by irregular ideals (that is, an ideal $I$ in
$\fg$ with no non-zero $b(z)\in\R$ such that $b_n\in I$). Such
formal distribution Lie superalgebras are called $equivalent$.

We thus have an equivalence of categories of conformal Lie
superalgebras and equivalence classes of formal distribution Lie
superalgebras. So the study of formal distribution Lie
superalgebras reduces to the study of conformal Lie superalgebras.

\

An important tool for the study of Lie conformal superalgebras and
their modules is the (extended) annihilation algebra. The {\it
annihilation algebra} of a Lie conformal superalgebra $R$  is the
subalgebra $\A( R)$ (also denoted by $(\Alg R)_+$) of the Lie
superalgebra $\Alg R$ spanned by all elements $at^n$, where $a\in
R ,  n\in \ZZ_+$. It is clear from (\ref{eq:1}) that this is a
subalgebra, which is invariant with respect to the derivation
$\p=-\p_t$ of $\Alg R$. The {\it extended annihilation algebra} is
defined as
\begin{displaymath}
\A(R)^e=(\Alg R)^+:=\CC\p \ltimes (\Alg R)_+.
\end{displaymath}
Introducing the generating series
\begin{equation} \label{eq:la}
a_\la =\sum_{j\in\ZZ_+} \frac{\la^j}{j!} (a t^j),
\end{equation}
we obtain from (\ref{eq:1}):
\begin{equation} \label{eq:2}
[a_\la , b_\mu ] = [a_\la b]_{\la +\mu}, \quad \p(a_\la)= (\p
a_\la)=\la(a_\la).
\end{equation}

Now let $\fg$ be a Lie superalgebra, and let $V$ be a
$\fg$-module. Given a $\fg$-valued formal distribution $a(z)$ and
a $V$-valued formal distribution $v(z)$ we may consider the formal
distribution $a(z)v(w)$ and the pair $(a(z), v(z))$ is called
$local$ if $(z-w)^N (a(z)v(w))=0$ for $N>>0$. As before, we have
an expansion of the form:
\begin{displaymath}
a(z)v(w)=\sum_j \Big(a(z)_{(j)} v(w)\Big) \  \p_w^j
\delta(z-w)/j!,
\end{displaymath}
where $a(w)_{(j)} v(w)=\Res_z (z-w)^ja(z)v(w)$ and the sum is
finite.  By taking the generating series of these operations, we
define the  {\it $\la $-action of $\fg$ on $V$}:
\begin{displaymath}
a(w)_\la v(w)=\sum_{n\in\ZZ_+} \frac{\la^n}{n!} \Big(a(w)_{(n)}
v(w)\Big), \qquad \hbox{ (finite sum)} .
\end{displaymath}
It has the following properties:
\begin{displaymath}
\p_z a(z)_\la v(z) =-\la  a(z)_\la v(z), \qquad a(z)_\la \p_z v(z)
= (\p_z +\la ) (a(z)_\la v(z)),
\end{displaymath}
and
\begin{displaymath}
[a(z)_\la , b(z)_\mu] v(z)=[a(z)_\la b(z)]_{\la+\mu} v(z).
\end{displaymath}
This motivate the following definition:

\begin{definition} A {\it module} M over a Lie conformal superalgebra  $R$
is  a $\ZZ/2\ZZ$-graded $\cp$-module endowed with a $\CC$-linear
map  $R\otimes M \longrightarrow \CC[\la]\otimes M$, $a\otimes v
\mapsto a_\la v$, satisfying the following axioms $(a,\, b \in
R),\ v\in M$,

\vskip .2cm

\noindent$(M1)_\la \qquad   (\pa a)_\la^M v= [\pa^M,
a_\la^M]v=-\la a_\la^Mv   ,$

\vskip .2cm

\noindent $(M2)_\la\qquad [a_\la^M, b_\mu^M]v=[a_{ \la}
b]_{\la+\mu}^Mv$.

\vskip .2cm

An $R$-module $M$ is called {\it finite} if it is finitely
generated over $\cp$. An $R$-module $M$ is called {\it
irreducible} if it contains no non-trivial submodule, where the
notion of submodule is the usual one.
\end{definition}

\

As before, if $\F\subset \fg [[z, z^{-1}]]$ is a local family and
$\E\subset V [[z, z^{-1}]]$ is such that all pairs $(a(z), v(z))$,
where $a(z)\in \F$ and $v(z)\in \E$, are local, let $\bar\E$ be
the minimal subspace of $V[[z, z^{-1}]]$ which contains $E$ and
all $a(z)_{(j)}v(z)$ for $a(z)\in \F$ and $v(z)\in \E$, and is
$\p_z$-invariant. Then it is easy to show that all pairs $(a(z),
v(z))$, where $a(z)\in \bar\F$ and $v(z)\in \bar\E$, are local and
$a(z)_{(j)}(\bar\E)\subset \bar\E$ for all $a(z)\in \bar \F$.

Let $\F$ be a local family that spans $\fg$ and let $\E\subset V
[[z, z^{-1}]]$ be a family  that span $V$. Then $(V, \E)$ is
called a {\it formal distribution module} over the formal
distribution Lie superalgebra $(\fg, \F)$ if all pairs $(a(z),
v(z))$, where $a(z)\in \F$ and $v(z)\in \E$, are local. It follows
that a formal distribution module $(V, \E)$ over a formal
distribution Lie superalgebra $(\fg, \F)$ give rise to a module
$\Conf(V,\E):=\bar \E$ over the conformal Lie superalgebra
$\Conf(\fg, \F)$.

In the same way as above, we have an equivalence of categories of
modules over a Lie conformal superalgebra $R$ and equivalence
classes or formal distribution modules over the Lie superalgebra
$\Alg R$. Namely, given an $R$-module $M$, one defines its {\it
affinization} $\tilde M=M[t, t^{-1}]$ as a $\tilde R$-module with
$\tilde\p=\p +\p_t$ and the $\la$-action similar to (\ref{eq:0}):
\begin {equation} \label{eq:3}
a f(t)_\la vg(t)=(a_{\la+\p_t}v) f(t)g(t')|_{t'=t}.
\end{equation}
The 0-th action is:
\begin{equation}
\label{eq:4} a t^n_{(0)} v t^m=\sum_{j\in\ZZ_+}
\left(\begin{array}{c} m\\j\end{array}\right) (a_{j}v) t^{m+n-j}.
\end{equation}

Observe that $\tilde\p\tilde M$ is invariant  with respect to the
0-th action and $(\tilde\p\tilde R)_{(0)}\tilde M=0$, hence the
0-th action of $\tilde R$ on $\tilde M$ induces a representation
of the Lie superalgebra $\Alg R=\tilde R / \tilde\p\tilde R$  on
$V(M):=\tilde M / \tilde\p\tilde M$. Let $\M=\{ v\delta(z-t) |
v\in M\}$. Then $(V(M), \M)$ is a formal distribution module over
the formal distribution Lie superalgebra $(\Alg R, \R)$, which is
maximal in the sense that all conformal $(\Alg R, \R)$ modules
$(V,\E)$ such that $\bar\E\simeq M$ as $R$-modules are quotients
of $(V(M), \M)$ by irregular submodules. Such formal distribution
modules are called equivalent, and we get an equivalence of
categories of $R$-modules and equivalence classes of formal
distribution $(\Alg R,\R)$-modules.

Formula (\ref{eq:2}) implies the following important proposition
relating modules over a Lie conformal superalgebra $R$ to
continuous modules over the corresponding extended annihilation
Lie superalgebra $(\Alg R)^+$.

\begin{proposition}
\label{prop:1} \cite{CK} A module over a Lie conformal
superalgebra $R$ is the same as a continuous module over the Lie
superalgebra $(\Alg R)^+$, i.e. it is an $(\Alg R)^+$-module
satisfying the property
\begin{equation} \label{eq:5}
a_\la m\in \CC[\la]\otimes M  \hbox{ \  for any } a\in R, m\in M.
\end{equation}
(One just views the action of the generating series $a_\la$ of
$(\Alg R)^+$ as the $\la$-action of $a\in R$).
\end{proposition}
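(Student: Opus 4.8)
The statement claims an equivalence: an $R$-module structure is ``the same'' as a continuous $(\Alg R)^+$-module structure, where continuity is the finiteness condition (\ref{eq:5}). So the plan is to produce two mutually inverse constructions and verify that the respective axiom systems correspond. Concretely, I would show that the data of the $\la$-action $a \otimes v \mapsto a_\la v$ of $R$ on $M$ is interchangeable with the data of an action of the generators $a_\la$ (the generating series from (\ref{eq:la})) of $(\Alg R)^+$ on $M$, and that the module axioms $(M1)_\la$, $(M2)_\la$ translate precisely into the requirements of being a module over the Lie superalgebra $(\Alg R)^+$.

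\medskip

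\emph{First direction.} Suppose $M$ is an $R$-module. Given $a \in R$, I would interpret the polynomial-valued operator $v \mapsto a_\la v = \sum_{n} \frac{\la^n}{n!} (a_{(n)} v)$ as encoding the action of each element $a t^n \in (\Alg R)_+$ via $(at^n) \cdot v := a_{(n)} v$, and the action of the extra generator $\p = -\p_t$ as the $\cp$-action of $\p$ on $M$. The condition (\ref{eq:5}) is then automatic, since $a_\la v$ is by definition a polynomial in $\la$, i.e. only finitely many $a_{(n)} v$ are nonzero. I then have to check that this assignment respects the bracket of $(\Alg R)^+$. The key computation is that the bracket relation (\ref{eq:2}) for the generating series, namely $[a_\la, b_\mu] = [a_\la b]_{\la+\mu}$ together with $\p(a_\la) = \la(a_\la)$, holds as operators on $M$. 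For the second relation this is exactly axiom $(M1)_\la$, which says $[\p^M, a_\la^M] v = -\la\, a_\la^M v$; for the bracket relation this is exactly axiom $(M2)_\la$. Extracting coefficients of $\la^m \mu^n$ from these generating-series identities recovers the action of each $[at^j, bt^k]$ and of $[\p, at^j]$ on $M$, so $M$ becomes a genuine $(\Alg R)^+$-module.

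\medskip

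\emph{Second direction.} Conversely, suppose $M$ is a continuous $(\Alg R)^+$-module. Define the $\la$-action by bundling the operators $at^n$ into the generating series: set $a_\la v := \sum_{n \in \ZZ_+} \frac{\la^n}{n!} (at^n) v$. Continuity (\ref{eq:5}) guarantees this lies in $\CC[\la] \otimes M$, so it is a well-defined $\la$-action. The $\cp$-module structure on $M$ comes from letting $\p$ act as the generator $\p \in \CC\p$. Then axioms $(M1)_\la$ and $(M2)_\la$ are read off from (\ref{eq:2}) in the reverse direction, using that the action is by a Lie-superalgebra homomorphism. These two constructions are manifestly inverse to one another, since both amount to the single bookkeeping device (\ref{eq:la}) that repackages the sequence of operators $\{at^n\}_{n \geq 0}$ as one $\CC[\la]$-valued operator.

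\medskip

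The essentially trivial part is the bijection of data; the only point requiring care—and the main obstacle—is the precise matching of axioms under the generating-series formalism, especially making sure that the two conditions in (\ref{eq:2}) are \emph{equivalent} to, and not merely consequences of, the pair $(M1)_\la$, $(M2)_\la$, and that the continuity condition is exactly what is needed for the second construction to land in $\CC[\la] \otimes M$ rather than in a completion. Since (\ref{eq:2}) was already derived from (\ref{eq:1}) and the module axioms are the module-level shadow of (\ref{eq:1}) via (\ref{eq:4}), the verification is a direct coefficient comparison with no serious analytic input.
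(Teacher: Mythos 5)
Your proposal is correct and follows exactly the route the paper (and the cited reference [CK]) takes: the proposition is justified there by the single observation that the generating-series relations (\ref{eq:2}) for $(\Alg R)^+$ are term-for-term the module axioms $(M1)_\la$, $(M2)_\la$, with continuity (\ref{eq:5}) being precisely the polynomiality of the $\la$-action. You have merely written out in full the coefficient-matching that the paper leaves implicit.
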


Denote by $V(M)_+$ the span of elements $\{vt^n|v\in M, n\in
\ZZ_+\}$ in $V(M)$. It is clear from (\ref{eq:3}) that $V(M)^+$ is
an $(\Alg R)^+$ submodule, hence an $R$ -module by Proposition
~{\ref{prop:1}}. We denote by $V(M)^*_+$ the restricted dual of
$V(M)_+$, i.e. the space of all linear functions on $V(M)_+$ which
vanish on all but finite number of subspaces $Mt^n$, with
$n\in\ZZ_+$. This is an $(\Alg R)^+$-module and hence an
$R$-module as well. The {\it conformal dual} $M^*$ to an
$R$-module $M$ is defined as
\begin{displaymath}
M^*=\{ f_\la :M\to \CC[\la]\ |\ f_\la(\p m)=\la f_\la(m)\},
\end{displaymath}
with the structure of $\CC[\p]$-module $(\p f)_\la(m)=-\la
f_\la(m)$, with the following $\la$-action of $R$:
\begin{displaymath}
(a_\la f)_\mu(m)=-(-1)^{p(a)(p(f)+1)} f_{\mu-\la}(a_\la m), \quad
a\in R, m\in M.
\end{displaymath}

Given a homomorphism of conformal $R$-modules $T:M\to N$, we
define the transpose homomorphism $T^*:N^*\to M^*$ by
\begin{displaymath}
[T^*(f)]_\lambda (m)=-f_\la(T(m))
\end{displaymath}
\begin{proposition}\label{prop:222} Let $T: M\to N$ be an
injective homomorphism of $R$-modules such that $N/\hbox{Im }T$ is
finitely generated  torsion-free $\CC[\p]$-module. Then $T^*$ is
surjective.
\end{proposition}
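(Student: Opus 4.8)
The plan is to unwind the definition of the transpose and reduce the surjectivity of $T^*$ to an extension problem for conformal functionals. Fix $g\in M^*$; we seek $f\in N^*$ with $T^*(f)=g$, that is, $-f_\la(T(m))=g_\la(m)$ for all $m\in M$. Since $T$ is injective it is an isomorphism of $R$-modules (hence of $\cp$-modules) onto $\hbox{Im }T$, so the prescription $h_\la(T(m)):=-g_\la(m)$ defines a well-defined sesquilinear functional $h$ on $\hbox{Im }T$: indeed $h_\la(\p\,T(m))=h_\la(T(\p m))=-g_\la(\p m)=-\la\, g_\la(m)=\la\, h_\la(T(m))$. The task is thus to extend $h$ from the submodule $\hbox{Im }T$ to all of $N$ while preserving the sesquilinearity condition $f_\la(\p n)=\la f_\la(n)$.

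The key structural input is that $\cp=\CC[\p]$ is a principal ideal domain. By hypothesis $N/\hbox{Im }T$ is a finitely generated torsion-free $\cp$-module, hence free. Therefore the short exact sequence of $\cp$-modules

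\begin{displaymath}
0\longrightarrow \hbox{Im }T \longrightarrow N \longrightarrow N/\hbox{Im }T\longrightarrow 0
\end{displaymath}

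splits, and I would write $N=\hbox{Im }T\oplus F$ as $\cp$-modules, where $F$ is a free complement. Since this is a splitting of $\cp$-modules, $\p$ acts diagonally with respect to it, and this compatibility of $\p$ with the decomposition is precisely what makes the extension respect sesquilinearity.

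I would then define $f_\la$ on $N$ by setting $f_\la:=h_\la$ on $\hbox{Im }T$ and $f_\la:=0$ on $F$, extended $\CC$-linearly. Writing $n=n'+n''$ with $n'\in\hbox{Im }T$ and $n''\in F$, one has $\p n=\p n'+\p n''$ with $\p n'\in\hbox{Im }T$ and $\p n''\in F$, so $f_\la(\p n)=h_\la(\p n')=\la\, h_\la(n')=\la\, f_\la(n)$; thus $f$ is sesquilinear and lies in $N^*$. By construction $-f_\la(T(m))=-h_\la(T(m))=g_\la(m)$, i.e. $T^*(f)=g$, which gives surjectivity. No compatibility with the $R$-action needs to be checked here, since $T^*$ is already a homomorphism of $R$-modules by the definition of the transpose.

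The only real obstacle is the structural step: recognizing that membership in the conformal dual constrains only the $\cp$-module structure (via sesquilinearity) and not the $R$-action, so that surjectivity of $T^*$ reduces to a purely $\cp$-linear extension problem. The hypothesis that $N/\hbox{Im }T$ be finitely generated and torsion-free is exactly what guarantees, through the structure theorem over the PID $\cp$, that the defining sequence splits as $\cp$-modules; once such a splitting is in hand the extension is immediate.
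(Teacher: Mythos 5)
Your proposal is correct and follows essentially the same route as the paper's proof: both use that a finitely generated torsion-free $\CC[\p]$-module is free (hence projective) to split $N=\hbox{Im }T\oplus F$ as $\CC[\p]$-modules, and then extend the functional by zero on the complement. Your version merely adds the explicit sign bookkeeping and the verification of sesquilinearity, which the paper leaves implicit.
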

\begin{proof} Since $N/\hbox{Im }T$ is
finitely generated  torsion-free, then it is free and therefore a
proyective $\CC[\p]$-module. Hence, the short exact sequence $0\to
\hbox{Im }T \to N\to N/\hbox{Im }T \to 0$ is split and $N
=\hbox{Im }T\oplus L$ as   $\CC[\p]$-module. Now, given $\alpha\in
M^*$, we define $\beta\in N^*$ as follows
$$
\beta_\la (T(m)) =\alpha_\la (m), \quad m\in M,\qquad
\beta_\la(l)=0, \quad \ l\in L.
$$
Then $\beta$ is well-defined since $T$ is injective and $\beta$
belong to $N^*$ since $L$ is a complementary $\CC[\p]$-submodule,
finishing the proof.
\end{proof}
\begin{remark}\label{rm:22} Observe that the injectivity is not
enough (cf. Remark~\ref{rm:12}). Namely, let $R=Vir=\CC[\p]L$ the
Virasoro conformal algebra with $\la$-bracket $[L_\la L]= (2\la
+\p)L$. Consider the following $Vir$-modules:
$$
\O_0=\CC[\p]m, \ \hbox{ with } L_\la m = (\la+\p)m;\qquad
\O_1=\CC[\p]n, \ \hbox{ with } L_\la n = \p n.
$$
Then it is easy to see that the map $d: \O_0 \to \O_1$ given by
$d(m)=\p n$ is an injective homomorphism of $R$-modules, but the
dual map $d^*: \O_1^* \to \O_0^*$ given by $d^*(m^*)=\p n^*$ is
not surjective.
\end{remark}
\begin{proposition}\label{prop:22} Let $T: M\to N$ be a
 homomorphism of $R$-modules such that $N/\hbox{Im }T$ is
finitely generated  torsion-free $\CC[\p]$-module. Then the
standard map $\psi:N^*/\hbox{Ker }T^* \to (M/\hbox{Ker }T)^*$,
given by $[\psi(\bar f)]_\la(\bar m)=f_\la(T(m))$ (where by the
bar we denote the corresponding class in the quotient) is an
isomorphism of $R$-modules.
\end{proposition}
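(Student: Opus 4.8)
The plan is to show that $\psi$ is a well-defined isomorphism of $R$-modules by exhibiting its inverse, or equivalently by checking it is well-defined, $R$-linear, injective, and surjective. First I would verify that the formula $[\psi(\bar f)]_\la(\bar m)=f_\la(T(m))$ is independent of the choices of representatives. For the argument $\bar m$: if $m\in\hbox{Ker }T$ then $T(m)=0$, so $f_\la(T(m))=0$ regardless of $f$, which shows the value depends only on the class $\bar m\in M/\hbox{Ker }T$. For the functional $\bar f\in N^*/\hbox{Ker }T^*$: if $f\in\hbox{Ker }T^*$ then by definition of the transpose $[T^*(f)]_\la(m)=-f_\la(T(m))=0$ for all $m\in M$, so again $f_\la(T(m))=0$, showing independence of the representative $f$. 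Thus $\psi$ is well-defined.

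Next I would check that $\psi$ respects the $\CC[\p]$-module structure and the $\la$-action of $R$, using the explicit formulas for the conformal dual. Both the action $(\p f)_\la(m)=-\la f_\la(m)$ and the $R$-action $(a_\la f)_\mu(m)=-(-1)^{p(a)(p(f)+1)}f_{\mu-\la}(a_\la m)$ are defined pointwise through evaluation, and since $T$ is an $R$-module homomorphism it intertwines the actions, so transporting along $T$ as in the definition of $\psi$ preserves these structures. This is a routine unwinding of definitions that I would not belabor.

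The substantive content is injectivity and surjectivity, and here the hypothesis on $N/\hbox{Im }T$ together with Proposition~\ref{prop:222} does the work. For injectivity: suppose $[\psi(\bar f)]_\la(\bar m)=0$ for all $m$, i.e. $f_\la(T(m))=0$ for all $m\in M$; this says exactly $[T^*(f)]_\la(m)=0$ for all $m$, so $f\in\hbox{Ker }T^*$ and $\bar f=0$. For surjectivity I would reduce to the setting of Proposition~\ref{prop:222}. The map $T$ factors as $M\twoheadrightarrow M/\hbox{Ker }T\xrightarrow{\bar T} N$, where $\bar T$ is \emph{injective} and $N/\hbox{Im }\bar T=N/\hbox{Im }T$ is finitely generated torsion-free by hypothesis. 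Hence Proposition~\ref{prop:222} applies to $\bar T$ and shows $\bar T^*:N^*\to(M/\hbox{Ker }T)^*$ is surjective. Unwinding the definition of the transpose, $\bar T^*$ is precisely the composition $N^*\to N^*/\hbox{Ker }T^*\xrightarrow{\psi}(M/\hbox{Ker }T)^*$, so surjectivity of $\bar T^*$ forces surjectivity of $\psi$.

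The main obstacle I anticipate is keeping the bookkeeping of the two quotients straight and confirming that $\hbox{Ker }\bar T^*$ coincides with the image of $\hbox{Ker }T^*$ under the natural identification $N^*=N^*$, so that $\psi$ and $\bar T^*$ really have the same image and the reduction to Proposition~\ref{prop:222} is clean. Once the factorization $T=\bar T\circ(\hbox{projection})$ is set up and one observes that applying $(-)^*$ turns the surjection $M\to M/\hbox{Ker }T$ into an injection on duals while $\bar T$ stays the injective map feeding Proposition~\ref{prop:222}, the surjectivity of $\psi$ follows formally; the injectivity is immediate from the definition of $T^*$. The torsion-free/freeness hypothesis is essential precisely at the invocation of Proposition~\ref{prop:222}, as Remark~\ref{rm:22} warns.
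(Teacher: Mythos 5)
Your proof is correct and takes essentially the same route as the paper, whose entire proof is the sentence ``Using Proposition~\ref{prop:222} the proof follows by standard arguments'' --- the well-definedness checks, the immediate injectivity, and the factorization $T=\bar T\circ\pi$ through $M/\hbox{Ker }T$ feeding the injective map $\bar T$ into Proposition~\ref{prop:222} are exactly the standard arguments being left to the reader. One harmless quibble: with the paper's sign convention $[T^*(f)]_\la(m)=-f_\la(T(m))$, the composition $N^*\to N^*/\hbox{Ker }T^*\xrightarrow{\ \psi\ }(M/\hbox{Ker }T)^*$ equals $-\bar T^*$ rather than $\bar T^*$, which of course affects nothing about surjectivity.
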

\begin{proof} Using Proposition~\ref{prop:222} the proof follows by standard arguments.
\end{proof}
\begin{proposition}
\label{prop:dd} If $M$ is an $R$-module finitely generated (over
$\cp$), then $M^{**}\simeq M$.
\end{proposition}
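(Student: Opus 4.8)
The plan is to construct an explicit natural map $M \to M^{**}$ and show it is an isomorphism of $R$-modules when $M$ is finitely generated over $\cp$. Unwinding the definition of the conformal dual twice, a double-dual element is a map $F$ assigning to each $f \in M^*$ a polynomial $F_\mu(f) \in \CC[\mu]$, subject to the sesquilinearity condition inherited from the definition of $(-)^*$. For $m \in M$ I would define the candidate image $\Phi(m) \in M^{**}$ by the canonical pairing $[\Phi(m)]_\mu(f) = (-1)^{?}\, f_\mu(m)$, choosing the sign prescribed by the $\la$-action formula $(a_\la f)_\mu(m)=-(-1)^{p(a)(p(f)+1)} f_{\mu-\la}(a_\la m)$ so that evaluation-at-$m$ really lands in $M^{**}$ and intertwines the $R$-action. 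First I would check that $\Phi$ is a well-defined $\CC[\p]$-module homomorphism: the condition $f_\mu(\p m) = \mu f_\mu(m)$ defining $M^*$ must match the sesquilinearity required of elements of $M^{**}$, and the $\CC[\p]$-structure $(\p f)_\mu = -\mu f_\mu$ must be compatible so that $\Phi(\p m) = \p \Phi(m)$.

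The second step is to verify that $\Phi$ intertwines the $\la$-actions, i.e. $\Phi(a_\la m) = a_\la \Phi(m)$ as elements of $\CC[\la]\otimes M^{**}$. This is a direct computation: I would apply the double-dual action formula, substitute the definition of $\Phi$, and use the single-dual action formula once more, tracking the sign factors $(-1)^{p(a)(p(f)+1)}$ carefully through the two dualizations so that the two parity corrections combine to the identity. The Koszul signs are the only subtlety here, but they are forced by consistency, so I expect this to reduce to bookkeeping once the correct sign in the definition of $\Phi$ is fixed.

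The genuine content is proving $\Phi$ is bijective, and this is where the finite-generation hypothesis enters. I would use that $M$ is finitely generated over the principal ideal domain $\cp = \CC[\p]$, so $M = M_{\mathrm{tors}} \oplus \CC[\p]^{\oplus r}$ as a $\CC[\p]$-module. The map $V(M)^*_+$ appearing earlier (the restricted dual, vanishing on all but finitely many $Mt^n$) is precisely the structure that makes $M^*$ behave like a finite-rank object whose own restricted dual recovers $M$, rather than an unmanageable full algebraic dual. The key computation is to show that $\dim$ of the relevant graded pieces matches, or more structurally that the canonical pairing between $M$ and $M^*$ is perfect at the level of these restricted duals; finite generation guarantees this perfectness by reducing to the free case (where $\CC[\la]$-valued functionals pair perfectly with $\CC[\p]$-bases) plus a torsion case handled separately.

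The hard part will be the torsion summand: for a free module the biduality is a routine rank-count, but a torsion $\CC[\p]$-module like $\CC[\p]/(\p-c)$ needs the conformal dual to be computed honestly, and one must check that the restricted-dual convention built into $M^*$ (vanishing on cofinitely many $Mt^n$) is exactly what makes $M^{**}$ collapse back to $M$ rather than overshooting. I would isolate this by proving biduality for each indecomposable $\CC[\p]$-summand and then invoke additivity of $(-)^*$ and $(-)^{**}$ over finite direct sums, together with the already-established $R$-equivariance of $\Phi$, to assemble the global isomorphism $M^{**} \simeq M$.
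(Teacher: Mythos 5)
Your first two steps (the evaluation map $\Phi\colon M\to M^{**}$ and the verification that it intertwines the $\la$-actions) are sound and are exactly what the paper does, just phrased without a basis: the paper fixes generators $m_i$ with $a_\la m_j=\sum_k P_{jk}(\la,\p)m_k$, computes $(a_\la m_i^*)=-\sum_j P_{ji}(\la,-\p-\la)\,m_j^*$, dualizes once more to get $(a_\la m_i^{**})=\sum_j P_{ij}(\la,\p)\,m_j^{**}$, and concludes that $m_i\mapsto m_i^{**}$ is the isomorphism. So the sign bookkeeping you defer really does work out.

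The genuine gap is in your third and fourth steps, and it is not a gap you can close: the torsion summand does not satisfy biduality, it gets killed. If $p(\p)m=0$ with $p\neq 0$, then for any $f\in M^*$ the defining relation $f_\la(\p m)=\la f_\la(m)$ gives $0=f_\la(p(\p)m)=p(\la)f_\la(m)$ in $\CC[\la]$, and since $\CC[\la]$ is a domain, $f_\la(m)=0$. Hence $(M_{\mathrm{tors}})^*=0$ and a fortiori $(M_{\mathrm{tors}})^{**}=0\neq M_{\mathrm{tors}}$; your proposed reduction to indecomposable $\CC[\p]$-summands would therefore disprove the statement for modules with torsion rather than prove it. There is also no ``restricted-dual convention'' hidden in $M^*$ that rescues this --- the identification of $M^*$ with $V(M)_+^*$ in Proposition~\ref{prop:2} does not change the computation above. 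The correct resolution is to read the hypothesis as ``finitely generated and free'' (equivalently torsion-free) over $\cp$: that is what the paper's proof tacitly assumes when it writes $M=\oplus\,\cp m_i$ with a dual basis $(m_i^*)_\la(m_k)=\delta_{i,k}$, and it is all that is needed where the proposition is invoked (Remark~\ref{rm:12}(d), where the modules are free of finite rank). In that case your free-part argument --- a rank count plus the perfectness of the pairing $f_\la(q(\p)m_i)=q(\la)f_\la(m_i)$ --- finishes the proof; the torsion discussion should be deleted, not ``handled separately.''
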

\begin{proof} Let $M=\oplus \ \cp m_i$ (finite sum), with $a_\la
m_j=\sum_k P_{jk}(\la, \p)m_k$. Then $M^*=\oplus \  \cp m_i^*$,
with $(m_i^*)_\la (m_k)=\delta_{i,k}$ and
$$
(a_\la m_i^*)_\mu(m_j)= - (m_i^*)_{\mu-\la}(a_\la m_j)= -\sum_k
(m_i^*)_{\mu-\la} (P_{jk}(\la, \p)m_k)=P_{ji}(\la, \mu-\la).
$$
Therefore,
$$
(a_\la m_i^*)=  -\sum_j  P_{ji}(\la, -\p-\la)  m_j^*,
$$
and the last formula shows that by taking the dual again we obtain
$$
(a_\la m_i^{**})=  \sum_j  P_{ij}(\la, \p)  m_j^{**}.
$$
Hence the map $m_i\mapsto m_i^{**}$ gives us the isomorphism
between $M$ and $M^{**}$.
\end{proof}

\begin{proposition}
\label{prop:2} (a) The map $M\to V(M)/V(M)_+$ given by $v\mapsto
vt^{-1}$ mod $V(M)_+$ is an isomorphism of $(\Alg R)^+$- (and
$R$-)modules.

\noindent (b)The map $V(M)^*_+\to M^*$ defined by $f\mapsto
f_\la$, where
\begin{displaymath}
f_\la(m)=\sum_{j\in\ZZ_+} \frac{(-\la)^j}{j!} f(mt^j),
\end{displaymath}
is an isomorphism of $(\Alg R)^+$- (and $R$-)modules.
\end{proposition}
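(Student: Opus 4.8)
The plan is to establish both isomorphisms by writing down explicit inverse maps and checking compatibility with the $(\Alg R)^+$-action, since $R$-module structure is the same as continuous $(\Alg R)^+$-module structure by Proposition~\ref{prop:1}. For part (a), recall that $V(M) = \tilde M/\tilde\p\tilde M$ where $\tilde M = M[t,t^{-1}]$, and $V(M)_+$ is the image of $M[t]$. First I would show the map $v \mapsto vt^{-1} \bmod V(M)_+$ is well-defined and linear, then produce its inverse: given a class in $V(M)/V(M)_+$, use the relation $\tilde\p(vt^{-1}) = (\p v)t^{-1} + vt^{-2} \cdot(-1)\cdots$ — more precisely, since $\tilde\p = \p + \p_t$ acts as zero in $V(M)$, every negative power $vt^{-k}$ for $k \geq 2$ can be rewritten modulo $V(M)_+$ in terms of $vt^{-1}$ and terms of the form $(\p v)t^{-1}$, showing $vt^{-1}$ already spans $V(M)/V(M)_+$ over $\CC[\p]$. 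Surjectivity and injectivity then reduce to a direct calculation with the relation $\tilde\p\tilde M = 0$ in $V(M)$.

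For part (b), I would verify that the proposed map $f \mapsto f_\la$ lands in $M^*$, is bijective, and intertwines the two $(\Alg R)^+$-actions. That $f_\la \in M^*$ amounts to checking $f_\la(\p m) = \la f_\la(m)$: using the definition $f_\la(m) = \sum_j \frac{(-\la)^j}{j!} f(mt^j)$ and the identity $(\p m)t^j = \tilde\p(mt^j) - mt^{j+1} \cdot(\hbox{shift})$, the condition $f \in V(M)^*_+$ (restricted dual, vanishing on all but finitely many $Mt^n$) makes the sum finite and the telescoping yields exactly $\la f_\la(m)$. Bijectivity follows because the transform $f \mapsto f_\la$ is triangular in the grading by powers of $t$: the coefficients $f(mt^j)$ are recovered from the derivatives of $f_\la$ at $\la = 0$, so the map inverts explicitly. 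The $\CC[\p]$-module structures match by comparing $(\p f)_\la(m) = -\la f_\la(m)$ with the definition of the conformal dual.

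The main compatibility to verify is that $f \mapsto f_\la$ intertwines the $R$-action, i.e.\ that $(a_\la f)_\mu$ computed on the $V(M)^*_+$ side equals the conformal-dual formula $-(-1)^{p(a)(p(f)+1)} f_{\mu-\la}(a_\la m)$. This is where the generating-series identity~(\ref{eq:la}) and the $0$-th action formula~(\ref{eq:4}) enter: one expands $a_\la = \sum_j \frac{\la^j}{j!}(at^j)$ acting on $V(M)_+$, dualizes, and matches powers of $\la$ and $\mu$.

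The hard part will be this bookkeeping of the two independent spectral parameters $\la$ and $\mu$ simultaneously with the sign conventions for the super-grading; the binomial coefficients from~(\ref{eq:4}) must reassemble precisely into the shift $f_{\mu-\la}$, and keeping the Koszul signs $(-1)^{p(a)(p(f)+1)}$ consistent across the dualization is the delicate point. Once the action is matched on the explicit generators, the rest is routine, and both maps are manifestly $\CC[\p]$-linear and injective/surjective by the triangularity noted above, so they are isomorphisms of $R$-modules.
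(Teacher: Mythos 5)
Your proposal is correct and takes essentially the same route as the paper, which disposes of this statement with the single line ``A direct verification'': the identities you invoke --- the relation $(\p v)t^{n}=-n\,vt^{n-1}$ forced by $\tilde\p\tilde M=0$, the recovery of $f(mt^j)$ from $\tfrac{d^j}{d\la^j}f_\la(m)|_{\la=0}$, and the matching of the $0$-th action (\ref{eq:4}) against the conformal dual formula --- are exactly what that verification consists of. (One small slip: the identity you quote should read $(\p m)t^j=\tilde\p(mt^j)-j\,mt^{j-1}=-j\,mt^{j-1}$ in $V(M)_+$, not an expression involving $mt^{j+1}$; your telescoping conclusion $f_\la(\p m)=\la f_\la(m)$ is nonetheless the right one.)
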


\begin{proof}
A direct verification.
\end{proof}

\

Assuming that $R$ is finite, choose a finite set of generators of
this $\CC[\p]$-module: $\{a^i | i\in I\}$, and for each $m\in
\ZZ_+$, denote by $(\Alg R)^+_{(m)}$ the $\CC$-span of all
elements $a^i t^j, i\in I, j\geq m$ of $(\Alg R)^+$. This defines
a descending filtration of $(\Alg R)^+$ by subspaces of finite
codimension:
\begin{equation} \label{eq:6}
(\Alg R)^+\supset(\Alg R)_+=(\Alg R)_{(0)}\supset(\Alg
R)_{(1)}\supset\dots.
\end{equation}
It is easy to see from (\ref{eq:1}) that there exists $s\in \ZZ_+$
such that for all $k,r\in\ZZ_+$ one has:
\begin{equation} \label{eq:7}
[(\Alg R)_{(k)} , (\Alg R)_{(r)}]\subset (\Alg R)_{(k+r-s)} .
\end{equation}
In particular, $(\Alg R)_{r}:=(\Alg R)_{(r+s)}$ is a filtration of
$(\Alg R)^+$ by subalgebras of finite codimension.

Given an $R$-module $M$, it is an $(\Alg R)^+$-module (by
Proposition~{\ref{prop:1}}), and we let for $j\in \ZZ_+$:
\begin{equation} \label{eq:8}
M_{(j)}=\{v\in M | (\Alg R)_{(j)} v =0\}.
\end{equation}
The subspaces $M_{(j)}$ form an ascending filtration of $M$ by
$(\Alg R)_0$-invariant subspaces. The following proposition is a
special case of Lemma 14.4 from \cite{BDK}.

\begin{proposition}
\label{prop:3} let $R$ be a finite Lie conformal superalgebra and
let $M$ be a finite $R$-module such that
\begin{displaymath}
M^R=\{ m\in M | R_\la m=0\} (=M_{(0)})
\end{displaymath}
is finite dimensional (over $\CC$). Then all subspaces $M_{(j)}$
are finite-dimensional. In particular $M$ is locally finite as an
$(\Alg R)_0$-module, meaning that any $m\in M$ is contained in a
finite-dimensional $(\Alg R)_0$-invariant subspace.
\end{proposition}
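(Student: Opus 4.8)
The plan is to argue by induction on $j$, using the single scalar operator $\partial$ together with the finiteness of $M$ over $\cp$ to control how the filtration jumps. The base case $j=0$ is exactly the hypothesis that $M^R=M_{(0)}$ is finite-dimensional. For the inductive step I would assume $\dim M_{(j)}<\infty$ and deduce $\dim M_{(j+1)}<\infty$. Two preliminary observations drive everything. First, since $M$ is finitely generated over $\cp$, say by $m_1,\dots,m_n$, the quotient $M/\partial M$ is spanned by their images and is therefore finite-dimensional. Second, Proposition~\ref{prop:1} (continuity, i.e.\ \eqref{eq:5}) applied to the finitely many generators $a^i$ of $R$ shows that for every $w\in M$ one has $(a^i t^k)w=0$ for all $i$ and all $k$ large enough; hence $M=\bigcup_j M_{(j)}$ and each element is annihilated by $(\Alg R)_{(k)}$ for $k\gg 0$.

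The heart of the argument is the identity $M_{(j+1)}\cap\partial M=\partial M_{(j)}$. To prove it I would fix a generator $a=a^i$ and an element $w\in M$ with $\partial w\in M_{(j+1)}$, and compute, using that \eqref{eq:la}--\eqref{eq:2} give $[\partial,a t^k]=k\,a t^{k-1}$ and that $p(\partial)=0$, that $(a t^k)(\partial w)=\partial\big((a t^k)w\big)-k\,(a t^{k-1})w$. Writing $u_m:=(a t^m)w$, the condition $\partial w\in M_{(j+1)}$ forces $\partial u_k-k\,u_{k-1}=0$, that is $u_m=\tfrac{1}{m+1}\partial u_{m+1}$ for all $m\geq j$. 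By the second observation $u_m=0$ for $m\gg 0$, so a downward induction forces $u_m=0$ for every $m\geq j$; running over all generators $a^i$ yields $(\Alg R)_{(j)}w=0$, i.e.\ $w\in M_{(j)}$. Combined with the easy inclusion $\partial M_{(j)}\subseteq M_{(j+1)}$ (the same computation, with $(a t^{k-1})v=0$ and $(a t^{k})v=0$ for $v\in M_{(j)}$ and $k\ge j+1$), this gives $\partial^{-1}M_{(j+1)}=M_{(j)}$ and hence the displayed identity.

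With this identity in hand the dimension count is routine. The composite $M_{(j+1)}\hookrightarrow M\twoheadrightarrow M/\partial M$ has kernel $M_{(j+1)}\cap\partial M=\partial M_{(j)}$, so it induces an injection $M_{(j+1)}/\partial M_{(j)}\hookrightarrow M/\partial M$, whence $\dim\big(M_{(j+1)}/\partial M_{(j)}\big)\leq\dim(M/\partial M)<\infty$. Since $\partial M_{(j)}$ is a quotient of the finite-dimensional space $M_{(j)}$, it is finite-dimensional, and therefore so is $M_{(j+1)}$, completing the induction. Local finiteness then follows immediately: any $m\in M$ lies in some $M_{(j)}$, which is finite-dimensional and $(\Alg R)_0$-invariant (as recorded just before the statement), so $m$ is contained in a finite-dimensional $(\Alg R)_0$-invariant subspace.

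I expect the main obstacle to be the recursion identity $M_{(j+1)}\cap\partial M=\partial M_{(j)}$: everything hinges on converting the module condition $\partial w\in M_{(j+1)}$ into a first-order recursion in the tail coefficients $u_m=(a t^m)w$ and then eliminating the tail using the finiteness coming from \eqref{eq:5}. One must also verify the normalization of $[\partial,a t^k]$ from \eqref{eq:2}; I note that the precise sign is immaterial, since any relation of the form $u_m=c\,(m+1)^{-1}\partial u_{m+1}$ with $c\neq 0$ produces the same vanishing under the downward induction.
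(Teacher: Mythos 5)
Your proof is correct, but note that the paper itself gives no argument here: it simply declares the statement to be a special case of Lemma 14.4 of \cite{BDK}. Your proposal therefore supplies a genuine self-contained proof rather than reproducing one from the text, and it is essentially a reconstruction of the idea behind the cited lemma. The two pillars both hold up: continuity (\ref{eq:5}) does give $(a^it^k)w=0$ for $k\gg 0$ for each fixed $w$, and the identity $M_{(j+1)}\cap\partial M=\partial M_{(j)}$ follows exactly as you say from the recursion $u_{k-1}=c_k\,\partial u_k$ killed by downward induction on the vanishing tail, together with the easy inclusion $\partial M_{(j)}\subseteq M_{(j+1)}$. The only point worth pinning down is the commutator: from $(M1)_\la$ one gets $[\partial,at^k]=-k\,at^{k-1}$ rather than $+k\,at^{k-1}$, but as you correctly observe the sign is irrelevant to the downward induction. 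The concluding dimension count via the injection $M_{(j+1)}/\partial M_{(j)}\hookrightarrow M/\partial M$ and the finite generation of $M$ over $\cp$ is clean, and the passage to local finiteness uses the $(\Alg R)_0$-invariance of the $M_{(j)}$ already recorded in the paper via (\ref{eq:7}). What your approach buys, compared with the paper's bare citation, is a short argument entirely internal to the one-variable setting of $(\Alg R)^+=\CC\partial\ltimes(\Alg R)_+$, whereas Lemma 14.4 of \cite{BDK} is formulated (and proved) in the much more general pseudoalgebra framework over $U(\fg)$ for a finite-dimensional Lie algebra $\fg$.
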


This Proposition together with the results of the following
section will provide a characterization of all finite irreducible
modules over a finite Lie conformal superalgebra in terms of
certain  (quotients of) induced modules over the extended
annihilation algebra.

%
%

\section{General remarks on representations of linearly compact Lie
superalgebras}\label{sec:2}

We follow  the approach developed for representations of
infinite-dimensional simple linearly compact Lie algebras by A.
Rudakov in \cite{R}. In this section we will follow \cite{KR1}.

We shall consider continuous representations in spaces with
discrete topology. The continuity of a representation of a
linearly compact Lie superalgebra $L$  in a vector space $V$ with
discrete topology means that the stabilizer $L_v=\{ g\in L |
gv=0\}$ of any $v\in V$ is an open (hence of finite codimension)
subalgebra of $L$.

Let $L$ be a simple linearly compact Lie superalgebra. In some
cases (the examples studied in the following sections), $L$ has a
$\ZZ$-gradation of the form
\begin{equation} \label{eq:9}
L=\oplus_{m\geq -1} L_m,
\end{equation}
this gives a triangular decomposition
\begin{equation} \label{eq:10}
L=L_- \oplus L_0 \oplus L_+,\qquad \hbox{  with } \quad
L_\pm=\oplus_{\pm m>0} L_m.
\end{equation}
Let $L_{\geq 0}=L_{ > 0} \oplus L_0$. Denote by $P(L,L_{\geq 0})$
the category  of all continuous $L$-modules $V$, where $V$ is a
vector space with discrete topology, that are $\l0$-locally
finite, that is any $v\in V$ is contained in a finite-dimensional
$\l0$-invariant subspace. When talking about representations of
$L$, we shall always mean modules from $P(L,\l0)$. Modules in this
category are called {\it finite continuous $L$-modules}.

In general, in most (but not all cases) of simple $L$, by taking
$\l0$ certain maximal open subalgebra, one can choose $L_-$ to be
a subalgebra. Taking an ordered basis of $L_-$, we denote by
$U(L_-)$ the span of all PBW monomials in this basis. We have
$U(L)=U(L_-)\otimes U(\l0)$, as vector spaces (here and further
$U(L)$ stands for the universal enveloping algebra of the Lie
superalgebra $L$). It follows that any irreducible $L$-module $V$
in the category $P(L,\l0)$ is finitely generated over $U(L_-)$:
\begin{displaymath}
V=U(L_-)E
\end{displaymath}
for some finite-dimensional subspace $E$. This property is very
important in the theory of conformal modules \cite{CK}.

Given an $\l0$-module $F$, we may consider the associated induced
$L$-module
\begin{displaymath}
M(F)=\hbox{ Ind}^L_{\l0} F=U(L)\otimes_{U(\l0)} F,
\end{displaymath}
called the {\it generalized Verma module} associated to $F$.
Sometimes, we shall omit $L$ and $\l0$, and simply denote it as
Ind$\,F$.

Let $V$ be an $L$-module. The elements of the subspace
\begin{displaymath}
\hbox{ Sing}(V):=\{ v\in V| L_{>0} v=0\}
\end{displaymath}
are called {\it singular vectors}. For us the most important case
is when $V=M(F)$. The $\l0$-module $F$ is canonically an
$\l0$-submodule of $M(F)$, and Sing$(F)$ is a subspace of
Sing$(M(F))$, called the {\it subspace of trivial singular
vectors}. Observe that $M(F)= F\oplus F_+$, where
$F_+=U_+(L_-)\otimes F$ and $U_+(L_-)$ is the augmentation ideal
in the symmetric algebra $U(L_-)$. Then
\begin{displaymath}
\hbox{ Sing}_+(M(F)):=\hbox{ Sing}(M(F))\cap F_+
\end{displaymath}
are called the {\it non-trivial singular vectors}.

\begin{theorem}
\label{th:1} \cite{KR1}\cite{R} (a) If $F$ is a finite-dimensional
$\l0$-module, then  $M(F)$ is in $P(L,\l0)$.

(b) In any irreducible finite-dimensional $\l0$-module $F$ the
subalgebra $L_+$ acts trivially.

(c) If $F$ is an irreducible finite-dimensional $\l0$-module, then
$M(F)$ has a unique maximal submodule.

(d) Denote by I(F) the quotient  by the unique maximal submodule
of $M(F)$. Then the map $F\mapsto I(F)$ defines a bijective
correspondence between irreducible finite-dimensional
$L_0$-modules and irreducible $L$-modules in $P(L,\l0)$, the
inverse map being $V\mapsto $Sing$(V)$.

(e) An $L$-module $M(F)$ is irreducible if and only if the
$L_0$-module $F$ is irreducible and $M(F)$ has no non-trivial
singular vectors.
\end{theorem}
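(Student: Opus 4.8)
The plan is to run the whole argument off the $\ZZ$-grading \eqref{eq:9}. Since the grading starts in degree $-1$, the Jacobi identity forces $[L_{-1},L_{-1}]\subseteq L_{-2}=0$, so $L_-=L_{-1}$ is abelian and, by PBW, $M(F)=U(L_-)\otimes F=S(L_{-1})\otimes F=\bigoplus_{d\geq 0}S^d(L_{-1})\otimes F$ as a vector space, graded by $d$. The single tool I would use for all five parts is the \emph{grading element} $E\in L_0$ satisfying $[E,x]=mx$ for $x\in L_m$, which in the relevant examples ($W(1,n)_+$, $S(1,n)_+$) is central in $L_0$. For (a), each $S^d(L_{-1})$ is finite-dimensional (the homogeneous pieces of a linearly compact $L$ are), so $M(F)_{\leq D}:=\bigoplus_{d\leq D}S^d(L_{-1})\otimes F$ is finite-dimensional. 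I would check it is $L_{\geq 0}$-invariant by commuting an element of $L_m$ ($m\geq 0$) to the right past a degree-$d$ monomial in $L_{-1}$: each commutator $[L_m,L_{-1}]\subseteq L_{m-1}$ consumes one $L_{-1}$-factor, so the number of factors never increases and the filtration by $D$ is preserved. This gives $\l0$-local finiteness; continuity follows because for $m$ large every term of $x\cdot v$ ($x\in L_m$) that reaches $F$ carries an $L_{m'}$ with $m'$ beyond the depth at which $L_{>0}$ acts nontrivially on the finite-dimensional $F$, hence vanishes, so the stabilizer of each $v$ is open.

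Part (b) is where I expect the real work, and $E$ is exactly what makes it go. From $(E-\mu-m)(xf)=x(E-\mu)f$ one gets that $L_m$ ($m>0$) raises the generalized $E$-eigenvalue by $m$, while $E$ central in $L_0$ means $L_0$ preserves $E$-eigenspaces. Since $F$ is finite-dimensional, its top generalized $E$-eigenspace is nonzero and must be annihilated by $L_+$; therefore $\mathrm{Ann}_F(L_+)=\{f:L_+f=0\}$ is a nonzero $L_{\geq 0}$-submodule, and irreducibility of $F$ forces $\mathrm{Ann}_F(L_+)=F$, i.e. $L_+F=0$. The obstacle I anticipate here is precisely excluding that $L_+$ acts by nonzero scalars (which an abelian nilpotent algebra could a priori do); the grading element rules this out by bounding $E$-eigenvalues from above.

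With (b) established, the remaining parts follow the highest-weight pattern. By (b) and Schur's lemma $E$ acts on the irreducible $F$ by a single scalar $\mu_0$, hence on $S^d(L_{-1})\otimes F$ by $\mu_0-d$; these scalars are distinct for distinct $d$, so every submodule of $M(F)$, being $E$-stable, is homogeneous for the $d$-grading. A submodule $V$ then meets $F=M(F)_0$ in an $L_0$-submodule, so $V\cap F$ is $0$ or $F$; in the latter case $V\supseteq F$ generates $M(F)$ over $U(L_-)$, so $V=M(F)$. Thus every proper submodule lies in $\bigoplus_{d\geq 1}S^d(L_{-1})\otimes F$, the sum of all of them still lies there, and this sum is the unique maximal submodule, proving (c). For (e), the same $E$-computation shows $L_m$ ($m>0$) \emph{lowers} $d$ by $m$, so the lowest nonzero homogeneous component $V_d$ of any nonzero submodule satisfies $L_+V_d=0$; hence $V$ contains a singular vector, nontrivial as soon as $d\geq 1$. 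Consequently $M(F)$ is irreducible exactly when $F$ is irreducible and there are no nontrivial singular vectors.

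Finally (d): $I(F)$ is irreducible and lies in $P(L,\l0)$, being a quotient of $M(F)$ (local finiteness and continuity pass to quotients). Conversely, given irreducible $V\in P(L,\l0)$, local finiteness lets me place any vector in a finite-dimensional $\l0$-invariant subspace, and the $E$-argument of (b) produces a nonzero common kernel of $L_+$, so $\mathrm{Sing}(V)\neq 0$; choosing an irreducible $L_0$-submodule $F\subseteq\mathrm{Sing}(V)$, the inclusion extends by Frobenius reciprocity to an $L$-map $M(F)\to V$, which is surjective by irreducibility of $V$, whence $V\cong I(F)$. The lowest-degree argument of (e) shows every singular vector of an irreducible module sits in degree $0$, so $\mathrm{Sing}(I(F))\cong F$; this identifies $V\mapsto\mathrm{Sing}(V)$ as the inverse of $F\mapsto I(F)$ and completes the bijection.
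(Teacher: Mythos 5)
There is a genuine gap, and it is concentrated in the one tool you lean on everywhere: the central grading element $E\in L_0$ with $[E,x]=mx$ for $x\in L_m$. (The paper itself gives no proof of this theorem --- it is quoted from \cite{KR1} and \cite{R} --- so I am comparing your argument with the standard one there.) For $L=W(1,n)_+$ your $E$ exists: it is $t\p_0+\sum_{i=1}^n\xi_i\p_i$, the identity of $L_0\simeq gl(1|n)$. But the theorem is invoked in this paper equally for $L=S(1,n)_+$ and $S(1,n)'_+$, where $L_0\simeq sl(1|n)$, and there the grading element is \emph{not} in $L_0$: its divergence is $1-n\neq 0$ for $n\geq 2$, and $sl(1|n)$ has trivial center. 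Worse, this cannot be repaired by substituting some other Cartan element $h\in L_0$: the element $\xi_1\cdots\xi_n\p_0$ lies in $L_{n-1}\subset L_+$ and has weight zero with respect to the entire Cartan subalgebra $\langle t\p_0+\xi_i\p_i\rangle$ of $sl(1|n)$, so no $h\in L_0$ can separate positive graded components from $L_0$. Consequently your proofs of (b) (bounding generalized $E$-eigenvalues from above), of (c) and (e) (every submodule is $E$-stable, hence graded by $d$), and of the step $\mathrm{Sing}(V)\neq 0$ in (d) all collapse for the $S$-type algebras, which are half of the cases the paper needs.

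The standard route avoids $E$ entirely. For (b) one uses continuity: the image of $L_{\geq 0}$ in $\mathfrak{gl}(F)$ is finite dimensional, the image of $L_+$ is a nilpotent ideal of it (since $L_{(d)}$ acts by zero for large $d$), hence by the (super) Lie theorem it acts on the irreducible $F$ through a character, which vanishes because $L_+=[L_0,L_+]+[L_+,L_+]$ in these algebras. For (c)--(e) one replaces ``submodules are graded'' by: any nonzero submodule $V$ of a module in $P(L,\l0)$ contains a nonzero singular vector (take $0\neq v\in V$, let $W=U(\l0)v$, which is finite dimensional by local finiteness, and apply (b) to a minimal $\l0$-submodule of $W$), together with a separate leading-term argument showing that a proper submodule of $M(F)$ meets $1\otimes F$ trivially. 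Your parts that do not depend on $E$ --- the PBW decomposition $M(F)=S(L_{-1})\otimes F$, the invariance of the degree filtration under $\l0$ in (a), and the Frobenius-reciprocity construction of $M(F)\twoheadrightarrow V$ in (d) --- are fine and agree with the standard treatment.
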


\begin{remark} \label{rmk:1}
The correspondence defined in Theorem~{\ref{th:1}}(d) provides the
classification of irreducible modules of the category $P(L,\l0)$.
Also, we would like to remark that in general Sing$_+(M(F))$
generates a proper submodule in the $L$-module $M(F) $, but the
factor by this submodule is not necessarily irreducible, there
could appear new non-trivial singular vectors. However this
happens very rarely (see \cite{KR2} for an example and cf. Remark
\ref{rm:8}) and in most cases it can be proven that the factor
module will be irreducible.
\end{remark}

%
%

\section{Lie conformal superalgebra $W_n$ and its finite irreducible
modules}\label{sec:3}

%
%

\subsection{Definition of $W_n$ and the induced modules}\label{subsec:w1}

According to \cite{DK}, any finite simple Lie conformal algebra is
isomorphic either to Cur$\fg$, where $\fg$ is a simple
finite-dimensional Lie algebra, or to the Virasoro conformal
algebra.

However, the list of finite simple Lie conformal superalgebras is
much richer, mainly due to existence of several series of super
extensions of the Virasoro conformal algebra, see \cite{FK}.

The first series is associated to the Lie superalgebra $W(1,n)$
$(n\geq 1)$. More precisely, let $\Lambda(n)$ be the Grassmann
superalgebra in the $n$ odd indeterminates $\xi_1, \xi_2,\ldots ,
\xi_n$. Set $\Lambda(1,n)=\CC[t, t^{-1}]\otimes \Lambda(n)$, then
\begin{equation} \label{eq:3.1}
W(1,n)=\{a\p_t +\sum_{i=1}^n a_i \p_i | a, a_i\in\Lambda(1,n)\},
\end{equation}
where $\p_i=\frac{\p}{\p\xi_i}$ and $\p_t=\frac{\p}{\p t}$ are odd
and even derivations respectively. Then $W(1,n)$ is a formal
distribution Lie superalgebra with spanning family of (pairwise
local) formal distributions:
\begin{displaymath}
\F=\{\delta(t-z) a \ |\ a\in W(n)\}\cup\{\delta(t-z)f\p_t\ | \
f\in\Lambda (n)\}.
\end{displaymath}
where $W(n)=\{ \sum_{i=1}^n a_i \p_i |   a_i\in\Lambda(n)\}$ is
the (finite-dimensional) Lie superalgebra of all derivations of
$\Lambda(n)$. The associated Lie conformal superalgebra $W_n$ is
defined as
\begin{equation} \label{eq:3.2}
W_n=\CC[\p]\otimes \left(W(n)\oplus\Lambda(n)\right).
\end{equation}
The $\la$-bracket is defined as follows $(a,b\in W(n); f,g\in
\Lambda(n))$:
\begin{equation} \label{eq:3.3}
[a_\la b]=[a,b], \quad [a_\la f]= a(f)-(-1)^{p(a)p(f)}\la fa,\quad
[f_\la g]=-(\p +2\la )fg
\end{equation}
The Lie conformal algebra $W_n$ is simple for $n\geq 0$ and has
rank $(n+1)2^n$.

The annihilation subalgebra is

\begin{equation} \label{eq:3.4}
\A(W_n)=W(1,n)_+ =\{a\p_t +\sum_{i=1}^n a_i \p_i | a,
a_i\in\Lambda(1,n)_+\},
\end{equation}
where $\Lambda(1,n)_+=\CC[t]\otimes \Lambda(n)$. The extended
annihilation subalgebra is
\begin{displaymath}
\A(W_n)^e= W(1,n)^+=\CC \p_t \ltimes W(1,n)_+,
\end{displaymath}
and therefore it is isomorphic to the direct sum of $W(1,n)_+$ and
a commutative 1-dimensional Lie algebra.

The $\ZZ$-gradation  in (\ref{eq:9}) is obtained by letting
\begin{displaymath}
\hbox{ deg }t=\hbox{ deg }\xi_i =1=-\hbox{ deg }\p_t=-\hbox{ deg
}\p_i.
\end{displaymath}
If $L=W(1,n)_+$, then $L_{-1}=<\p_t,\p_1,\ldots ,\p_n>$, where
$\p_t$ is an even element  and $\p_1,\ldots ,\p_n$ are odd
elements of a basis in $L_{-1}$. Note also that $L_0\simeq
gl(1|n)$.

From now on, we shall use the notation $\p_0=\p_t$. Explicitly, we
have
\begin{displaymath}
L_0=< \{ t\p_i, \xi_i \p_j \ : \ 0\leq i,j\leq n\} > .
\end{displaymath}
In order to write explicitly weights for vectors in
$W(1,n)_+$-modules, we would consider the basis
\begin{displaymath}
t\p_0 ; t\p_0 +\xi_1 \p_1 , \ldots , t\p_0 +\xi_n \p_n
\end{displaymath}
for the Cartan subalgebra $H$ in $W(1,n)_+$, and we write the
weight of an eigenvector for the Cartan subalgebra $H$ as a tuple
\begin{displaymath}
\bar\mu=(\mu; \la_1 , \ldots , \la_n )
\end{displaymath}
for the corresponding eigenvalues of the basis.

%
%

\subsection{Modules of Laurent differential forms}\label{subsec:w2}

\

\noindent\ref{subsec:w2}.1 {\it Restricted dual}. Our algebra
$L=W(1,n)_+$, and in the last section $S(1,n)_+$, are $\ZZ$-graded
(super)algebras and the modules we intend to study are graded
modules, i.e. an $L$-module $V$ is a direct sum $V=\oplus_{m\in
\ZZ} V_m$ of finite-dimensional subspaces $V_m$, and $L_k \cdot
V_m \subset V_{k+m}$. For a graded module $V$ we define the {\it
restricted dual module} $V^\#$ as
\begin{displaymath}
V^\# =\oplus_{m\in \ZZ} (V_m)^*.
\end{displaymath}
hence $V^\#$ is a subspace of $V^*$ and it is invariant with
respect to the contragradient action, so it defines an $L$-module
structure. Observe that $(V^\#)^\#=V$.

In our situation, we have $L_{-1}=\langle \p_0,\p_1, \ldots
,\p_n\rangle$, then any $L$-module become a $\CC[\p_0,\p_1, \ldots
,\p_n]$-module. Hence, a module $V$ is a free $\CC[\p_0,\p_1,
\ldots ,\p_n]$-module if and only if $V^\#$ is a {cofree} module,
i.e. it is isomorphic to a direct sum of copies of the standard
module $\CC[z, \rho_1, \ldots , \rho_n]$, with $\p_0 \cdot f=
\frac{\p}{\p z} f$, and $\p_i \cdot f= \frac{\p}{\p \rho_i} f$.

An induced module Ind$^L_{\l0} F$ is by definition a free
$\CC[\p_0,\p_1, \ldots ,\p_n]$-module, so the co-induced (or
produced) module
\begin{displaymath}
\hbox{Cnd} F^\#= (\hbox{Ind}F)^\#
\end{displaymath}
will be cofree.

\

\noindent\ref{subsec:w2}.2 {\it Differential forms modules}. In
order to define the differential forms one considers an odd
variable $dt$ and even variables $d\xi_1, \ldots , d\xi_n$ and
defines the differential forms to be the (super)commutative
algebra freely generated by these variables over
$\Lambda(1,n)_+=\CC[t]\otimes \Lambda(n) $, or
\begin{displaymath}
\Omega_+=\Lambda(1,n)_+ [d\xi_1, \ldots , d\xi_n]\otimes
\Lambda[dt].
\end{displaymath}
Generally speaking $\Omega_+$ is just a polynomial (super)algebra
over a big set of variables
\begin{displaymath}
t, \xi_1, \ldots , \xi_n, dt, d\xi_1, \ldots , d\xi_n,
\end{displaymath}
where the parity is
\begin{displaymath}
p(t)=0, \ \ p(\xi_i)=1, \ \ p(dt)=1, \ \ p(d\xi_i)=0.
\end{displaymath}
These are called {\it (polynomial) differential forms}, and we
define the {\it Laurent differential forms} to be the same algebra
over $\Lambda(1,n)=\CC[t, t^{-1}]\otimes \Lambda(n)$:
\begin{displaymath}
\Omega=\Lambda(1,n) [d\xi_1, \ldots , d\xi_n]\otimes \Lambda[dt].
\end{displaymath}
We would like to consider a fixed complementary subspace
$\Omega_-$ to $\Omega_+$ in $\Omega$ chosen as follows
\begin{displaymath}
\Omega_- = t^{-1}\CC[t^{-1}]\otimes \Lambda(n)\otimes\CC [d\xi_1,
\ldots , d\xi_n]\otimes \Lambda[dt].
\end{displaymath}

For the differential forms we need the usual differential degree
that measure only the involvement of the differential variables
$dt, d\xi_1, \ldots , d\xi_n$, that is
\begin{displaymath}
\hbox{ deg } t=0, \hbox{ deg } \xi_i=0, \hbox{ deg } dt=1, \hbox{
deg } d\xi_i=1.
\end{displaymath}
As a result, the degree of a function is zero an it gives us the
{\it standard $\ZZ$-gradation} both on $\O$ and $\O_\pm$. As
usual, we denote by $\O^k, \O^k_\pm$ the corresponding graded
components.

We denote by $\O^k_c$ the special subspace of differential forms
with constant coefficients in $\O_k$.

The operator $d$ is defined on $\O$ as usual by the rules $ d
\cdot t = dt,  d \cdot \xi_i = d\xi_i, d \cdot d\xi_i = 0$, and
the identity
\begin{displaymath}
d(fg)=(df) g + (-1)^{p(f)} f dg,
\end{displaymath}
Observe that $d$ maps both $\O_+$ and $\O_-$ into themselves.

As usual, we extend the natural action  of $W(1,n)_+$ on
$\Lambda(1,n)$ to the hole $\O$ by imposing the property
\begin{displaymath}
D \cdot d=(-1)^{p(D)} d\cdot D,\qquad D\in W(1,n)_+,
\end{displaymath}
that is, $D$ (super)commutes with $d$. It is clear that $\O_+$ and
all the subspaces $\O^k$ are invariant. Hence $\O_+^k$ and  $\O^k$
are $W(1,n)_+$-modules, which are called the {\it natural
representations} of $W(1,n)_+$ in differential forms.

We define the action of $W(1,n)_+$ on $\O_-$ via the isomorphism
of $\O_-$ with the factor of $\O$ by $\O_+$. Practically this
means that in order to compute $D\cdot f$, where $f\in \O_-$, we
apply $D$ to $f$ and "disregard terms with non-negative powers of
$t$".

The operator $d$ restricted to $\O^k_\pm$ defines an odd morphism
between the corresponding representations. Clearly the image and
the kernel of such a morphism are submodules in $\O^k_\pm$.

\

Let $\T^k_c= (\O^k_c)^\#$ and  $\T^k_+= (\O^k_+)^\#$. In the rest
of this subsection, we consider $ L=W(1,n)_+$.

\

\begin{proposition} \label{prop:d1} (1) The $L_0$-module $\T^k_c, k\geq 0$
is irreducible with highest weight
\begin{displaymath}
(0; 0,\ldots ,0,-k),\ \  k\geq 0.
\end{displaymath}
\vskip .3cm

(2) The $L$-module $\T^k_+,\  k\geq 0$ contains $\T^k_c$ and this
inclusion induces the isomorphism
\begin{displaymath}
\T^k_+=\hbox{\rm Ind}\ \T^k_c.
\end{displaymath}
\vskip .3cm

(3) The dual maps $d^\# : \T^{k+1}_+ \to \T^k_+$ are morphisms of
$L$-modules. The kernel of one of them is equal to the image of
the next one and it is a non-trivial proper submodule in $\T^k_+$.

\vskip .3cm

\end{proposition}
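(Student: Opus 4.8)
First I would compute the $H$-weights of $\O^1_c=\langle dt,d\xi_1,\dots,d\xi_n\rangle$. Using the relation $D\cdot d=(-1)^{p(D)}d\cdot D$ together with $\p_0 t=1$ and $\p_i\xi_j=\delta_{ij}$, one finds $t\p_0\cdot dt=dt$, $\xi_i\p_i\cdot dt=0$, $t\p_0\cdot d\xi_j=0$ and $\xi_i\p_i\cdot d\xi_j=\delta_{ij}\,d\xi_j$, so that $dt$ has weight $(1;1,\dots,1)$ and each $d\xi_j$ has weight $(0;e_j)$, where $e_j$ is the $j$-th coordinate vector. Thus $\O^1_c$ is, up to the central character, the natural $L_0=gl(1|n)$-module, with $dt$ odd and the $d\xi_i$ even; since $\O_c=\CC[d\xi_1,\dots,d\xi_n]\otimes\Lambda[dt]$ is precisely the super-symmetric algebra on $\O^1_c$, we have $\O^k_c=S^k(\O^1_c)$. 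Symmetric powers of the standard module of $gl(1|n)$ are irreducible, so $\O^k_c$, and hence its restricted dual $\T^k_c=(\O^k_c)^\#$, is an irreducible $L_0$-module. Finally I would dualize: the functional dual to the extremal monomial $(d\xi_n)^k$, which has weight $(0;0,\dots,0,k)$, is killed by the nilradical of the chosen Borel and carries weight $(0;0,\dots,0,-k)$, the asserted highest weight. (Irreducibility can alternatively be checked by hand, exhibiting this cyclic highest-weight vector and verifying the absence of singular vectors.)

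\textbf{Part (2).} Here the plan is to exploit the free/cofree duality of Subsection~\ref{subsec:w2}. Write $\O^k_+=\Lambda(1,n)_+\otimes\O^k_c$ with $\Lambda(1,n)_+=\CC[t]\otimes\Lambda(n)$; then $\O^k_+$ is a direct sum of copies of the standard cofree module, one for each basis vector of $\O^k_c$, so it is cofree over $U(L_-)$ and its restricted dual $\T^k_+$ is \emph{free} over $U(L_-)$ of rank $\dim_\CC\O^k_c=\dim_\CC\T^k_c$. Next, the augmentation $t,\xi\mapsto 0$ realizes $\O^k_c$ as the quotient of $\O^k_+$ by the $\l0$-submodule $\bigoplus_{N>k}(\O^k_+)_N$ (the part of positive coordinate degree), on which $L_+$ necessarily acts by raising the grading; dualizing this $\l0$-surjection yields an $\l0$-embedding $\T^k_c\hookrightarrow\T^k_+$ with $L_+$ acting trivially and $L_0$ acting as in part (1). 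By Frobenius reciprocity this inclusion extends to an $L$-homomorphism $\phi:\mathrm{Ind}\,\T^k_c\to\T^k_+$, which is surjective because $\T^k_c$ generates the free $U(L_-)$-module $\T^k_+$ and injective by the equality of $U(L_-)$-ranks; hence $\phi$ is the desired isomorphism.

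\textbf{Part (3).} Since every $D\in W(1,n)_+$ supercommutes with $d$, the de Rham differential $d:\O^k_+\to\O^{k+1}_+$ is a morphism of $L$-modules, and applying the contravariant exact functor $\#$ gives $L$-morphisms $d^\#:\T^{k+1}_+\to\T^k_+$ with $d^\#\circ d^\#=0$. The heart of the matter is the algebraic super Poincar\'e lemma: the complex $(\O_+,d)$ has cohomology $\CC$ concentrated in form-degree $0$. I would prove this by the Euler-operator homotopy: letting $E$ be the Euler vector field in the variables $t,\xi_1,\dots,\xi_n$, the Lie derivative $\L_E=d\iota_E+\iota_E d$ acts on the component of total degree $N$ as the scalar $N$, so $h=\L_E^{-1}\iota_E$ is defined on every component of positive total degree and satisfies $dh+hd=\mathrm{id}$ there. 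As any form of positive form-degree has positive total degree, this yields $\ker d=\mathrm{im}\,d$ on $\O^k_+$ for all $k\geq 1$. Because $d$ preserves the total-degree grading, whose homogeneous pieces are finite-dimensional, the functor $\#$ is exact and transfers this exactness to the dual complex: for $k\geq1$,
\[
\ker\bigl(d^\#:\T^k_+\to\T^{k-1}_+\bigr)=\mathrm{im}\bigl(d^\#:\T^{k+1}_+\to\T^k_+\bigr).
\]
This common submodule is nonzero, since $d$ (hence $d^\#$) is nonzero in these degrees, and proper, since $d^\#:\T^k_+\to\T^{k-1}_+$ is nonzero so its kernel cannot be all of $\T^k_+$; this is the asserted non-trivial proper submodule.

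\textbf{Main obstacle.} I expect the technical core to be the Poincar\'e-lemma step of part (3): setting up the contracting homotopy with the correct super-signs, checking that $\L_E$ is invertible precisely off the constants, and confirming that restricted-dualizing preserves exactness (which rests on the finite-dimensionality of the graded pieces and on $d$ having degree $0$). In part (2) the one point requiring care is that the Frobenius map $\phi$ is an \emph{isomorphism} rather than merely a surjection, which I would settle by the rank count over $U(L_-)$.
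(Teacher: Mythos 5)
Your proof is correct and follows essentially the same route as the paper: identify $\O^k_c$ as the irreducible (parity-reversed) $k$-th symmetric power of the standard $gl(1|n)$-module and dualize, use cofreeness of $\O^k_+$ to get freeness of $\T^k_+$ over $U(L_-)$ and conclude that the Frobenius map is an isomorphism, and dualize the de Rham complex for part (3). The only difference is that you spell out the super Poincar\'e lemma via the Euler-operator homotopy, a detail the paper leaves implicit (its proof of (3) is a single sentence), so your write-up is a more complete version of the same argument.
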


\begin{proof}
(1) It is well known that $\O^k_c$ are irreducible and thus
$\T^k_+$ are also irreducible. Observe that  the lowest vector in
$\O^k_c$ is $(d\xi_n)^k$ and it has the weight $(0;0,\ldots ,
0,k)$. Now the sign changes as we go to the dual module and so we
get the highest weight of $\T^k_c$.

(2) By the definition of the restricted dual, it is the sum of the
dual of all the graded components of the initial module. In our
case $\O^k_c$ is the component of the minimal degree in $\O^k_+$,
so $\T^k_c$ becomes the component of the maximal degree in
$\T^k_+$. This implies that $L_{>0}$ acts trivially on $\T^k_c$,
so the morphism Ind$\ \T^k_c \to \T^k_+$ is  defined. Clearly
$\O^k_+$ is isomorphic to
\begin{displaymath}
\O^k_c\otimes \CC [t,\xi_1, \ldots , \xi_n],
\end{displaymath}
so it is a cofree module. Then the module $\T^k_+$ is a free
$\CC[\p_0, \p_1, \ldots ,\p_n]$-module and the morphism
\begin{displaymath}
\hbox{\rm Ind}\ \T^k_c \to \T^k_+
\end{displaymath}
is therefore an isomorphism.

(3) This statement follows immediately from the fact that $d$
commutes with the action of vector fields.
\end{proof}

\begin{corollary}\label{cor:d} The $L$-modules $\O^k_+$ of differential
forms are isomorphic to the co-induced modules
\begin{displaymath}
\O^k_+=\hbox{\rm Cnd}\ \O^k_c.
\end{displaymath}
\end{corollary}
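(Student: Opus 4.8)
The plan is to obtain the Corollary purely formally, by applying the restricted-dual functor $\#$ to part (2) of Proposition \ref{prop:d1} and invoking the defining relation $\hbox{Cnd}\ F^\# = (\hbox{Ind}\ F)^\#$ together with the biduality $(V^\#)^\# = V$ recorded in \ref{subsec:w2}.1.

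First I would record the needed finiteness facts. The space $\O^k_c$ is a finite-dimensional $L_0$-module concentrated in a single degree of the standard gradation, so its restricted dual is just its ordinary dual; in particular $\T^k_c=(\O^k_c)^\#$ is finite-dimensional and $(\T^k_c)^\#=\O^k_c$. Moreover, as observed in the proof of Proposition \ref{prop:d1}(2), $L_{>0}$ acts trivially on $\T^k_c$, so $\T^k_c$ is genuinely an $\l0$-module and the produced module $\hbox{Cnd}$ is applicable to it.

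Next I would apply the definition of $\hbox{Cnd}$ with $F=\T^k_c$, obtaining
$$\hbox{Cnd}\ (\T^k_c)^\# = (\hbox{Ind}\ \T^k_c)^\#.$$
Since $(\T^k_c)^\#=\O^k_c$, the left-hand side is exactly $\hbox{Cnd}\ \O^k_c$. For the right-hand side, Proposition \ref{prop:d1}(2) identifies $\hbox{Ind}\ \T^k_c$ with $\T^k_+=(\O^k_+)^\#$, and biduality gives $(\T^k_+)^\#=\O^k_+$. Chaining these identifications yields $\hbox{Cnd}\ \O^k_c \cong \O^k_+$ as $L$-modules, which is the assertion.

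The argument is entirely formal, so there is no substantial obstacle; the only points requiring care are that the functor $\#$ is $L$-equivariant, so that the whole chain consists of $L$-module isomorphisms rather than merely linear ones, and that the biduality $(V^\#)^\#=V$ really applies to $\O^k_+$. The latter is legitimate because $\O^k_+ \cong \O^k_c\otimes\CC[t,\xi_1,\ldots,\xi_n]$ is a graded module with finite-dimensional graded components, as established in the proof of Proposition \ref{prop:d1}(2).
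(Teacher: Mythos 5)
Your argument is correct and is exactly the intended derivation: the paper states this corollary without separate proof precisely because it follows from Proposition~\ref{prop:d1}(2) by applying $\#$, using the definition $\hbox{Cnd}\, F^\# = (\hbox{Ind}\, F)^\#$ and the biduality $(V^\#)^\#=V$ recorded in \ref{subsec:w2}.1. Your extra care about finite-dimensionality of the graded components and $L$-equivariance of $\#$ is appropriate but does not change the route.
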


Let us now study the $L$-modules $\O^k_-$. First, notice that
these modules are free as $\CC[\p_0, \p_1,\ldots , \p_n]$-modules.
Let
\begin{equation} \label{eq:omega}
\xi_* = \xi_1\cdots \xi_n, \qquad \hbox{ and } \qquad \bar\O^k_c =
t^{-1}\xi_* \O^k_c \subset \O^k_-.
\end{equation}

\begin{proposition} \label{prop:d2} For $L=W(1,n)_+$, we have:

\vskip .3cm

(1) $\bar\O^k_c$ is an irreducible $L_0$-submodule in $\O^k_-$
with highest  weight
\begin{eqnarray*}
(-1;0,0,\ldots , 0),  \  \hbox{ for } k=0,
\\
(0;k,1,\ldots , 1),  \  \hbox{ for } k>0,
\end{eqnarray*}
and $L_{>0}$ acts trivially on $\bar\O^k_c$.

\vskip .3cm

(2) There is an isomorphism $\O^k_- =\hbox{\rm Ind}^L_{\l0}
\bar\O^k_c$.

\vskip .3cm

(3) The differential $d$ gives us  $L$-module morphisms on
$\O^k_-$ and the kernel and image of $d$ are $L$-submodules in
$\O^k_-$.

\vskip .3cm

(4) The kernel of $d$ and image of $d$ in $\O^k_-$ for $k\geq 2$
coincide,  in $\O^1_-$ we have {\rm Ker}$\ d=\CC(t^{-1} dt) +
${\rm Im}$\ d$, and in $\O^0_-$, we have {\rm Ker}$\ d=0$ (and the
image does not exist).

\vskip .3cm
\end{proposition}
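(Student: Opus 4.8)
The proof parallels that of Proposition~\ref{prop:d1}. The key point is that $\bar\O^k_c$ is the \emph{top} homogeneous component of $\O^k_-$ for the $\ZZ$-grading of $L=W(1,n)_+$ (in which $\deg t=\deg\xi_i=1$). Indeed, a basis monomial $t^{-m}\xi_{i_1}\cdots\xi_{i_r}\,\omega$ of $\O^k_-$ (with $m\geq1$, $1\leq i_1<\cdots<i_r\leq n$, and $\omega$ a constant $k$-form) has degree $-m+r+k$, which is maximal, equal to $n+k-1$, precisely when $m=1$ and $\{i_1,\dots,i_r\}=\{1,\dots,n\}$, i.e.\ on $\bar\O^k_c=t^{-1}\xi_*\O^k_c$. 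Since $L_0$ preserves the grading while $L_{>0}$ strictly raises it, $\bar\O^k_c$ is automatically an $L_0$-submodule annihilated by $L_{>0}$, proving the last assertion of (1). For (2) I would note that $\p_0=\p_t$ and the $\p_i$ all annihilate constant forms, so $U(L_{-1})=\CC[\p_0,\dots,\p_n]$ acts only on the factor $t^{-1}\xi_*$: a PBW monomial $\p_0^{\,a}\p_{i_1}\cdots\p_{i_r}$ sends $t^{-1}\xi_*\omega$ to a nonzero multiple of the monomial obtained by lowering the power of $t$ to $-1-a$ and deleting $\xi_{i_1},\dots,\xi_{i_r}$. These exhaust a basis of $\O^k_-$ bijectively, so $\O^k_-$ is free over $\CC[\p_0,\dots,\p_n]$ with generating space $\bar\O^k_c$, and the canonical map $\mathrm{Ind}^L_{\l0}\bar\O^k_c\to\O^k_-$ is an isomorphism.

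For the remaining parts of (1) I would identify $\bar\O^k_c$ with a character twist of $\O^k_c$. Let $\chi$ denote the one-dimensional $L_0$-module $\bar\O^0_c=\CC\,t^{-1}\xi_*$, of weight $(-1;0,\dots,0)$. I claim $\omega\mapsto t^{-1}\xi_*\omega$ defines an isomorphism of $L_0$-modules $\O^k_c\otimes\chi\xrightarrow{\ \sim\ }\bar\O^k_c$. This reduces to a Leibniz computation modulo $\O_+$: for $D\in L_0$ one checks first that $D\omega$ is again a constant form, and second that $D(t^{-1}\xi_*)\equiv\chi(D)\,t^{-1}\xi_*\pmod{\O_+}$. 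The latter is the crux — the root vectors $t\p_j$, $\xi_j\p_0$ ($1\leq j\leq n$) and $\xi_i\p_j$ ($1\leq i\neq j\leq n$) send $t^{-1}\xi_*$ either to $0$ or into $\O_+$ (hence are disregarded in $\O_-$), while $t\p_0$ and $\xi_i\p_i$ act by the scalars $-1$ and $+1$. Granting that $\O^k_c$ is an irreducible $L_0$-module (as used in Proposition~\ref{prop:d1}), $\bar\O^k_c$ is then irreducible, and its highest weight is that of $\O^k_c$ shifted by $\chi$. A direct weight computation on the candidate highest weight vectors $t^{-1}\xi_*$ (for $k=0$) and $t^{-1}\xi_*\,dt\,(d\xi_1)^{k-1}$ (for $k>0$) yields the weights $(-1;0,\dots,0)$ and $(0;k,1,\dots,1)$ stated in (1).

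Part (3) is immediate: each $D\in W(1,n)_+$ supercommutes with $d$ on $\O$, and since $d$ maps $\O_-$ into itself, the induced map on $\O^k_-$ is an odd $L$-endomorphism, so its kernel and image are submodules. For (4) I would compute $H^\bullet(\O_-,d)$ via the K\"unneth formula. Write $\O_-=C_t\otimes C_\xi$, where $C_t=t^{-1}\CC[t^{-1}]\otimes\Lambda[dt]$ carries the differential $dt\,\p_t$ and $C_\xi=\Lambda(n)\otimes\CC[d\xi_1,\dots,d\xi_n]$ carries $\sum_i d\xi_i\,\p_i$, so that $d$ is the total differential. Two elementary computations give $H^\bullet(C_\xi)=\CC$, concentrated in differential degree $0$ (the super-Poincar\'e lemma for the odd variables, built from the one-variable identity $d(\xi_i(d\xi_i)^m)=(d\xi_i)^{m+1}$), and $H^0(C_t)=0$, $H^1(C_t)=\CC\,t^{-1}dt$ (since $d(t^{-m})=-m\,t^{-m-1}dt$ runs over all $t^{-j}dt$ with $j\geq2$, missing only $t^{-1}dt$). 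Hence $H^\bullet(\O_-)$ is one-dimensional, spanned by the class of $t^{-1}dt$ in differential degree $1$. Reading this off degree by degree gives $\mathrm{Ker}\,d=0$ in $\O^0_-$, $\mathrm{Ker}\,d=\CC(t^{-1}dt)+\mathrm{Im}\,d$ in $\O^1_-$, and $\mathrm{Ker}\,d=\mathrm{Im}\,d$ for $k\geq2$.

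The main obstacle is (4). Parts (1)--(3) are essentially the bookkeeping of Proposition~\ref{prop:d1} adapted to negative powers of $t$, the only care needed being the ``disregard'' projection $\O\to\O_-$ and the signs in the twist $\O^k_c\otimes\chi$. The cohomology of $\O_-$, by contrast, is the genuine content: one must verify that the negative-power complex $C_t$ contributes no $H^0$ and exactly the class $t^{-1}dt$ in $H^1$, and that the odd de Rham complex $C_\xi$ is acyclic away from the constants. Once these two local statements are in hand, the K\"unneth factorization assembles them into the stated exactness of the conformal de Rham complex in differential degrees $\geq2$ and the single surviving class in degree $1$.
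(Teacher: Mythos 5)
Your argument for parts (1) and (2) is essentially the paper's: the authors likewise observe that $\bar\O^k_c$ is the top total-degree component of $\O^k_-$ (hence killed by $L_{>0}$), identify it as an $L_0$-module with $\O^k_c$ tensored by the one-dimensional module $\langle t^{-1}\xi_*\rangle$, read off the highest weight vectors $t^{-1}\xi_*$, $t^{-1}\xi_*\,dt$, $t^{-1}\xi_*\,dt(d\xi_1)^{k-1}$, and deduce (2) from the coefficient-wise action of $\p_0,\ldots,\p_n$ making $\O^k_-$ a free $\CC[\p_0,\ldots,\p_n]$-module; your explicit PBW bookkeeping and the verification of the character values of $t\p_0$ and $\xi_i\p_i$ on $t^{-1}\xi_*$ just spell this out. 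Where you genuinely add something is in parts (3) and (4), which the paper leaves to the reader: your K\"unneth factorization $\O_-=C_t\otimes C_\xi$, with $H^\bullet(C_\xi)=\CC$ in differential degree $0$ and $H^0(C_t)=0$, $H^1(C_t)=\CC\,t^{-1}dt$ (the class $t^{-1}dt$ surviving because $d(t^{-m})=-m\,t^{-m-1}dt$ only reaches $t^{-j}dt$ for $j\geq 2$), is a correct and complete derivation of the three cases in (4), and it is the natural companion to the homotopy-operator argument the authors use later for Proposition \ref{prop:d}(b). The computations check out; this is a sound proof.
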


\begin{proof}
(1) First of all, $\bar\O^k_c$ is the maximum total degree
component in $\O^k_-$, so any element from $L_{>0}$ moves it to
zero. Also, as $L_0$-module $\bar\O^k_c$ is isomorphic to $\O^k_c$
multiplied by the 1-dimensional module $\langle
t^{-1}\xi_*\rangle$. This permits us to see that its highest
vectors are
\begin{eqnarray*}
\langle t^{-1}\xi_*\rangle \qquad\qquad \ \ \hbox{ for }k=0,
\\
\langle t^{-1}\xi_* dt\rangle \qquad\qquad \hbox{ for }k=1,
\\
\langle t^{-1}\xi_*dt (d\xi_1)^{k-1}\rangle \qquad  \hbox{ for
}k>1.
\end{eqnarray*}
The values of the highest weights are easy  to compute.

(2) It is straightforward to see that $\O^0_-$ is a free rank 1
$\CC[\p_0,\p_1, \ldots , \p_n]$-module. Now, the action of
$\p_0,\p_1, \ldots , \p_n$ on $\O^k_-$ is coefficient wise and the
fact that  $\O^k_-$ is a free $\CC[\p_0,\p_1, \ldots ,
\p_n]$-module follows. This gives us the isomorphism
$\O^k_-=$Ind$^L_{\l0}\bar\O^k_c$. Parts (3) and (4) are left to
the reader.
\end{proof}

The above statement shows us that there are non-trivial submodules
in $\O^k_\pm$ and $\T^k_+$. In fact, these are "almost all" proper
submodules and the respective factors are irreducible. These
results are discussed in Section~{\ref{subsec:w4}}. In order to
get this result we need to study singular vectors.

%
%

\subsection{Singular vectors  of $W_n$-modules}\label{subsec:w3}

\

Having in mind the results of Section \ref{sec:2}, we introduce
the following modules. Given a $gl(1|n)$-module $V$, we have the
associated tensor field $W(1,n)$-module $\CC[t,
t^{-1}]\otimes\Lambda(n)\otimes V$, which is a formal distribution
module spanned by a collection of fields
$E=\{\delta(t-z)fv|f\in\Lambda(n), v\in V\}$. The associated
conformal $W_n$-module is
\begin{equation} \label{eq:3.5}
\hbox{ Tens}(V)=\CC[\p]\otimes\left(\Lambda(n)\otimes V)\right)
\end{equation}
with the following $\la$-action:
\begin{eqnarray}  \label{eq:3.6}
& a_\la(g\otimes v) = a(g)\otimes v +(-1)^{p(a)}\sum_{i,j=1}^n (\p_i f_j)g\otimes
(E_{ij}-\delta_{ij})(v) - \hskip 2cm
\\
& \hskip 6cm - \la(-1)^{p(g)} \sum_{j=1}^n f_j g \otimes
E_{0j}(v), \nonumber
\end{eqnarray}
\begin{eqnarray} \label{eq:3.7}
& f_\la(g\otimes v) = (-\p )(fg\otimes v) +(-1)^{p(fg)} \sum_{i=1}^n (\p_i
f)g\otimes E_{i0}(v) + \hskip 1cm
\\
& \hskip 6cm + \la ( f g \otimes E_{00}(v)). \nonumber
\end{eqnarray}
where $a=\sum_{i=1}^n f_i \p_i \in W(n), f,g\in\Lambda(n), v\in
V$, and $E_{ij}\in gl(1|n)$ are matrix units (they correspond to
the level 0 elements $\xi_i\p_j$ with the notation $\xi_0=t$ and
$\p_0=\p_t$).

In this case, the modules $M(F)=\hbox{ Ind}_{\l0}^L F$ defined in
Section \ref{sec:2}, correspond to the $W_n$-module Tens$(F)$,
with $F$ a finite-dimensional (irreducible)  $gl(1|n)$-module.
When we discuss the highest weight of vectors and singular
vectors, we always mean with respect to the upper Borel subalgebra
in $L=W(1,n)_+$ generated by $L_{>0}$ and the elements of $L_0$:
\begin{equation} \label{eq:borel}
t\p_i, \quad \xi_i \p_j \ \ i<j.
\end{equation}

Therefore, in the module $M(V)$, viewed as a module over the
annihilation algebra $W(1,n)_+$ (see Proposition~{\ref{prop:1}}),
a vector $m\in M(V)$ is a singular vector if and only if the
following conditions are satisfied ($g=\xi_{i_1}\cdots
\xi_{i_s}\in \Lambda(n)$, and $\p_0=\p_t$)

\

(s1) \ $t^n g \p_i \cdot m=0$ for $n>1$,

\vskip .3cm

(s2) \ $t^1 g \p_i \cdot m=0$ except for $g=1$ and $i=0$,

\vskip .3cm

(s3) \ $t^0 g \p_j \cdot m=0$  for $s>1$ or $g=\xi_i$ with $i<j$.
\begin{equation} \label{eq:s3}
\end{equation}


We shall frequently use the notation
\begin{equation} \label{eq:xi}
\xi_I=\xi_{i_1}\cdots \xi_{i_s}\in \Lambda(n), \quad \hbox{ with }
I=\{i_1,\ldots , i_s\}.
\end{equation}
Therefore, these conditions on a singular vector $m\in $Tens$(V)$
translate in terms of the $\la$-action  to (cf. (\ref{eq:la})):

\

(S1) $\frac{d^2}{d\la^2}(f_\la m)=0$ for all $f\in\Lambda(n)$,

\vskip .3cm

(S2) $\frac{d}{d\la}(a_\la m)=0$ for all $a\in W(n)$,

\vskip .3cm

(S3) $\frac{d}{d\la}(f_\la m)|_{\la=0}=0$ for all $f\in\Lambda(n)$
with $f\neq 1$,

\vskip .3cm

(S4) $(a_\la m)|_{\la=0}=0$ for all $a=\xi_I\p_j\in W(n)$ with
$|I|>1$ or $a=\xi_i\p_j$ with $i<j$,

\vskip .3cm

(S5) $(f_\la m)|_{\la=0}=0$ for all $f=\xi_I\in\Lambda(n)$ with
$|I|> 1$.

\

In order to classify the finite irreducible $W_n$-modules we
should solve these equations (S1-5) to obtain the singular
vectors.

Let $m\in \hbox{Tens}(V)=\CC[\p]\otimes \Lambda(n)\otimes V$, then
\begin{equation} \label{eq:s1}
m=\sum_{k=0}^N\sum_I \p^k (\xi_I\otimes v_{I,k}), \quad \hbox{
with } v_{I,k}\in V.
\end{equation}

In order to obtain the singular vectors, we need the some
reduction lemmas:

\begin{lemma} \label{lem:deg} If $m\in \hbox{\rm Tens}(V)$ is a singular
vector, then the degree of $m$ in $\p$ is at most 1.
\end{lemma}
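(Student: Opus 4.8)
The plan is to analyze the singular vector conditions (S1)–(S5) applied to a general element

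\[
m=\sum_{k=0}^N\sum_I \p^k (\xi_I\otimes v_{I,k}),
\]

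and show that the top degree coefficient must vanish unless $N\leq 1$. The key observation is that the conditions (S1) and (S2) control the behavior of the $\la$-action in $\la$, and by (\ref{eq:la}) the $\la$-action encodes the action of the generators $t^j g\p_i$ through the coefficients of powers of $\la$. Since $\p$ acts by multiplication (and $\p m$ corresponds to shifting in the affinization), the highest power $\p^N$ in $m$ will produce, under a suitably chosen generator, a term that cannot be cancelled by lower-degree terms. I would single out a generator whose $\la$-action raises the degree in $\p$ in a controlled way and extract the leading coefficient.

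Concretely, I would first compute $f_\la m$ and $a_\la m$ using (\ref{eq:3.6}) and (\ref{eq:3.7}), tracking only the dependence on $\p$ and $\la$. The crucial point is that in (\ref{eq:3.7}) the term $(-\p)(fg\otimes v)$ raises the $\p$-degree by one, while the remaining terms involving $E_{i0}$ and $E_{00}$ do not. Applying condition (S1), which says $\frac{d^2}{d\la^2}(f_\la m)=0$, forces the coefficient of $\la^2$ in $f_\la m$ to vanish identically; by the structure of the formulas this $\la^2$ coefficient is a polynomial in $\p$ whose top term is governed by $\p^N v_{I,N}$ acted on by $E_{0j}$ composed appropriately. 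I expect this to yield a linear relation forcing the degree-$N$ part to be annihilated when $N\geq 2$.

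The main technical step is organizing the computation so that the leading $\p^N$ coefficient is isolated cleanly. I would argue by downward induction on the degree $k$: assume the claim known for all components of degree exceeding some value, take the highest surviving degree $N$, and use (S1) (for the $\la^2$ vanishing) together with (S2) (for the $\la^1$ vanishing under the even part $W(n)$) to pin down $v_{I,N}$. Because $f_\la$ contributes a factor of $\la$ through $E_{00}$ and a factor of $(-\p)$ through the leading term, the product structure means that demanding the second $\la$-derivative to vanish kills precisely the top-degree contribution. The conditions (S3)–(S5), which only constrain the $\la=0$ and first-derivative-at-zero behavior, then serve to fix the lower-order structure, but the degree bound itself should follow from (S1)–(S2) alone.

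The hard part will be verifying that no cancellation occurs between the $\p^N$ term produced by $(-\p)(fg\otimes v)$ and $\p$-degree-$N$ contributions arising from $\p^{N}$ already present in lower-indexed summands after the $E_{ij}$-actions are applied; one must check that the matrix-unit operators $E_{ij}$ act only on the $V$-factor and hence commute with the $\p$-degree bookkeeping. Once this separation of the $\p$-grading from the $gl(1|n)$-action is established, the vanishing of the $\la^2$-coefficient becomes a statement purely about $v_{I,N}$, and the conclusion $N\leq 1$ follows. I would therefore carry out the $f_\la m$ computation first, extract the $\la^2$-coefficient, and reduce everything to a single algebraic identity in the $v_{I,N}$.
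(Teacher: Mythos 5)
Your plan follows essentially the same route as the paper's proof: starting from the expansion $m=\sum_{k,I}\partial^{k}(\xi_I\otimes v_{I,k})$ of (\ref{eq:s1}), the paper applies (S2) with $a=\partial_j$ and then (S1) with $f=1$, in each case reading off first the top coefficient in $\lambda$ (which gives $E_{0j}(v_{I,k})=0$, resp.\ $E_{00}(v_{\emptyset,k})=0$) and then the top coefficient in $(\lambda+\partial)$ (which then gives $v_{I,k}=0$ for $k\geq 2$). The one point to be careful about when you carry this out is that the single coefficient of $\lambda^{2}$ only yields an eigenvalue relation of the form $E_{00}(v_{I,k})=\tfrac{k-1}{2}\,v_{I,k}$, so you must exploit the vanishing of \emph{all} coefficients of $\lambda^{j}$, $j\geq 2$ (equivalently, combine the $\lambda$- and $(\lambda+\partial)$-expansions as the paper does) to actually force $v_{I,k}=0$.
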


\begin{proof} Using (\ref{eq:3.6}) , we have for $a=\sum_{i=1}^n f_i \p_i$ that
\begin{eqnarray} \label{eq:a}
&(a_\la m)'=\sum_{k=1}^N\sum_I k (\la +\p)^{k-1}
\bigg[a(\xi_I)\otimes
v_{I,k} \qquad\qquad\qquad \qquad\qquad\qquad\\
\nonumber
& + (-1)^{p(a)} \sum_{i,j=1}^n (\p_i f_j)\xi_I\otimes
(E_{ij}-\delta_{ij})(v_{I,k})
- \la (-1)^{|I|} \sum_{j=1}^n f_j \xi_I \otimes E_{0j}(v_{I,k})\bigg]\\
\nonumber
&- \sum_{k=0}^N \sum _I (\la +\p)^k (-1)^{|I|} \sum_{j=1}^N f_j
\xi_I \otimes E_{0j}(v_{I,k}).
\end{eqnarray}
Taking $a=\p_j$, condition (S2) become
\begin{eqnarray} \label{eq:3.8}
&0=\sum_{k=1}^N\sum_{I| j\in I} k (\la +\p)^{k-1} (\xi_{i_1}\cdots
\hat\xi_j\cdots \xi_{i_s}\otimes v_{I,k} )\qquad\qquad\qquad\qquad\\
\nonumber
& - \la \sum_{k=1}^n\sum_I (-1)^{|I|} k (\la +\p)^{k-1}  ( \xi_I
\otimes
E_{0j}(v_{I,k}))\\
\nonumber
&- \sum_{k=0}^N \sum _I (\la +\p)^k (-1)^{|I|}  (\xi_I \otimes
E_{0j}(v_{I,k})).
\end{eqnarray}
Now, viewed as a polynomial in $\la$, we obtain
\begin{equation} \label{eq:3.9}
E_{0j}(v_{I,k})=0, \qquad \forall I, k=1,\ldots , N , \ \hbox{ and
} j=1, \ldots , n.
\end{equation}
Using it in (\ref{eq:3.8}) and taking the coefficients in $\la
+\p$, we get
\begin{displaymath}
v_{I,k}=0 \qquad \hbox{ for all } I\neq \emptyset, \hbox{ and }
k\geq 2.
\end{displaymath}
Hence, $m=\sum_{k=0}^1\sum_I \p^k (\xi_I\otimes v_{I,k}) +
\sum_{k=2}^N  \p^k (1\otimes v_{\emptyset ,k})$.

Using (\ref{eq:3.7}) and condition (S1) for $f=1$, we have
\begin{eqnarray} \label{eq:3.10}
& 0=(f_\la m)"=  2\sum_{I}  (\xi_{I} \otimes E_{00}(v_{I,1}) )
\qquad\qquad \qquad\\
\nonumber
&
- \sum_{k=2}^N  (k-1) k (\la +\p)^{k-2} \p ( 1 \otimes  v_{\emptyset ,k})\\
\nonumber
&+ \sum_{k=2}^N  \bigg(2k (\la +\p)^{k-1 } + \la k (k-1)(\la
+\p)^{k-2 } \bigg) (1\otimes E_{00}(v_{\emptyset ,k}))
\end{eqnarray}
Then, viewed as a polynomial in $\la$, we have
$E_{00}(v_{\emptyset ,k})$ for all $k\geq 2$. Hence the last term
in (\ref{eq:3.10}) is 0. Now, viewed as a polynomial in
$(\la+\p)$, we obtain $v_{\emptyset ,k}=0$ for all $k\geq 2$,
finishing the proof.
\end{proof}

Observe that the coefficient in $(\la+\p)^0$ in (\ref{eq:3.10}),
gives us the following useful identity
\begin{equation} \label{eq:e}
E_{00}(v_{I,1})=0 \qquad \hbox{ for all } I.
\end{equation}

We will use the following notation: $[1,n]=\{1,\ldots , n\}$.
\begin{lemma} \label{lem:deg2} If $m$ is a singular vector, then
\begin{displaymath}
m= \p (\xi_{[1,n]} \otimes w) + \sum_{l=1}^n
(\xi_{[1,n]-\{l\}}\otimes v_{l }) + \xi_{[1,n]} \otimes v_0.
\end{displaymath}
\end{lemma}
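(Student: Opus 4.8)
The plan is to start from the reduced form supplied by Lemma~\ref{lem:deg}, namely
$$m=\sum_I \xi_I\otimes v_{I,0}+\p\sum_I \xi_I\otimes v_{I,1},$$
together with the relations $E_{0j}(v_{I,1})=0$ and $E_{00}(v_{I,1})=0$ (the latter being (\ref{eq:e})), and to determine which coefficients $v_{I,0},v_{I,1}$ are forced to vanish. The whole argument rests on extracting, from the singular-vector conditions (S3) and (S5), the coefficient of the highest power of $\p$ that occurs, and then reading off the vanishing of coefficients from the elementary fact that $\xi_I\xi_J\neq 0$ in $\Lambda(n)$ exactly when $I\cap J=\emptyset$, in which case $\xi_I\xi_J=\pm\xi_{I\cup J}$.

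First I would treat the degree-one part $m_1=\sum_I\xi_I\otimes v_{I,1}$. Applying condition (S3) with $f=\xi_i$ for a single $i\in[1,n]$, and using $f_\la(\p m_1)=(\p+\la)f_\la m_1$ together with (\ref{eq:3.7}), I expand $\frac{d}{d\la}(f_\la m)|_{\la=0}$ and collect the coefficient of $\p^1$. The term coming from $\frac{d}{d\la}(f_\la m_1)$ carries a factor $E_{00}(v_{I,1})$, which vanishes by (\ref{eq:e}); what survives is exactly $\sum_I\xi_i\xi_I\otimes v_{I,1}=0$ for every $i$. Comparing coefficients of $\xi_K$ gives $v_{I,1}=0$ for every $I$ not containing $i$, and letting $i$ range over $[1,n]$ forces $v_{I,1}=0$ unless $I=[1,n]$. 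Thus $m_1=\xi_{[1,n]}\otimes w$.

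Next I would treat the degree-zero part $m_0=\sum_I\xi_I\otimes v_{I,0}$, using condition (S5) with $f=\xi_I$, $|I|=2$. Evaluating $(f_\la m)|_{\la=0}$ via (\ref{eq:3.7}) and again collecting the coefficient of $\p^1$, the contribution of the degree-one part is $\p f_0 m_1$, which now vanishes because $m_1$ is supported on $\xi_{[1,n]}$ and both $\xi_I\xi_{[1,n]}$ and $(\p_i\xi_I)\xi_{[1,n]}$ are zero in $\Lambda(n)$. Hence the $\p^1$-coefficient reduces to $\sum_J\xi_I\xi_J\otimes v_{J,0}=0$ for each two-element $I$, and comparing coefficients of $\xi_K$ gives $v_{J,0}=0$ whenever $J$ is disjoint from some two-element set, i.e. whenever $|J|\le n-2$. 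Therefore $m_0=\sum_{l=1}^n\xi_{[1,n]-\{l\}}\otimes v_l+\xi_{[1,n]}\otimes v_0$, and combining with the previous paragraph yields the asserted form of $m$.

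The computations are routine coefficient bookkeeping, so there is no deep obstacle; the two points that require care are organizational. First, the step for $m_0$ must be carried out after the step for $m_1$, since the cancellation of $\p f_0 m_1$ relies on already knowing that $m_1$ is concentrated on the top form $\xi_{[1,n]}$. Second, one must correctly isolate the top-$\p$-degree coefficient in each condition and keep track of the Grassmann signs $(-1)^{p(\xi_I\xi_J)}$ produced by (\ref{eq:3.7}); these signs are immaterial to the vanishing conclusions but must be handled consistently so that the clean identities $\sum_I\xi_i\xi_I\otimes v_{I,1}=0$ and $\sum_J\xi_I\xi_J\otimes v_{J,0}=0$ emerge. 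I expect the only genuinely substantive input to be the prior relation (\ref{eq:e}), which is precisely what removes the spurious $E_{00}$-term and makes the degree-one extraction collapse to the desired identity.
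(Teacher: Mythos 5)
Your argument is correct, and it works with the same raw ingredients as the paper's proof --- extracting coefficients of powers of $\p$ from the singular-vector conditions (S3) and (S5) and using the identity (\ref{eq:e}) --- but it organizes the deductions in a different and somewhat more economical order. The paper first applies (S5) with $f=\xi_l\xi_k$ and reads off the $\p^2$-coefficient to get $v_{I,1}=0$ for $|I|\le n-2$, then applies (S5) with $f=\xi_l\xi_k\xi_s$ to get $v_{I,0}=0$ for $|I|\le n-3$, derives the auxiliary relation (\ref{eq:3.12}) linking $E_{i0}(v_{[1,n]-\{j\},1})$ to $v_{[1,n]-\{i,j\},0}$, and only then uses (S3) with $f=\xi_i$ together with (\ref{eq:e}) to kill $v_{I,1}$ for $|I|=n-1$ and, via (\ref{eq:3.12}), $v_{I,0}$ for $|I|=n-2$. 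You instead observe that (S3) with $f=\xi_i$ plus (\ref{eq:e}) already yields $\sum_I\xi_i\xi_I\otimes v_{I,1}=0$ for every $i$ with no prior reduction, which kills the whole degree-one part except the top monomial in one stroke; once $m_1$ is concentrated on $\xi_{[1,n]}$, its contribution to $(f_\la m)|_{\la=0}$ for $|f|=2$ vanishes identically (both $\xi_J\xi_{[1,n]}$ and $(\p_i\xi_J)\xi_{[1,n]}$ are zero in $\Lambda(n)$), so the $\p$-coefficient of (S5) decouples and gives $v_{I,0}=0$ for $|I|\le n-2$ directly. This buys you a shorter proof that avoids the $|f|=3$ computation and the relation (\ref{eq:3.12}) altogether, and since (\ref{eq:3.12}) is used only inside this lemma's proof, nothing is lost downstream; your remark that the two steps must be performed in this order (degree-one part first) is exactly the point that makes the shortcut legitimate.
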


\begin{proof} By the previous Lemma, we have
\begin{displaymath}
m=  \sum_I \bigg[\p (\xi_I\otimes v_{I,1}) +  (\xi_I\otimes
v_{I,0})\bigg]
\end{displaymath}
Now (S5) gives us

\begin{eqnarray} \label{eq:3.11}
& 0=(f_\la m)_{|_{\la=0}} = \hskip 7cm
\\
\nonumber
&  =\sum_{I }  \bigg( (-\p) (f\xi_{I} \otimes v_{I,0} ) -
(-1)^{|I|} \sum_{i=1}^n (\p_i f) \xi_{I} \otimes E_{i0}(v_{I,0})
\bigg) \qquad\qquad
\qquad\\
\nonumber
&+\sum_{I }  \bigg( (-\p^2) (f\xi_{I} \otimes v_{I,1} ) -
(-1)^{|I|} \sum_{i=1}^n \p ((\p_i f) \xi_{I} \otimes
E_{i0}(v_{I,1})) \bigg)
\end{eqnarray}
for any $f=\xi_J$ with $|J|>1$. Considering the coefficient of
$\p^2$ and taking $f=\xi_l\xi_k$, we obtain $v_{I,1}=0$ for all
$I$ with $|I|\leq n-2$. Using this and considering the coefficient
of $\p$ with $f=\xi_l\xi_k\xi_s$, we obtain $v_{I,0}=0$ for all
$I$ with $|I|\leq n-3$. With this reduction, the coefficient of
$\p$ with $f=\xi_i\xi_j$ $(i \neq j)$ is
\begin{displaymath}
0= - (\xi_{[1,n]} \otimes v_{[1,n]-\{i,j\},0} ) + (-1)^{n-1}
(\xi_{[1,n]} \otimes E_{i0}(v_{[1,n]-\{j\},1}) ),
\end{displaymath}
obtaining
\begin{equation} \label{eq:3.12}
E_{i0}(v_{[1,n]-\{j\},1}) = (-1)^{n-1} v_{[1,n]-\{i,j\},0}\qquad
\hbox{ for all }i\neq j .
\end{equation}

Computing (S3), we have
\begin{eqnarray} \label{eq:3.13}
\nonumber
& 0=(f_\la m)'_{|_{\la=0}} = \hskip 8cm
\\
\nonumber
& =\sum_{|I|\geq n-1 } \bigg( (-\p) (f\xi_{I} \otimes v_{I,1} ) -
(-1)^{|I|} \sum_{i=1 }^n  (\p_i f) \xi_{I} \otimes E_{i0}(v_{I,1}) \bigg) \\
%
%
& + \ \p\ \sum_{|I|\geq n-1}    f \xi_{I} \otimes E_{00}(v_{I,1})
+
 \sum_{|I|\geq n-2}    f
\xi_{I} \otimes E_{00}(v_{I,0}) .\nonumber
\end{eqnarray}
Now, taking $f=\xi_i$, using (\ref{eq:e}) and considering the
coefficient in $\p$, we have
\begin{displaymath}
v_{I,1}=0 \qquad \hbox{ for all }|I|=n-1  ,
\end{displaymath}
and using it in (\ref{eq:3.12}), we have
\begin{displaymath}
v_{I,0}=0 \qquad \hbox{ for all }|I|=n-2  ,
\end{displaymath}
finishing the proof.
\end{proof}

Let $\xi_*:=\xi_{[1,n]}$ and $\xi^l:=\xi_{[1,n]-\{l\}}$. Due to
the previous lemma, any singular vector has the form
\begin{displaymath}
m= \p (\xi_* \otimes w) + \sum_{l=1}^n   (\xi^l\otimes v_{l }) +
\xi_* \otimes v_0.
\end{displaymath}
Then,  it is easy to see that conditions (s1-3) are equivalent  to
the following list

\

\noindent (s1):
\begin{eqnarray} \label{eq:v1}
E_{00}(w)=0, \qquad \qquad \qquad \quad \
\\
\label{eq:v2}
E_{0i}(w)  =0, \qquad    i=1,\ldots ,n.
\end{eqnarray}
\noindent(s2):
\begin{eqnarray} \label{eq:v3}
E_{ji}(w)+E_{0i}(v_j)=0, & \qquad     i,j=1,\ldots ,n ,
\\
\label{eq:v4}
E_{0i}(v_0)=0, & \qquad     i=1,\ldots ,n ,
\\
\label{eq:v5}
E_{0i}(v_j)=0, & \qquad          i,j=1,\ldots ,n; \ \ , i\neq j ,
\\
\label{eq:v6}
E_{0j}(v_j)= -w, & \qquad      j=1,\ldots ,n.
\\
\label{eq:v7}
E_{i0}(w)=E_{00}(v_i), & \qquad     i=1,\ldots ,n ,
\end{eqnarray}
\noindent (s3):
\begin{eqnarray}\label{eq:v8}
E_{i0}(v_j)= E_{j0}(v_i), & \qquad     i,j=1,\ldots ,n; \ \ i \neq
j.
\\
\label{eq:v9}
E_{ij}(v_l)= E_{lj}(v_i), & \qquad     i,j,l=1,\ldots ,n; \ \ i
\neq l.
\\
\label{eq:v10}
E_{ij}(w)=0, & \qquad    i,j=1,\ldots ,n; \ \ i< j ,
\\
\label{eq:v11}
E_{ij}(v_0)=0, & \qquad    i,j=1,\ldots ,n; \ \ i < j ,
\\
\label{eq:v12}
E_{ij}(v_l)=0, &\qquad  i,j,l=1,\ldots ,n; \ \ i < j, l\neq j ,
\\
\label{eq:v13}
E_{ij}(v_j)= v_i, & \qquad     i,j=1,\ldots ,n; \ \ i < j.
\end{eqnarray}

Now replacing (\ref{eq:v5}) and (\ref{eq:v6}) on (\ref{eq:v3}), we
obtain
\begin{equation} \label{eq:v14}
E_{ij}(w) = \delta_{ij} \ w,  \qquad     i,j=1,\ldots ,n .
\end{equation}

Recall that we are considering the basis $(\p_0=\p_t)$
\begin{displaymath}
t\p_0 ; t\p_0 +\xi_1 \p_1 , \ldots , t\p_0 +\xi_n \p_n
\end{displaymath}
for the Cartan subalgebra $H$ in $W(1,n)_+$, and we write the
weight of an eigenvector for the Cartan subalgebra $H$ as a tuple
\begin{equation} \label{eq:mu}
\bar\mu=(\mu; \la_1 , \ldots , \la_n )
\end{equation}
for the corresponding eigenvalues of the basis.

Using the above conditions, we can prove the following

\begin{proposition} \label{prop:s1}
Let $n\geq 2$ and $m$ be a non-trivial singular vector in {\rm
Tens}$\ V$ with weight $\bar\mu_m$, then we have one of the
following:

\vskip .3cm

(a) $m=\xi^n \otimes v_n$, $\bar\mu_m=(0;0,\ldots ,0, -k)$ with
$k\geq 0$,   $v_n$ is a highest weight vector in $V$ with weight
$(0;0,\ldots ,0, -k-1)$, and  $m$ is uniquely defined by $v_n$.

\vskip .3cm

(b) $m=\sum_{l=1}^n \xi^l \otimes v_l$, $\bar\mu_m=(0;k,1, \ldots
,1)$ with $k\geq 2$,   $v_1$ is a highest weight vector in $V$
with weight $(0;k-1,1,\ldots ,1)$, and $m$ is uniquely defined by
$v_1$.

\vskip .3cm

(c) $m=\p (\xi_* \otimes w) + \sum_{l=1}^n \xi^l \otimes v_l$,
$\bar\mu_m=(-1;0, \ldots ,0)$,   $w$ is a highest weight vector in
$V$ with weight $(0;1,\ldots ,1)$, and $m$ is uniquely defined by
$w$.

\vskip .5cm

%
%
%
%

\end{proposition}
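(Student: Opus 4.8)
My plan is to read the relations \eqref{eq:v1}--\eqref{eq:v14} as a linear system in the vectors $w,v_0,v_1,\dots,v_n\in V$ and to solve it using the highest weight theory of $L_0=gl(1|n)$ acting on the irreducible module $V$. Since the conditions (S1)--(S5) and the normal form of Lemma~\ref{lem:deg2} are preserved by the Cartan subalgebra $H$, I first decompose $\mathrm{Sing}_+(\mathrm{Tens}(V))$ into $H$-weight spaces and assume $m$ is a weight vector of weight $\bar\mu_m$. Then every nonzero summand among $\p(\xi_*\otimes w)$, $\xi^l\otimes v_l$, $\xi_*\otimes v_0$ must carry the weight $\bar\mu_m$; computing the action of the Cartan basis $t\p_0$ and $t\p_0+\xi_i\p_i$ on these monomials (here $\p$ lowers the $E_{00}$-eigenvalue by one, and $\xi_i\p_i$ acts on $\xi_I$ according to whether $i\in I$) expresses the weights of $w,v_l,v_0$ in terms of $\bar\mu_m$.

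Next I extract from the system which vectors are forced to be highest weight. Equations \eqref{eq:v1},\eqref{eq:v2},\eqref{eq:v14} give $E_{00}w=E_{0i}w=0$ and $E_{ij}w=\delta_{ij}w$, so a nonzero $w$ is annihilated by all raising operators $E_{0i}$ and $E_{ij}$ ($i<j$) and is highest weight of weight $(0;1,\dots,1)$. Equations \eqref{eq:v4},\eqref{eq:v11} show a nonzero $v_0$ is highest weight. Reading \eqref{eq:v5},\eqref{eq:v6},\eqref{eq:v12},\eqref{eq:v13} one finds $E_{0i}v_1=0$ and $E_{ij}v_1=\delta_{j1}v_i=0$ for $i<j$, so $v_1$ is highest weight whenever it is nonzero; and if $v_1=0$ the least index $l_0$ with $v_{l_0}\neq0$ again gives a highest weight vector $v_{l_0}$.

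The decisive step is to invoke irreducibility of $V$: it carries a unique (up to scalar) highest weight line, of one fixed weight. Comparing $E_{00}$-eigenvalues of the three types of summand ($-1$ for the $w$-term once $E_{00}w=0$, and via \eqref{eq:v7} also $-1$ for the $v_l$-terms when $w\neq0$, against $0$ when $w=0$) together with the impossibility of two highest weight vectors of distinct weights forces $v_0=0$ and allows at most one of the branches below. When $w\neq0$ one is in case (c): the weight is $(0;1,\dots,1)$, and from \eqref{eq:v6} with $E_{0j}E_{j0}w=(E_{00}+E_{jj})w=w$ (using $E_{0j}w=0$) one finds $v_j=-E_{j0}w$, so $m$ is reconstructed from $w$. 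When $w=0$ and $v_1\neq0$ one is in case (b): $v_1$ is the highest weight vector of $V$ and the lowering relations implicit in \eqref{eq:v9} (of the form $E_{ji}v_i\propto v_j$) generate the remaining $v_l$, so $m$ is determined by $v_1$. When $w=0$ and only $v_n$ survives one is in case (a): the degenerate instance of \eqref{eq:v13}, namely $E_{in}v_n=v_i=0$ for all $i<n$, makes $v_n$ highest weight with these extra annihilations. In each branch the remaining Cartan computation produces $\bar\mu_m$ and the stated weights of $w$, $v_1$, $v_n$.

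I expect the genuinely delicate part to be showing that these branches are exhaustive and mutually exclusive, that is, ruling out every intermediate solution: a highest weight $v_{l_0}$ with $1<l_0<n$, the residual pure branch $m=\xi_*\otimes v_0$, and any simultaneous occurrence of several of $w,v_0,v_l$. This rests on combining the uniqueness of the highest weight line of $V$ with the precise coefficients $1$ appearing in \eqref{eq:v13} and the symmetry \eqref{eq:v9}: one propagates the chain $v_{l_0}\to v_{l_0+1}\to\cdots$ by lowering operators and checks, using dominance of the highest weight, that consistency of the raising relations forces either $l_0=1$ (case (b)) or the collapse to $l_0=n$ (case (a)). The uniqueness assertions ("$m$ is uniquely defined by \dots") then follow at once from the explicit reconstruction formulas above.
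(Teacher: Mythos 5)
Your overall strategy coincides with the paper's: both read (\ref{eq:v1})--(\ref{eq:v14}) as a linear system, split on $w\neq 0$ versus $w=0$, use the staircase relation (\ref{eq:v13}) to locate the minimal index $l_0$ with $v_{l_0}\neq 0$, and invoke dominance of a highest weight to force $l_0\in\{1,n\}$. The paper applies dominance to the weight of $m$ itself (a singular vector generates a finite-dimensional $L_0$-module, so its weight $(0;0,\dots,0,k,1,\dots,1)$ must satisfy $\la_1\geq\cdots\geq\la_n$), whereas you apply it to the highest weight of $V$; these differ by a known shift, so both work. The reconstruction in case (c) is also the same in substance, though the paper obtains $v_i=-E_{i0}(w)$ directly from (\ref{eq:v7}) combined with $E_{00}(v_i)=-v_i$, while your route through (\ref{eq:v6}) only verifies that $-E_{j0}w$ is consistent with that one equation, not that it equals $v_j$.

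The one concrete gap is your plan for the ``residual pure branch'' $m=\xi_*\otimes v_0$. The only relations constraining $v_0$ are (\ref{eq:v4}) and (\ref{eq:v11}), so $\xi_*\otimes v_0$ with $v_0$ the highest weight vector of $V$ satisfies the entire system and genuinely \emph{is} a singular vector; no weight comparison or appeal to the uniqueness of the highest weight line will produce a contradiction. It is absent from the proposition for a different reason: under the identification $\mathrm{Tens}(V)\simeq\mathrm{Ind}^L_{L_{\geq 0}}V$ the subspace $\xi_*\otimes V$ is precisely the generating copy of $F=V$ (cf.\ Proposition~\ref{prop:d2}), so such an $m$ is a \emph{trivial} singular vector, while the proposition classifies the non-trivial ones, i.e.\ those lying in $F_+=U_+(L_-)\otimes F$ --- which forces $v_0=0$ from the outset. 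Your weight-comparison argument does correctly dispose of the mixed configurations (nonzero $v_0$ together with nonzero $w$ or $v_l$), since those would require two highest weight lines of distinct weights in the irreducible $V$; only the pure branch needs the triviality observation rather than a contradiction.
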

\begin{proof}
By computing $E_{00}\cdot m = (t\p)\cdot m$ and  using
(\ref{eq:v1}) and (\ref{eq:v7}) on it, we obtain the following
conditions:

\noindent If $w=0$, then
\begin{equation} \label{eq:3.41}
E_{00}\cdot m=0 \quad \hbox{ and }\quad  E_{00}(v_0)=0.
\end{equation}
If $w\neq 0$, then
\begin{eqnarray}  \label{eq:3.42}
E_{00}\cdot m =-m &
\\
\label{eq:43} E_{00}(v_l) =-v_l, & \qquad   l=0, \ldots , n,
\end{eqnarray}
and using (\ref{eq:v7}), in this case $(w\neq 0)$ we have
\begin{equation} \label{eq:3.43}
E_{i0}(w)=-v_i,  \qquad   i=1, \ldots , n.
\end{equation}

Similarly, observe that $E_{ii}\cdot m = (\xi_i\p_i)\cdot m $. Now
this action can be easily computed, and using (\ref{eq:v14}) on
it, we have the following:

If $w\neq 0$, then
\begin{eqnarray}  \label{eq:3.45}
E_{ii}\cdot m =m, \qquad \quad \qquad  i=1, \ldots , n,
\\
\nonumber E_{ii}(v_l)=  v_l,  \qquad     l,i=1, \ldots , n, l\neq
i,
\\
\nonumber E_{ii}(v_i)=  2v_i,  \qquad  \quad \qquad  i=1, \ldots ,
n.
\\
\nonumber E_{ii}(v_0)=  v_0,  \qquad \quad  \qquad   i=1, \ldots ,
n.
\end{eqnarray}
Using this and equations (\ref{eq:3.42}), (\ref{eq:v1}) and
(\ref{eq:v14}), we obtain for the case $w\neq 0$, that the
corresponding weights are
\begin{displaymath}
\bar\mu_m =(-1; 0,  \ldots , 0) \qquad \hbox{ and }\qquad
\bar\mu_w =(0; 1,  \ldots , 1).
\end{displaymath}
This result together with (\ref{eq:3.43}) give us the proof of
case (c) in the proposition.

\

For the rest of the proof, we assume $w=0$, let us show that the
only possible cases are (a) and (b).

Observe that replacing  (\ref{eq:v13}) in (\ref{eq:v9}), we get
\begin{equation} \label{eq:3.46}
E_{jj}(v_i)=v_i \qquad \forall i<j.
\end{equation}

Now,  equation (\ref{eq:v13}) shows that if $v_l=0$ for some $l$
with $1\leq l\leq n$, then $v_j=0$ for all $j<l$. In order to
finish the proof, we should show that only the two extreme cases
are possible, that is $v_l\neq 0$ for all $l$ or $v_l=0$ except
for $l=n$.

Now, suppose that there exist $l>1$ such that $v_j=0$ for all
$j<l$ and $v_l\neq 0$, then using (\ref{eq:v9}) we have that
\begin{equation} \label{eq:3.aa}
E_{ii}(v_k)=0,  \qquad   i<l\leq k.
\end{equation}
Then by (\ref{eq:3.aa}) and (\ref{eq:3.46}), we obtain that
\begin{displaymath}
E_{ii} \cdot m = \alpha \ m \quad \hbox{with } \alpha=0  \ \hbox{
or }\  1, \qquad \ \hbox{ if }\  i<l \ \hbox{ or }\ i>l, \ \hbox{
respectively}.
\end{displaymath}
Therefore, using this and (\ref{eq:3.41}) we get
\begin{displaymath}
\bar\mu_m =(0;0, \ldots, 0, k, 1, \ldots, 1)
\end{displaymath}
where $E_{ll} \cdot m = k \ m$. But the space $V$, from which we
are inducing  is finite-dimensional and a singular vector
generates a finite-dimensional $L_0$-submodule, then (recall
notation (\ref{eq:mu}))
\begin{displaymath}
\la_1\geq \la_2\geq \cdots \geq \la_n
\end{displaymath}
is a highest weight, and because of that  only two extreme
positions of $k$ are possible (recall that $n>1$). This gives us
the cases (a) and (b). In order to finish the proof we need to
complete the computation of  weights in each case.

If $v_1\neq 0$, then using (\ref{eq:v7}) and (\ref{eq:3.46}) we
obtain
\begin{displaymath}
\bar\mu_m =(0; k,1,  \ldots , 1) \qquad \hbox{ and }\qquad
\bar\mu_{v_1} =(0; k-1,1, \ldots , 1), \qquad \hbox{ with } k\geq
1.
\end{displaymath}
getting case (b).

If $v_l= 0$ except for $l=n$, then using (\ref{eq:v7}) and
(\ref{eq:3.aa}) we obtain
\begin{displaymath}
\bar\mu_m =(0; 0,  \ldots ,0, k) \qquad \hbox{ and }\qquad
\bar\mu_{v_n} =(0; 0, \ldots , 0 , k-1), \qquad \hbox{ with }
k\leq 0.
\end{displaymath}
obtaining case (c). Case (d) is immediate.
\end{proof}
%

%
%

\subsection{Irreducible induced $W(1,n)_+$-modules}\label{subsec:w4}

In this subsection we consider $L=W(1,n)_+$, with $n\geq 2$. Now,
we have the following:

\begin{theorem} \label{th:r1} Let $n\geq 2$ and $F$ be an irreducible
$L_0$-module  with highest weight $\bar\mu_*$. Then the
$L$-modules {\rm Ind}$_{\l0}^L F$ are irreducible finite
continuous modules except for the following cases:

\vskip .3cm

(a) $\bar\mu_*=(0; 0,\ldots ,0,-m), m\geq 0$, where {\rm
Ind}$_{\l0}^L F= \Theta^m_+$ and the image $d^\#\Theta_+^{m+1}$ is
the only non-trivial proper submodule.

\vskip .3cm

(b) $\bar\mu_*=(0;k ,1,\ldots ,1), k\geq 1$, where {\rm
Ind}$_{\l0}^L F= \Omega^k_-$. For $k\geq 2$ the image $d
\Omega_-^{k-1}$ is the only non-trivial proper submodule. For
$k=1$, both {\rm Im}$(d)$ and {\rm Ker}$(d)$ are proper
submodules. {\rm Ker}$(d)$ is a maximal submodule.

\vskip .3cm

\end{theorem}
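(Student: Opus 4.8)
The plan is to leverage Theorem~\ref{th:1}(e): the induced module $\mathrm{Ind}_{\l0}^L F = M(F)$ is irreducible if and only if $F$ is irreducible (which is given) and $M(F)$ has no non-trivial singular vectors. Since $F$ is an irreducible $L_0$-module with a fixed highest weight $\bar\mu_*$, and we have already classified in Proposition~\ref{prop:s1} exactly which weights $\bar\mu_*$ can support a non-trivial singular vector, the irreducibility of $\mathrm{Ind}_{\l0}^L F$ holds precisely when $\bar\mu_*$ is \emph{not} one of the three exceptional weights listed there. Thus the first step is to read off from Proposition~\ref{prop:s1} that the only reducible cases occur when $\bar\mu_* = (0;0,\ldots,0,-m)$ with $m\geq 0$ (case (a) of the proposition), $\bar\mu_*=(0;k,1,\ldots,1)$ with $k\geq 1$ (case (b)), or $\bar\mu_*=(-1;0,\ldots,0)$ (case (c)).

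Next I would reconcile these three weights with the two families in the theorem statement. The weight $(0;0,\ldots,0,-m)$ matches Proposition~\ref{prop:d1}(1): there $\T^m_c$ has highest weight $(0;0,\ldots,0,-m)$ and $\T^m_+ = \mathrm{Ind}\,\T^m_c$ by Proposition~\ref{prop:d1}(2). So for case (a) we identify $\mathrm{Ind}_{\l0}^L F = \T^m_+$. For the second family, Proposition~\ref{prop:d2}(1) gives $\bar\O^k_c$ the highest weight $(0;k,1,\ldots,1)$ for $k>0$ and $(-1;0,\ldots,0)$ for $k=0$, with $\O^k_- = \mathrm{Ind}_{\l0}^L \bar\O^k_c$ by Proposition~\ref{prop:d2}(2). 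Hence both the weight $(0;k,1,\ldots,1)$ with $k\geq 1$ \emph{and} the weight $(-1;0,\ldots,0)$ are realized on the modules $\O^k_-$ --- the latter being precisely the $k=0$ case. This is the key bookkeeping point: case (c) of Proposition~\ref{prop:s1} is not a separate family but is absorbed as the $k=0$ member of the $\O^k_-$ series, which is why the theorem lists only two exceptional families.

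Having matched the modules, I would then identify the maximal submodules. For case (a), by Proposition~\ref{prop:d1}(3) the dual differentials $d^\#\colon \T^{m+1}_+\to\T^m_+$ are $L$-module morphisms whose image is a non-trivial proper submodule, and the singular vector analysis (the uniqueness clause in Proposition~\ref{prop:s1}(a)) shows the space of non-trivial singular vectors is one-dimensional, so this image is the \emph{only} non-trivial proper submodule. For case (b) with $k\geq 2$, Proposition~\ref{prop:d2}(4) states $\mathrm{Ker}\,d = \mathrm{Im}\,d$ in $\O^k_-$, and again uniqueness of the singular vector forces $d\,\O^{k-1}_-$ to be the sole non-trivial proper submodule. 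For $k=1$, Proposition~\ref{prop:d2}(4) gives $\mathrm{Ker}\,d = \CC(t^{-1}dt)+\mathrm{Im}\,d$, so $\mathrm{Im}(d)$ and $\mathrm{Ker}(d)$ are distinct proper submodules; that $\mathrm{Ker}(d)$ is maximal follows since the quotient $\O^1_-/\mathrm{Ker}\,d \cong \mathrm{Im}\,d \subset \O^0_-$ is irreducible (it carries no further non-trivial singular vector).

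I expect the main obstacle to be the passage from ``$\mathrm{Sing}_+(M(F))$ generates a proper submodule'' to ``that submodule is \emph{maximal} and the listed one is the \emph{only} non-trivial proper submodule.'' As flagged in Remark~\ref{rmk:1}, the submodule generated by non-trivial singular vectors need not give an irreducible quotient in general, since new singular vectors can appear in the quotient. The careful step is therefore to verify that no such secondary singular vectors arise: concretely, one checks that the quotient $\T^m_+/d^\#\T^{m+1}_+$ (respectively $\O^k_-/d\,\O^{k-1}_-$) is itself an induced or tensor module whose highest weight is \emph{not} on the exceptional list of Proposition~\ref{prop:s1}, so by Theorem~\ref{th:1}(e) it is already irreducible. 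Combined with the exactness statements in Propositions~\ref{prop:d1}(3) and~\ref{prop:d2}(4), this pins down the submodule lattice and completes the proof.
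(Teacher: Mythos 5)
Your overall strategy is the paper's: reduce irreducibility to the absence of non-trivial singular vectors via Theorem~\ref{th:1}(e), read the dangerous weights off Proposition~\ref{prop:s1}, realize the corresponding induced modules as $\T^m_+$ and $\O^k_-$ via Propositions~\ref{prop:d1} and~\ref{prop:d2}, and then argue maximality. But two of your steps are genuinely flawed. First, you conflate the weight $\bar\mu_m$ of a singular vector with the highest weight $\bar\mu_*$ of the module $F$ one induces from. In case (c) of Proposition~\ref{prop:s1} the singular vector has weight $(-1;0,\ldots,0)$ but lives in ${\rm Tens}(V)$ where $V$ has highest weight $(0;1,\ldots,1)$; so it is an \emph{extra} singular vector inside $\O^1_-$, the $k=1$ member of family (b), and that is exactly why $\O^1_-$ has the two distinct submodules ${\rm Im}(d)$ and ${\rm Ker}(d)$. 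Your claim that the weight $(-1;0,\ldots,0)$ is ``absorbed as the $k=0$ member of the $\O^k_-$ series'' effectively puts $\O^0_-$ on the reducible list, whereas $\O^0_-$ is irreducible (Remark~\ref{rm:8}); with this reading you can explain neither the absence of $(-1;0,\ldots,0)$ from the theorem's list nor the special behaviour of the $k=1$ case.

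Second, your maximality argument does not go through as stated. You assert that $\T^m_+/d^\#\T^{m+1}_+$ (resp.\ $\O^k_-/d\,\O^{k-1}_-$) ``is itself an induced or tensor module'' and invoke Theorem~\ref{th:1}(e). By the exactness in Propositions~\ref{prop:d1}(3) and~\ref{prop:d2}(4), that quotient is isomorphic to the image of the next differential, which is a \emph{non-trivial proper submodule} of the next induced module (for instance $\O^1_-/{\rm Ker}\,d\simeq {\rm Im}\,d\subset\O^{2}_-$, not $\O^0_-$ as you wrote), and Theorem~\ref{th:1}(e) does not apply to such a proper submodule. The paper closes this gap differently: a submodule of an induced module is irreducible as soon as it is generated by any highest singular vector it contains; since each of these induced modules carries at most one non-trivial singular vector and the images and kernels in question are generated precisely by those vectors, they are irreducible, hence the quotients are irreducible. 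You correctly flagged the danger of secondary singular vectors in the quotient (Remark~\ref{rmk:1}), but the mechanism you propose to rule them out is not available.
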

\begin{remark} \label{rm:8} Let $F$ be an irreducible
$L_0$-module  with highest weight $\bar\mu_*=(-1; 0,\ldots ,0)$.
Then  {\rm Ind}$_{\l0}^L F= \Omega^0_-$ is an irreducible
$L$-module. Note that the image of $d: \Omega^0_- \to \Omega^1_-$
is the submodule in $\Omega^1_-$ generated by the singular vector
correponding to the case (c) in Proposition~\ref{prop:s1}, but it
is not a maximal submodule (see Proposition~\ref{prop:d2} (4)).
\end{remark}
\vskip .3cm

\begin{proof}
We know from Theorem~{\ref{th:1}} that in order for
$\hbox{Ind}_{\l0}^L F$ to be reducible it has to have non-trivial
singular vectors and the possible highest weights of $F$ in this
situation are listed in Proposition~{\ref{prop:s1}} above.

The fact that the induced modules are actually reducible in those
cases is known because we have got nice realizations for these
induced modules in Propositions~{\ref{prop:d1}} and
~{\ref{prop:d2}} together with morphisms defined by $d, d^\#$, so
kernels and images of these morphisms become submodules.

The subtle thing is to prove that a submodule is really a maximal
one. We notice that in each case the factor is isomorphic to a
submodule in another induced module so it is enough to show that
the submodule is irreducible. This can be proved as follows, a
submodule in the induced module is irreducible if it is generated
by any highest singular vector that it contains. We see from our
list of non-trivial singular vectors that there is at most one
such a vector for each case and the images and kernels in question
are exactly generated by those vectors, hence they are
irreducible.
\end{proof}

\begin{corollary}\label{cor:r2} The theorem gives us a description of
finite continuous irreducible $W(1,n)_+$-modules for $n\geq 2$.
Such a module is either $\hbox{\rm Ind}_{\l0}^L F$ for an
irreducible finite-dimensional $L_0$-module $F$ where the highest
weight of $F$ does not belong to the types listed in (a), (b) of
the theorem or the factor of an induced module from (a), (b)  by
its submodule $Ker(d)$.
\end{corollary}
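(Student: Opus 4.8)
The plan is to read the corollary off from Theorem~\ref{th:r1} by means of the classification in Theorem~\ref{th:1}(d). That classification gives a bijection $F\mapsto I(F)$ between irreducible finite-dimensional $L_0$-modules and irreducible $L$-modules in $P(L,\l0)$, where $I(F)$ is the quotient of $M(F)=\mathrm{Ind}_{\l0}^L F$ by its unique maximal submodule and the inverse sends $V$ to $\mathrm{Sing}(V)$. Hence every finite continuous irreducible $L$-module is $I(F)$ for a uniquely determined such $F$, and all that remains is to compute $I(F)$, i.e. to exhibit the maximal submodule of $M(F)$. First I would split according to whether $M(F)$ is irreducible, using Theorem~\ref{th:r1} to decide this from the highest weight $\bar\mu_*$ of $F$.

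In the first case $\bar\mu_*$ is not one of the weights listed in (a), (b) of Theorem~\ref{th:r1}; that theorem then asserts $M(F)$ is already irreducible, so its maximal submodule is $0$ and $I(F)=M(F)=\mathrm{Ind}_{\l0}^L F$, which is the first alternative. (Note that the weight $(-1;0,\ldots,0)$ of Remark~\ref{rm:8} is not exceptional, consistent with $\Omega^0_-$ being irreducible.) In the second case $\bar\mu_*$ is of type (a) or (b), and Theorem~\ref{th:r1} identifies the unique non-trivial proper submodule of $M(F)$ explicitly: in case (a) it is $d^\#\Theta^{m+1}_+\subset\Theta^m_+\cong M(F)$, which by Proposition~\ref{prop:d1}(3) equals $\mathrm{Ker}(d^\#\colon\Theta^m_+\to\Theta^{m-1}_+)$; in case (b) with $k\geq 2$ it is $d\,\Omega^{k-1}_-\subset\Omega^k_-\cong M(F)$, which by Proposition~\ref{prop:d2}(4) equals $\mathrm{Ker}(d\colon\Omega^k_-\to\Omega^{k+1}_-)$; and for $k=1$ the theorem states directly that $\mathrm{Ker}(d)$ is maximal. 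In each exceptional case the maximal submodule is therefore the kernel of the appropriate (restricted-dual) de Rham differential, so $I(F)=M(F)/\mathrm{Ker}(d)$, which is the second alternative.

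The only point that needs justifying is that the submodules named by Theorem~\ref{th:r1} are genuinely the \emph{maximal} ones, equivalently that the displayed quotients are irreducible. For the weights in (a) and in (b) with $k\geq 2$ this is automatic, since Theorem~\ref{th:r1} asserts the named submodule is the \emph{only} non-trivial proper submodule of $M(F)$ and hence is forced to be maximal; for $k=1$ it is precisely the explicit maximality statement in that theorem. Thus no further singular-vector analysis is needed at this stage, and the two alternatives above exhaust all irreducible modules in $P(L,\l0)$, which is the content of the corollary.
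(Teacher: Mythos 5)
Your argument is correct and is essentially the paper's own (implicit) proof: the corollary is read off from Theorem~\ref{th:r1} via the bijection of Theorem~\ref{th:1}(d), with the maximal submodule in each degenerate case being exactly the unique non-trivial proper submodule named in that theorem (and, for $k=1$, its explicit maximality claim). The only cosmetic caveat is that for $m=0$ in case (a) the submodule $d^{\#}\Theta^{1}_{+}$ cannot literally be rewritten as $\mathrm{Ker}(d^{\#}\colon\Theta^{0}_{+}\to\Theta^{-1}_{+})$ since there is no $\Theta^{-1}_{+}$, but this is the same abuse of notation the corollary itself commits in writing $\mathrm{Ker}(d)$ uniformly.
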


%
%

\subsection{Finite irreducible $W_n$-modules}\label{subsec:w5}

In order to give an explicit construction and classification, we
need the following notation. Recall that $W(1,n)$ acts by
derivations on the algebra of differential forms $\O=\O(1,n)$, and
note that this is a conformal module by taking the family of
formal distributions
\begin{displaymath}
E=\{\delta(z-t)\omega \hbox{ and }\delta(z-t)\omega \ dt \ | \
\omega\in \O(n)\}
\end{displaymath}
Translating this and all other attributes of differential forms,
like de Rham differential, etc. into the conformal algebra
language, we arrive to the following definitions.

Recall that given an algebra $A$, the associated current formal
distribution algebra is $A[t,t^{-1}]$ with the local family
$F=\{a(z)=\sum_{n\in \ZZ} (a t^n) z^{-n-1} = a \delta(z-t)\}_{a\in
A}$. The associated conformal algebra is Cur$A=\CC [\p]\otimes A$
with multiplication defined by $a_\la b=ab$ for $a,b\in A$ and
extended using sesquilinearity. This is called the {\it current
conformal algebra}, see \cite{K1} for details.

The conformal algebra of differential forms $\O_n$ is the current
algebra over the commutative associative superalgebra $\O(n)
+\O(n)\ dt$ with the obvious multiplication and parity, subject to
the relation $(dt)^2=0$:
\begin{displaymath}
\O_n=\hbox{Cur}(\O(n) +\O(n)\ dt).
\end{displaymath}
The de Rham differential $\tilde d$ of $\O_n$ (we use the tilde in
order to distinguish it from the de Rham differential $d$ on
$\O(n)$) is a derivation of the conformal algebra $\O_n$ such
that:
\begin{equation} \label{eq:d1}
\tilde d(\omega_1 +\omega_2 dt)=d\omega_1 + d\omega_2 dt -
(-1)^{p(\omega_1)} \p (\omega_1 dt).
\end{equation}
here and further $\omega_i\in \O(n)$.

The standard $\ZZ_+$-gradation $\O(n)=\oplus_{j\in \ZZ_+} \O(n)^j$
of the superalgebra of differential forms by their degree induces
a $\ZZ_+$-gradation
\begin{displaymath}
\O_n=\oplus_{j\in \ZZ_+} \O_n^j, \qquad \hbox{ where }
\O_n^j=\CC[\p]\otimes (\O(n)^j +\O(n)^{j-1}\ dt),
\end{displaymath}
so that $\tilde d : \O^j_n\to \O_n^{j+1}$.

The contraction $\iota_D$ for $D=a+f\in W_n$ is a conformal
derivation of $\O_n$ such that:
\begin{eqnarray}  \label{eq:d2}
\nonumber &(\tilde L_a)_\la (\omega_1 +\omega_2 dt) = L_a\omega_1
+ (L_a\omega_2) dt,
\\
&(\tilde L_f)_\la \omega = -(\p +\la)(f\omega),
\\
&(\tilde L_f)_\la (\omega dt) = (-1)^{p(f)+p(\omega)} (df)\omega -
\p(f \omega dt). \nonumber
\end{eqnarray}

The properties of $\O(1,n)$ imply the corresponding properties of
$\O_n$ given by the following proposition.

\begin{proposition} \label{prop:d}
\alphaparenlist
\begin{enumerate}
\item 
$\tilde d^2=0$.
\item 
The complex $(\O_n, \tilde d)=\{ 0\to \O_n^0 \to \cdots \to \O_n^j
\to \cdots\} $ is exact at all terms $\O_n^j$, except for $j=1$.
One has: {\rm Ker} $\tilde d_{|_{\O_n^1}} = \tilde d \O_n^0 \oplus
\CC dt$.
\item 
$\iota_{D_1} \iota_{D_2} + p(D_1, D_2) \iota_{D_2}\iota_{D_1} =0$.
\item 
$\tilde L_D\tilde d = (-1)^{p(D)} \tilde d \tilde L_D$.
\item 
$\tilde L_D= \tilde d \iota_D + (-1)^{p(D)} \iota_D \tilde d$.
\item 
The map $D\mapsto \tilde L_D$ defines a $W_n$-module structures on
$\O_n$, preserving the $\ZZ_+$-gradation and commuting with
$\tilde d$.
\end{enumerate}
\end{proposition}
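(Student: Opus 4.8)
The plan is to regard all six assertions as the conformal shadows of the classical Cartan calculus on the algebra $\O(1,n)$ of Laurent differential forms on the $(1|n)$-superline. Under the current construction one has $\O_n=\mathrm{Cur}(\O(n)+\O(n)dt)$, where the formal variable $\p$ plays the role of $-\p_t$; the conformal operators $\tilde d$, $\iota_D$, $\tilde L_D$ defined by (\ref{eq:d1}) and (\ref{eq:d2}) are precisely the generating-function repackagings of the ordinary de Rham differential $d_{\mathrm{dR}}=d+dt\,\p_t$, the contractions $\iota_X$, and the Lie derivatives $L_X$ for $X\in W(1,n)$. Since on $\O(1,n)$ one has the standard super-identities $d_{\mathrm{dR}}^2=0$, $\iota_X\iota_Y+(-1)^{p(X)p(Y)}\iota_Y\iota_X=0$, and Cartan's formula $L_X=d_{\mathrm{dR}}\iota_X+(-1)^{p(X)}\iota_X d_{\mathrm{dR}}$, parts (a), (c), (e) follow at once. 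Equivalently --- and this is what I would actually write out --- each of (a), (c), (e) is $\cp$-sesquilinear and local, so it may be checked by a bounded direct computation on the generators $\omega_1+\omega_2 dt$ using only (\ref{eq:d1}) and (\ref{eq:d2}); the only point to watch is the $\p$-correction term in $\tilde d$, whose contribution cancels by $d^2=0$ together with $[\p,d]=0$.

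Part (d) I would deduce formally from (a) and (e): substituting $\tilde L_D=\tilde d\iota_D+(-1)^{p(D)}\iota_D\tilde d$ and using $\tilde d^2=0$ gives $\tilde d\tilde L_D=(-1)^{p(D)}\tilde d\iota_D\tilde d=(-1)^{p(D)}\tilde L_D\tilde d$, which is equivalent to (d). For (f), this same relation is exactly the asserted commutation of $\tilde L_D$ with $\tilde d$; that $D\mapsto\tilde L_D$ is a $W_n$-module structure amounts to the module axioms $(M1)_\la$ and $(M2)_\la$, and $(M2)_\la$ is the conformal form $[(\tilde L_D)_\la,(\tilde L_{D'})_\mu]=(\tilde L_{[D_\la D']})_{\la+\mu}$ of the classical relation $[L_X,L_Y]=L_{[X,Y]}$, again verifiable on generators. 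Preservation of the $\ZZ_+$-grading is immediate from (\ref{eq:d2}): each summand of $(\tilde L_a)_\la$ and $(\tilde L_f)_\la$ maps $\O_n^j$ into $\O_n^j$, the term $(df)\omega$ having the same total degree as $\omega dt$.

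The real content is (b). Writing a form as $\omega_1+\omega_2 dt$ with $\omega_i\in\cp\otimes\O(n)$, formula (\ref{eq:d1}) becomes
\[
\tilde d(\omega_1+\omega_2 dt)=d\omega_1+\bigl(d\omega_2-(-1)^{p(\omega_1)}\p\omega_1\bigr)dt,
\]
so $(\O_n,\tilde d)$ is built from two copies of the ``spatial'' complex $(\cp\otimes\O(n),d)$ linked, from the non-$dt$ part to the $dt$ part, by multiplication by $\pm\p$; it is the mapping cone of $\p$ acting on $(\cp\otimes\O(n),d)$. The key external input is the super Poincaré lemma $H^\bullet(\O(n),d)=\CC$ concentrated in degree $0$, which follows by the super Künneth theorem from the rank-one computation (each factor $\Lambda(\xi_i)\otimes\CC[d\xi_i]$ has cohomology $\CC$ in degree $0$); tensoring with $\cp$ gives $H^\bullet(\cp\otimes\O(n),d)=\cp$ in degree $0$. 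The computation of the cone then reduces to the single fact that multiplication by $\p$ on $\cp$ is injective with cokernel $\cp/\p\cp=\CC$. Injectivity forces a closed $\omega_1+\omega_2 dt$ to be exact whenever the relevant spatial cohomology vanishes, giving exactness at $\O_n^0$ and at every $\O_n^j$ with $j\geq 2$; at $\O_n^1$ the would-be degree-$0$ class appears (note $\tilde d(1)=-\p\,dt\neq 0$, so the constant $1$ is no longer closed), and the cokernel $\CC$ is realized precisely by $[dt]$, yielding $\mathrm{Ker}\,\tilde d|_{\O_n^1}=\tilde d\,\O_n^0\oplus\CC dt$. I expect this cohomology bookkeeping --- tracking how the classical degree-$0$ class is shifted to the single surviving class $[dt]$ in degree $1$ by the injectivity of $\p$ --- to be the main obstacle; everything else is a transfer of, or a direct check against, the classical identities.
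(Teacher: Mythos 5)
Your argument is correct, and for parts (a), (c)--(f) it coincides in substance with the paper's treatment: the paper simply declares everything except (b) to be a direct verification, and your formal deduction of (d) from (a) and (e) via $\tilde d\tilde L_D=(-1)^{p(D)}\tilde d\iota_D\tilde d=(-1)^{p(D)}\tilde L_D\tilde d$ is a clean way to organize that check. For (b), however, you take a genuinely different route. The paper follows Proposition 3.2.2 of \cite{K1}: it constructs an explicit $\cp$-linear homotopy operator $K$ (contraction in the last odd variable, $K(d\xi_n\,\omega)=\xi_n\omega$) and a projector $\epsilon$ with $K\tilde d+\tilde dK=1-\epsilon$, then inducts on $n$ to reduce any closed form to $\tilde d\omega_1+P(\p)\,dt$, and finishes by noting that $P(\p)\,dt$ is exact iff $\p$ divides $P$. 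You instead identify $(\O_n,\tilde d)$ as the mapping cone of multiplication by $\pm\p$ on $(\cp\otimes\O(n),d)$, invoke the super Poincar\'e lemma $H^\bullet(\O(n),d)=\CC$ in degree $0$ (via K\"unneth from the rank-one factors), and read the answer off the long exact sequence using that $\p$ acts injectively on $\cp$ with cokernel $\cp/\p\cp=\CC$. Both proofs ultimately rest on the same two facts --- acyclicity of the spatial complex away from degree $0$ and the injectivity/cokernel of $\p$ --- but your version isolates them structurally and avoids the induction, while the paper's homotopy operator is more self-contained (it proves the Poincar\'e lemma along the way rather than citing it) and yields an explicit primitive for every exact form. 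The one point you should make explicit is that the class of $dt$ is nonzero as a direct summand inside the kernel, i.e.\ $dt\notin\tilde d\O_n^0$; this is immediate since the $dt$-component of $\tilde d\omega_1$ always lies in $\p\bigl(\cp\otimes\Lambda(n)\bigr)$, and it is what upgrades $H^1\cong\CC$ to the stated decomposition ${\rm Ker}\,\tilde d_{|_{\O_n^1}}=\tilde d\O_n^0\oplus\CC\,dt$.
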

\begin{proof} Only the proof of (b) requires a comment. Following
Proposition 3.2.2 of \cite{K1}, we construct $\CC[\p]$-linear maps
$K:\O_n \to \O_n$ (a homotopy operator) and $\epsilon :\O_n \to
\O_n$ by the formulas $(\omega\in \O(n) +\O(n) dt)$:
\begin{eqnarray*}
&K(d \xi_n \omega) =\xi_n \omega, \quad K(\omega)=0 \qquad
\hbox{ if $\omega$ does not involve } d\xi_n,
\\
&\epsilon (d \xi_n \omega) =\epsilon(\xi_n \omega)=0 , \quad
\epsilon (\omega)=\omega  \qquad \hbox{ if $\omega$ does not
involve both $ d\xi_n$ and $\xi_n$}.
\end{eqnarray*}
One checks directly that
\begin{displaymath}
K \tilde d +\tilde d K=1-\epsilon.
\end{displaymath}
Therefore, if $\omega\in \O_n$ is a closed form, we get $\omega=
\tilde d(K\omega)+ \epsilon(\omega)$. It follows  by induction on
$n$ that $\omega =\tilde d \omega_1 + P(\p) dt$ for some
$\omega_1\in\O_n$ and a polynomial $P(\p)$. But it is clear from
(\ref{eq:d1}) that $P(\p) dt$ is always closed, and it is exact
iff $P(\p)$ is divisible by $\p$.
\end{proof}

Since the extended annihilation algebra $W(1,n)^+$ is a direct sum
of $W(1,n)_+$ and a 1-dimensional Lie algebra $\CC a$, any
irreducible $W(1,n)^+$-module is obtained from a $W(1,n)_+$-module
$M$ by extending to $W(1,n)^+$, letting $a\mapsto -\alpha$, where
$\alpha\in\CC$. Translating into the conformal language (see
Proposition~{\ref{prop:1}}), we see that all $W_n$-modules are
obtained from conformal $W(1,n)_+$-modules by taking for the
action of $\p$ the action of $-\p_t +\alpha I, \alpha\in\CC$. We
denote by Tens$_\alpha V$ and $\O_{k,\alpha}, \alpha\in \CC$, the
$W_n$-modules obtained from Tens$V$ and $\O_k$ by replacing in
(\ref{eq:3.6})  and (\ref{eq:3.7}) respectively $\p$ by $\p
+\alpha$.

Now, Theorem~{\ref{th:r1}} and Corollary~{\ref{cor:r2}}, along
with Section~{\ref{sec:2}} and Propositions~{\ref{prop:1}},
~{\ref{prop:2}}, ~{\ref{prop:22}} and  ~{\ref{prop:3}} give us a
complete description of finite irreducible $W_n$-modules.

\

\begin{theorem} \label{th:w}
The following is a complete list of non-trivial finite irreducible
$W_n$-modules $(n\geq 2, \alpha\in \CC )$:
\alphaparenlist
\begin{enumerate}
\item 
{\rm Tens}$_\alpha V$, where $V$ is a finite-dimensional
irreducible $gl(1|n)$-module different from
$\Lambda^k(\CC^{1|n})^*$,  $k=1,2, \ldots $ and $\bar\Omega_c^k$
(see (\ref{eq:omega})), $k=1,2,...$,
\item 
$\O_{k,\alpha}^*/\hbox{\rm Ker }\tilde d^*, k=1,2,\ldots$ , and
the same modules with reversed parity,
\item 
$W_n$-modules dual to $(b)$, with $k>1$.
\end{enumerate}
\end{theorem}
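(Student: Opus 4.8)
The plan is to transport everything through the functorial dictionary of Section~\ref{sec:formal} and then read off the answer from the classification over the annihilation algebra. By Proposition~\ref{prop:1}, a finite irreducible $W_n$-module is the same as an irreducible continuous module over the extended annihilation algebra $W(1,n)^+$. As recorded after (\ref{eq:3.4}), $W(1,n)^+=\CC\p_t\ltimes W(1,n)_+$ is the direct sum of the ideal $W(1,n)_+$ and the central one-dimensional subalgebra $\CC\p_t$; hence on an irreducible module $\p_t$ acts by a scalar, which I write as $-\alpha$. This scalar is exactly the datum encoded by the subscript in $\hbox{Tens}_\alpha V$ and $\O_{k,\alpha}$. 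Thus the classification reduces to the finite continuous irreducible $W(1,n)_+$-modules, supplied by Theorem~\ref{th:r1} and Corollary~\ref{cor:r2}, with $\alpha$ reinserted afterwards by replacing $\p$ with $\p+\alpha$ in (\ref{eq:3.6})--(\ref{eq:3.7}).

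First I would dispose of the generic case. By Corollary~\ref{cor:r2}, whenever $F$ is an irreducible finite-dimensional $gl(1|n)$-module whose highest weight avoids the two families of Theorem~\ref{th:r1}(a),(b), the induced module $\hbox{Ind}_{\l0}^L F$ is already irreducible; under (\ref{eq:3.5})--(\ref{eq:3.7}) this is precisely $\hbox{Tens}_\alpha V$ with $V=F$. The two excluded weights are identified with differential-form modules by Propositions~\ref{prop:d1} and~\ref{prop:d2}: the weight $(0;0,\ldots,0,-k)$ gives $V\cong\T^k_c\cong\Lambda^k(\CC^{1|n})^*$, and the weight $(0;k,1,\ldots,1)$ gives $V\cong\bar\O^k_c$ (see (\ref{eq:omega})). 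Deleting these two families from the admissible $V$'s yields item (a).

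It remains to treat the degenerate weights, where by Theorem~\ref{th:r1} the irreducible module is not $\hbox{Ind}_{\l0}^L F$ itself but its quotient by the unique maximal submodule, which is a kernel/image of the de~Rham map $d$, respectively of its transpose $d^\#$. I would realize these uniformly inside the conformal de~Rham complex $(\O_{k,\alpha},\tilde d)$ of Proposition~\ref{prop:d}, using that the restricted dual $\#$ on the $W(1,n)_+$-side matches the conformal dual $*$ on the $W_n$-side (Proposition~\ref{prop:2}(b)), so that $d^\#$ becomes $\tilde d^*$. Part (b) of Proposition~\ref{prop:d} gives exactness of the complex away from $\O_n^1$; hence $\hbox{Ker}\,\tilde d^*=\hbox{Im}\,\tilde d^*$ in the relevant degrees, and the quotient coming from Theorem~\ref{th:r1}(a) can be written as $\O_{k,\alpha}^*/\hbox{Ker}\,\tilde d^*$. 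Because $dt$ is odd the complex couples the two parities, so both parity choices occur; this is item (b). Dualizing once more --- legitimate by Proposition~\ref{prop:dd} ($M^{**}\simeq M$) together with Propositions~\ref{prop:222} and~\ref{prop:22}, which convert the relevant injections with free cokernel into surjections --- produces the modules built from $d$ on the Laurent forms of Theorem~\ref{th:r1}(b); these are the duals of (b) and, by the low-degree anomaly $\hbox{Ker}\,\tilde d|_{\O_n^1}=\tilde d\O_n^0\oplus\CC dt$ (Proposition~\ref{prop:d}(b), cf.\ Proposition~\ref{prop:d2}(4)), occur only for $k>1$, giving item (c).

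The main obstacle is the final matching: verifying that the abstract irreducible quotient produced by Theorem~\ref{th:r1} really coincides with the explicit conformal module $\O_{k,\alpha}^*/\hbox{Ker}\,\tilde d^*$, and then carrying out the duality bookkeeping carefully enough to confirm that (b) and (c) are genuinely distinct, that together with (a) they exhaust the finite irreducible modules without repetition, and that the ranges of $k$ and the parity choices emerge as stated. Everything else is transcription through the equivalences of Section~\ref{sec:formal} and the already-proved Theorem~\ref{th:r1}.
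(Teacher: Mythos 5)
Your proposal is correct and follows essentially the same route as the paper: the paper's own proof is precisely the combination of Proposition~\ref{prop:1} (reduction to the extended annihilation algebra, whose one-dimensional complementary summand acts by the scalar giving $\alpha$), Theorem~\ref{th:r1} and Corollary~\ref{cor:r2} for the classification over $W(1,n)_+$, and Propositions~\ref{prop:2}, \ref{prop:22}, \ref{prop:222}, \ref{prop:dd} for the duality bookkeeping that produces items (b) and (c) (cf.\ Remark~\ref{rm:12}). The only cosmetic difference is that you call $\CC\p_t$ itself central, whereas the central element is the difference of the outer derivation and the inner $\p_t$; this does not affect the argument.
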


\begin{remark} \label{rm:12} (a) Using Proposition~\ref{prop:d2}, we have that the kernel of $\tilde d$ and the image
of $\tilde d$ coincide in $\O_k$ for $k\geq 2$. Now, since
$\O_{k+2}$ is a free $\CC[\p]$-module of finite rank and
$\O_{k+1}/{\rm Im }\tilde d=\O_{k+1}/{\rm Ker }\tilde d\simeq {\rm
Im }\tilde d\subset \O_{k+2}$, we obtain that $\O_{k+1}/{\rm Im }\
\tilde d$ is a finitely generated free $\CC[\p]$-module.
Therefore, we can apply
 Proposition~\ref{prop:22}, and  we have that
\begin{equation} \label{eq:4.88}
\O_{k+1,\alpha}^*/\hbox{\rm Ker }\tilde d^*\simeq
\bigl(\O_{k,\alpha}/\hbox{\rm Ker }\tilde d\bigr)^*
\end{equation} for $k\geq 1$.

 (b) Observe that  we can not apply the previous argument for $k=0$ since, by Proposition~\ref{prop:d2}, the
 image of $\tilde d$ has codimension one (over $\CC$) in Ker $\tilde d$. In fact, (\ref{eq:4.88}) is not true for
 $k=0$. For example, this can be easily seen for $W_0=Vir$ using the differential map which is  explicitly written
  in Remark~\ref{rm:22}.

 (c) Observe that $\O_{0,\alpha}$ is an irreducible tensor module
(Ker $\tilde d=0$, cf. Proposition~\ref{prop:d2}), that is why
this module is included in case (a) of Theorem~\ref{th:w}.

(d) Since for a finite rank module $M$ over a Lie conformal
superalgebra we have $M^{**}=M$ (see Proposition~\ref{prop:dd}),
the $W_n$-modules in case (c) of Theorem~\ref{th:w} are isomorphic
to $\Omega_{k,\alpha}/ker \ \tilde d$, $k=2,3,...$.

(e) Observe that $(\rm{Tens }V)^*$ is not isomorphic to $\rm{Tens
}V^*$. For example, consider the case of $W_1$. We have, using the
notation below, that $M(a,b)=$Tens $V_{a,b}$. It is
easy to see that for the case $a+b\neq 0$, $(\rm{Tens
}V_{a,b})^*=$Tens $V_{-a,-b}$, but $(V_{a,b})^*= V_{1-a, -b-1}$.

\end{remark}

\

Now we will present the case $n=1$ in detail and we shall see that
our result agrees with the classification given in \cite{CL} for
$K_2\simeq W_1$. Let us fix some notations. We have
\begin{displaymath}
W_1=\CC[\p]\otimes \left(\Lambda(1)\oplus W(1)\right)=\CC[\p]\{1,
\xi, \p_1, \xi\p_1\}.
\end{displaymath}
In \cite{CL}, the conformal Lie superalgebra $K_2$ is presented as
the  freely generated module over $\CC[\p]$ by $\{L, J, G^\pm\}$.
An isomorphism between $K_2$ and $W_1$ is explicitly given by
\begin{equation}
L\mapsto -1+\frac 1 2 \p \xi\p_1, \qquad J\mapsto \xi\p_1,  \qquad
G^+\mapsto 2\xi, \qquad G^-\mapsto -\p_1.
\end{equation}

The irreducible modules of  $W_1$ are parameterized by
finite-dimensional irreducible representations of $gl(1,1)$ (and
the additional twist by alpha that, for simplicity, shall be
omitted  in the formulas below). The irreducible representations
of $gl(1,1)$, denoted by $V_{a,b}$, are parameterized by  a and b,
the corresponding  eigenvalues of $e_{11}$ and $e_{22}$ on the
highest weight vector.

If both parameters are equal to zero, the representation is
trivial 1-dimensional. Otherwise, either $a+b=0$, the dimension
of the  $gl(1,1)$-representation  is 1, and the corresponding
representation  of $W_1$ is one of the tensor modules of rank 2.
Or else $a+b$ is non-zero, the dimension of the
$gl(1,1)$-representation is 2, and the corresponding tensor module
has rank 4.


Explicitly, consider the set of $\Bbb C[\partial]$-generators of
$W_1$ $\{1,\,\xi,\,\partial_1,\,\xi\partial_1\}$. Let $a$ and $b$
such that $a+b\neq 0$. Let $V_{a,b}=\CC$-span$\{v_0,v_1\}$, where
$v_0 $ is a highest weight vector. Let $M(a,b)=M(V_{a,b})=\Bbb
C[\partial]\{v_0,\,v_1,\,w_1=\partial_1v_0,\, w_0=\partial_1v_1\}$
be the tensor $W_1$-module and denote by $L(a,b)$ the irreducible
quotient. The action of $W_1$ in $M(a,b)$ is given explicitly by
the following formulas:
\begin{eqnarray}\label{eq:mab}
1_\lambda v_0 = (a\lambda-\partial)v_0, & 1_\lambda v_1 =
((a-1)\lambda-\partial)v_1, \cr 1_\lambda w_1 =
(a\lambda-\partial)w_1, & 1_\lambda w_0 =
((a-1)\lambda-\partial)w_0, \nonumber
\end{eqnarray}
\begin{eqnarray}
\xi_\lambda v_0 =v_1, \quad\qquad\qquad\quad &\xi_\lambda v_1 =
0,\qquad\qquad \quad\cr \xi_\lambda w_1 =
(a\lambda-\partial)v_0-w_0, &\quad \xi_\lambda w_0 =
((a-1)\lambda-\partial)v_1, \nonumber
\end{eqnarray}
\begin{eqnarray}
{\partial_1}_\lambda v_0 = w_1, &\quad\quad {\partial_1}_\lambda
v_1 = (a+b)\lambda v_0+w_0,\cr {\partial_1}_\lambda w_1 =0,
&\quad\quad {\partial_1}_\lambda w_0 = -(a+b)\lambda w_1\nonumber
\end{eqnarray}
\begin{eqnarray}
{\xi\partial_1}_\lambda v_0 = b\,v_0, \qquad &
{\xi\partial_1}_\lambda v_1 = (b+1)\,v_1,\quad\cr
{\xi\partial_1}_\lambda w_1 = (b-1)w_1, &\quad\quad
{\xi\partial_1}_\lambda w_0 = -(a+b)\lambda\,v_0+b\,w_0.
\end{eqnarray}

If $a+b\neq 0$ and $a\neq 0$, then $M(a,b)$ is irreducible of rank
4, and the explicit action is given by (4.45). The proof of this
statement is in the following way (the proof of the other
statements below are much simpler): Take $v=p(\p) v_0 + q(\p)w_0 +
r(\p) v_1 + s(\p)w_1$ in a submodule of $M(a,b)$. Denote by $w$
the coefficient of the highest power in $\la$ of $\xi_\la v$ and
by $y$ the coefficient of the highest power in $\la$ of $\xi_\la
w$.

If $a\neq 1$ then $y=v_1$ (up to a constant factor), therefore
$v_1$ lies in  the submodule. If $a=1$, then by taking the
coefficient of the highest power in $\la$ of $\xi{\p_1}_\la y$ and
using that in this case $b\neq -1$, we also obtain that $v_1$ lies
in the submodule.

Therefore, in any case we have that $v_1$ lies in any submodule,
and by the formulas for the actions on $v_1$ it is immediate that
the other generators also belong to any submodule, proving that
$M(a,b)$ is irreducible in this case.

If $a+b\neq 0$ but $a=0$, it is easy to show as above that $N=\Bbb
C[\partial]w_1\oplus\Bbb C[\partial](\partial v_0+w_0)$ is a
submodule of $M(0,b)$. The irreducible quotient of $M(0,b)$ by $N$
is $L(0,b)=\Bbb C[\partial]v_0\oplus \Bbb C[\partial]v_1$, of rank
2,  and the action here is explicitly, as follows:
\begin{eqnarray}
1_\lambda v_0 = (-\partial)v_0, & 1_\lambda v_1 =
(-\lambda-\partial)v_1,\cr \xi_\lambda v_0 = v_1, & \xi_\lambda
v_1 = 0,\cr {\partial_1}_\lambda v_0 = 0, & {\partial_1}_\lambda
v_1 = (b\lambda-\partial)v_0,\cr {\xi\partial_1}_\lambda v_0 =
b\,v_0, & {\xi\partial_1}_\lambda v_1 = (b+1)\,v_1.
\end{eqnarray}

If  $a+b=0$, but $a\neq 0$, it is easy to show as above that
$M(a,-a)= C[\partial]\{v_0,w_1\}$ is irreducible of rank 2 and the
action of $W_1$ here is given by:

\begin{eqnarray}
1_\lambda v_0 = (a\lambda-\partial)v_0, & 1_\lambda w_1 =
(a\lambda-\partial)w_1,\cr \xi_\lambda v_0 = 0, & \xi_\lambda w_1
= (a\lambda-\partial)v_0,\cr {\partial_1}_\lambda v_0 = w_1, &
{\partial_1}_\lambda w_1 = 0,\cr {\xi\partial_1}_\lambda v_0 =
-a\,v_0, & {\xi\partial_1}_\lambda w_1 = (-a-1)w_1.
\end{eqnarray}

Thus we obtain

\begin{corollary} The $W_1$-module $L(a,b)$ as a $\Bbb C[\partial]$-module has
rank: 4 if $a+b\neq 0$ and $a\neq 0$, 2 if $a+b\neq 0$ and $a=0$,
2 if $a+b=0$ and $a\neq 0$, 0 if $a=b=0$. These are all
non-trivial finite irreducible $W_1$-modules.
\end{corollary}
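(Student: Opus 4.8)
The statement is essentially a synthesis of the four explicit computations carried out above, together with an appeal to the general classification scheme of Section~\ref{sec:2}. Accordingly, the plan is to prove it in two stages: first read off the rank of $L(a,b)$ in each of the four regions of the $(a,b)$-plane, and then argue that no other non-trivial finite irreducible $W_1$-modules exist.

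For the ranks one simply assembles the cases already treated via the formulas (4.45)--(4.47). When $a+b\neq 0$ and $a\neq 0$, the module $M(a,b)$ is irreducible, so $L(a,b)=M(a,b)$ has rank $4$. When $a+b\neq 0$ and $a=0$, the quotient $L(0,b)=M(0,b)/N$ is spanned over $\cp$ by $\{v_0,v_1\}$ and so has rank $2$; when $a+b=0$ and $a\neq 0$, the module $M(a,-a)=\cp\{v_0,w_1\}$ is already irreducible of rank $2$. The only region not made fully explicit above is $a=b=0$, where $V_{0,0}$ is the trivial one-dimensional $gl(1,1)$-module; here one checks directly from the formulas (setting $a=b=0$) that $N=\cp w_1\oplus\cp(\p v_0)$ is a submodule of $M(0,0)=\cp\{v_0,w_1\}$, whose quotient is $\CC\bar v_0$ with $\p\bar v_0=0$ and all of $\Lambda(1)\oplus W(1)$ acting trivially. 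Hence $L(0,0)$ is the trivial module, of rank $0$.

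For completeness I would invoke the machinery linking $W_1$-modules to the annihilation algebra. By Proposition~\ref{prop:1} a finite irreducible $W_1$-module is the same as a finite irreducible continuous module over $W(1,1)^+$, and since $W(1,1)^+=\CC\p_t\ltimes W(1,1)_+$ splits as the direct sum of $W(1,1)_+$ and a one-dimensional abelian Lie algebra, every such module is obtained from a finite irreducible $W(1,1)_+$-module by the twist $\p\mapsto -\p_t+\alpha$. By Proposition~\ref{prop:3} these modules lie in $P(L,\l0)$ with $L=W(1,1)_+$, so Theorem~\ref{th:1}(d) applies: the finite irreducible $W(1,1)_+$-modules are exactly the $I(F)$ for $F$ an irreducible finite-dimensional $L_0\simeq gl(1|1)$-module, and these $F$ are precisely the $V_{a,b}$. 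Thus $L(a,b)$ (up to the $\alpha$-twist, omitted for simplicity) exhausts the list, with the trivial module arising at $a=b=0$.

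The main obstacle is the genuine irreducibility in each region, i.e.\ correctly identifying the maximal submodule and ruling out further singular vectors. Because the singular-vector analysis of Proposition~\ref{prop:s1} was carried out under the hypothesis $n\geq 2$, it does not transfer verbatim to $n=1$, so the irreducibility of $M(a,b)$ for $a+b\neq 0,\ a\neq 0$ must be established by the direct argument indicated above: one takes an arbitrary element $v$ of a non-zero submodule, extracts the leading $\la$-coefficient $w$ of $\xi_\la v$ and then the leading $\la$-coefficient $y$ of $\xi_\la w$ to force $v_1$ into the submodule, and finally uses the explicit action on $v_1$ to recover the remaining generators. The delicate point is the value $a=1$, where the first extraction degenerates and one must instead apply $\xi\p_{1,\la}$ and exploit that $b\neq -1$ (forced by $a+b\neq 0$); the analogous but simpler submodule computations then settle the two rank-$2$ regions and the trivial case.
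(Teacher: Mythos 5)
Your proposal is correct and follows essentially the same route as the paper: the corollary is obtained by assembling the explicit rank computations for $M(a,b)$ in each region of the $(a,b)$-plane (including the direct irreducibility argument via leading $\la$-coefficients of $\xi_\la v$, with the special treatment of $a=1$ using $\xi\p_1$ and $b\neq -1$), and completeness follows from the correspondence between finite irreducible $W_1$-modules and irreducible finite-dimensional $gl(1|1)$-modules $V_{a,b}$ up to the $\alpha$-twist. Your explicit check that $L(0,0)$ is the trivial rank-$0$ module is a welcome small addition that the paper leaves implicit.
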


\begin{remark} In \cite{CL}, the irreducible representations of $K_2$ are classified in terms of parameters $\Lambda$ and $\Delta$. Using the isomorphism between $K_2$ and $W_1$ in (4.44), these parameters are related to ours as follows,
\begin{displaymath}
a= -\Delta-\frac{\Lambda}{2}\, , \,b=\Lambda.
\end{displaymath}
Then it can be easily checked that the above corollary corresponds
to Theorem 4.1 in \cite{CL}, and explicit formulas for the
$\lambda$- action given at the end of section 4 in \cite{CL},
corresponds to ours in each case.
\end{remark}

%
%

\section{Lie conformal superalgebra $S_n$ and its finite irreducible
modules}\label{sec:4}

Recall that the {\it divergence} of a differential operator
$D=\sum_{i=0}^n a_i \p_i\in W(1,n)$, with $a_i\in \Lambda(1,n)$
and $\p_0=\p_t$ is defined by the formula
\begin{displaymath}
div \ D= \p_0 a_0 + \sum_{i=1}^n (-1)^{p(a_i)} \p_i a_i.
\end{displaymath}
The basic property of the divergence is $(D_1, D_2\in W(1,n))$
\begin{displaymath}
div \ [D_1, D_2] = D_1 (div\ D_2) - (-1)^{p(D_1) p(D_2)} D_2 (div
\ D_1).
\end{displaymath}
It follows that
\begin{displaymath}
S(1,n)=\{D\in W(1,n)\ : \ div\ D =0\}
\end{displaymath}
is a subalgebra of the Lie superalgebra $W(1,n)$. Similarly,
\begin{displaymath}
S(1,n)_+=\{D\in W(1,n)_+\ : \ div\ D =0\}
\end{displaymath}
is a subalgebra of $W(1,n)_+$. We have
\begin{equation}\label{eq:semi}
S(1,n) \hbox{ (resp. $S(1,n)_+$ ) } = S(1,n)' \hbox{ (resp.
$S(1,n)'_+$ ) } \oplus \CC \xi_1\cdots \xi_n \p_0,
\end{equation}
where $S(1,n)'$ (resp. $S(1,n)'_+$ )  denotes the derived
subalgebra. It is easy to see that $S(1,n)'$ is a formal
distribution Lie superalgebra, see \cite{FK}, Example 3.5.

In order to describe the associated Lie conformal superalgebra, we
need to translate the notion of divergence  to the "conformal"
language as follows. It is a $\CC[\p]$-module map $div: W_n \to
$Cur$\ \Lambda (n)$, given by
\begin{displaymath}
div\ a =\sum_{i=1} (-1)^{p(f_i)} \p_i f_i, \qquad \quad div \ f
=-\p \otimes f,
\end{displaymath}
where $a=\sum_{i=1}^n f_i \p_i\in  W(n)$ and $f\in \Lambda (n)$.
The following identity holds in $\CC[\p]\otimes \Lambda (n)$,
where $D_1, D_2\in W_n$:
\begin{equation} \label{eq:div}
div \ [{D_1}_\la D_2] = (D_1)_\la (div\ D_2) - (-1)^{p(D_1)
p(D_2)} (D_2)_{-\la-\p} (div \ D_1).
\end{equation}
Therefore,
\begin{displaymath}
S_n=\{ D\in W_n\ : \ div \ D=0\}
\end{displaymath}
is a subalgebra of the Lie conformal superalgebra $W_n$. It is
known that $S_n$ is simple for $n\geq 2$, and finite of rank $n
2^n$. Furthermore, it is the Lie conformal superalgebra associated
to the formal distribution Lie superalgebra $S(1,n)'$. The
annihilation algebra and the extended annihilation algebra is
given by
\begin{displaymath}
\A(S_n)=S(1,n)'_+ \qquad \hbox{ and } \qquad \A (S_n)^e=\CC \
ad(\p_0)\ltimes S(1,n)'_+.
\end{displaymath}

Now, we have to study representations of $S(1,n)_+$ and of its
derived algebra $S(1,n)'_+$ which has codimension 1.  Observe that
$S(1,n)_+$ inherits the $\ZZ$-gradation in $W(1,n)_+$, and
denoting by $L=S(1,n)_+$ (for the rest of this section), we have
that $L_{-1}=<\p_0, \ldots ,\p_n>$ as in $W(1,n)_+$ but the other
graded components are strictly smaller than these of $W(1,n)_+$.
Observe that $L_0= sl(1|n)$.

In order to consider weights of vectors in $S(1,n)_+$-modules, we
take the basis
\begin{displaymath}
t\p_0 +\xi_1 \p_1 , \ldots , t\p_0+ \xi_n \p_n.
\end{displaymath}
for the Cartan subalgebra. And the weights are written as
$\bar\la=(\la_1, \ldots ,\la_n)$ for the corresponding
eigenvalues.

Propositions~{\ref{prop:d1}} and ~{\ref{prop:d2}}, and
Corollary~{\ref{cor:d}} holds for $L=S(1,n)_+$ with the following
minor modification: all highest weights are the same as in the $W$
case, except for the first coordinate that should be removed.

\

Similarly, if $V$ is an $sl(1|n)$-module, then formulas
(\ref{eq:3.6}) and (\ref{eq:3.7}) define an $S_n$-module structure
in Tens$(V)$. Indeed, elements $(-1)^{p(f)} \p(f\p_i) + \p_i f$,
with $f\in \Lambda (n)$ generate $S_n$ as a $\CC[\p]$-module, and
it is  easy to see that for the action of these elements defined
by (\ref{eq:3.6}) and (\ref{eq:3.7}), one needs only $E_{ij}(v)$
for  $i\neq j$ and $(E_{00} +E_{ii})(v)$ for $i>0$.

As in the $W$-case, the classification is reduced to the study of
singular vectors in Tens$(V)$, where $V$ is an $sl(1|n)$-module.
Observe that the reduction Lemma \ref{lem:deg}   hold in this
case, and the proof is basically the same.  Therefore, analogous
computations give as the following

\begin{proposition} \label{prop:5.1}
Let $n\geq 2$ and $V$ an  irreducible finite-dimensional
$sl(1|n)$-module. If $m$ is a non-trivial singular vector in the
$S(1,n)_+$-module {\rm Tens}$\ V$ with weight $\bar\la_m$, then we
have one of the following:

\vskip .3cm

(a) $m=\xi^n \otimes v_n$, $\bar\la_m=(0,\ldots ,0, -k)$ with
$k\geq 0$,   $v_n$ is a highest weight vector in $V$ with weight
$(0,\ldots ,0, -k-1)$, and  $m$ is uniquely defined by $v_n$.

\vskip .3cm

(b) $m=\sum_{l=1}^n \xi^l \otimes v_l$, $\bar\la_m=(k,1, \ldots
,1)$ with $k\geq 2$, $v_1$ is a highest weight vector in $V$ with
weight $(k-1,1,\ldots ,1)$,  and  $m$ is uniquely defined by
$v_1$.

\vskip .3cm

(c) $m=\p (\xi_* \otimes w) + \sum_{l=1}^n \xi^l \otimes v_l$,
$\bar\la_m=(0, \ldots ,0)$,   $w$ is a highest weight vector in
$V$ with weight $(1,\ldots ,1)$, and $m$ is uniquely defined by
$w$.

\vskip .3cm

(d) $m=\p (\xi^n \otimes w) + \sum_{l=1}^{n-1} \xi_{[l,n]-\{l,n\}}
\otimes v_l + \xi^n \otimes v_n$,  $\bar\la_m=(0, \ldots ,0, -1)$,
$w$ is a highest weight vector in $V$ with weight $(1,\ldots ,1)$,
and $m$ is uniquely defined by $w$.

%
%
%
\end{proposition}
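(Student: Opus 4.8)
The plan is to adapt the machinery already developed for the $W_n$ case in Proposition~\ref{prop:s1}, exploiting the fact that $S(1,n)_+$ sits inside $W(1,n)_+$ as the divergence-free subalgebra. First I would note, as remarked in the text, that Lemma~\ref{lem:deg} holds verbatim for $L=S(1,n)_+$, so any singular vector has degree at most $1$ in $\partial$; the proof uses only the elements $\partial_j$ and $f=1$, both of which have zero divergence and hence lie in $S(1,n)_+$. I would then rerun the reduction of Lemma~\ref{lem:deg2}. The subtlety here is that the test functions used there, namely $a=\xi_I\partial_j$ and $f=\xi_J$, are not automatically divergence-free, so I must express the singular-vector conditions (s1)--(s3) in terms of the actual generators of $S_n$. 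As the excerpt indicates, the elements $(-1)^{p(f)}\partial(f\partial_i)+\partial_i f$ with $f\in\Lambda(n)$ generate $S_n$ over $\CC[\partial]$, and for their action one needs only the off-diagonal matrix units $E_{ij}(v)$ with $i\neq j$ together with the combinations $(E_{00}+E_{ii})(v)$ for $i>0$. So the correct strategy is to rewrite the entire list (\ref{eq:v1})--(\ref{eq:v14}) keeping only those combinations of the $E_{ij}$ that survive on an $sl(1|n)$-module, where $E_{00}+E_{11}+\cdots+E_{nn}$ acts as a scalar (in fact as zero on the traceless part).

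Next I would run the weight analysis exactly as in the proof of Proposition~\ref{prop:s1}, but now using the Cartan basis $t\partial_0+\xi_i\partial_i$ (with the first coordinate $\mu$ dropped, as the text prescribes) and the operators $E_{ii}\cdot m=(\xi_i\partial_i)\cdot m$. The cases $w=0$ and $w\neq 0$ split as before. For $w\neq 0$ the computation of $E_{ii}\cdot m=m$ and $E_{00}(v_l)=-v_l$ forces the weight $\bar\lambda_m=(0,\ldots,0)$ and $\bar\lambda_w=(1,\ldots,1)$, giving case (c). For $w=0$, relation (\ref{eq:v13}) together with (\ref{eq:v9}) shows that if $v_l=0$ then $v_j=0$ for all $j<l$, and the finite-dimensionality of $V$ forces the highest-weight inequality $\lambda_1\geq\cdots\geq\lambda_n$; this pins down the two extreme positions and yields cases (a) and (b) precisely as in the $W$ case.

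The genuinely new feature—and the main obstacle—is case (d), which has no analogue in Proposition~\ref{prop:s1}. It arises because on $sl(1|n)$ the constraint from the trace element $E_{00}+\sum_i E_{ii}$ replaces the independent action of $E_{00}$ that was available for $gl(1|n)$; the extra singular vector of weight $(0,\ldots,0,-1)$ with leading term $\partial(\xi^n\otimes w)$ is exactly the degeneracy that the divergence-free condition introduces. To find it I would look for degree-one singular vectors whose $\partial$-coefficient $w$ is not supported on $\xi_*$ but on $\xi^n=\xi_{[1,n]-\{n\}}$, and solve the reduced system of $E_{ij}$-relations in this new ansatz. The hard part will be verifying that the combinations $(E_{00}+E_{ii})(v)$, rather than $E_{00}(v)$ alone, close up consistently so that $w$ is forced to be a highest weight vector of weight $(1,\ldots,1)$ and the lower components $v_l$ are uniquely determined by $w$; uniqueness in each case follows, as in Proposition~\ref{prop:s1}, from the fact that all remaining $E_{ij}(v_l)$ are rigidly fixed by the relations once the highest weight vector is chosen.
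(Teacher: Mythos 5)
Your overall route---rerun the $W_n$ computation with the singular-vector conditions restricted to the divergence-free elements of $L_{>0}$, and track how the weakened constraints produce the extra case (d)---is exactly the paper's route (the paper's own ``proof'' is little more than the remark that Lemma~\ref{lem:deg} still holds and that ``analogous computations'' give the statement). But two concrete points in your sketch do not hold up. First, your justification for transferring Lemma~\ref{lem:deg} is incorrect: condition (S1) with $f=1$ encodes the action of the modes $t^k\partial_t$ with $k\geq 2$, and $\mathrm{div}(t^k\partial_t)=kt^{k-1}\neq 0$, so these modes are \emph{not} in $S(1,n)_+$. The second half of that proof (killing $v_{\emptyset,k}$ for $k\geq 2$ via (\ref{eq:3.10})) must be redone using corrected divergence-free elements such as $t^k\partial_t+kt^{k-1}\xi_i\partial_i$; only the first half, which uses $t^k\partial_j$ with $\mathrm{div}(t^k\partial_j)=0$, transfers as you claim.

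Second, and more seriously, your plan is internally inconsistent: you propose to ``rerun the reduction of Lemma~\ref{lem:deg2}'' and then do the weight analysis ``exactly as in the proof of Proposition~\ref{prop:s1}'', with case (d) hunted separately afterwards. If the reduction of Lemma~\ref{lem:deg2} went through unchanged, its conclusion $m=\partial(\xi_*\otimes w)+\sum_l\xi^l\otimes v_l+\xi_*\otimes v_0$ would \emph{exclude} case (d), whose degree-one part is supported on $\xi^n$ (so $|I|=n-1$) and whose degree-zero part contains components $\xi_{[1,n]-\{l,n\}}$ with $|I|=n-2$. The actual mechanism is that specific steps of that reduction fail over $S(1,n)_+$: the use of (S3) with $f=\xi_i$, which kills $v_{I,1}$ for $|I|=n-1$, relies on the mode $t\xi_i\partial_t$ whose divergence is $\xi_i\neq 0$, and consequently the derivation of $v_{I,0}=0$ for $|I|=n-2$ from (\ref{eq:3.12}) fails as well (while (\ref{eq:3.12}) itself survives, since $\xi_i\xi_j\partial_t$ is divergence-free, and is precisely what ties the $v_l$ in case (d) to $E_{l0}(w)$). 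A complete proof must carry out the reduction with only the divergence-free conditions, arrive at this strictly larger normal form, and then verify that the full solution set is exactly (a)--(d); by deferring ``the hard part'' of case (d) and not re-examining which steps of the $W_n$ reduction survive, your proposal leaves both the existence of case (d) and the exhaustiveness of the list unproved.
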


Using the above proposition, we have

\begin{theorem} \label{th:5.2} Let $L=S(1,n)_+$ ($n\geq 2$) and $F$ be an irreducible
$L_0$-module  with highest weight $\bar\la_*$. Then the
$L$-modules {\rm Ind}$_{\l0}^L F$ are irreducible finite
continuous modules except for the following cases:

\vskip .3cm

(a) $\bar\la_*=( 0,\ldots ,0,-p), p\geq 0$, where {\rm
Ind}$_{\l0}^L F= \Theta^p_+$ and the image $d^\#\Theta_+^{p+1}$ is
the only non-trivial proper submodule.

\vskip .3cm

(b) $\bar\la_*=(q ,1,\ldots ,1), q\geq 1$, where {\rm
Ind}$_{\l0}^L F= \Omega^q_-$. For $q\geq 2$ the image $d
\Omega_-^{q-1}$ is the only non-trivial proper submodule. For
$q=1$, the proper submodules are  {\rm Im}$(d)$, {\rm Ker}$(d)$
and {\rm Im}$(\alpha)$, where $\alpha$ is the composition
\begin{displaymath}
\alpha : \Theta^1_+\overset{\ d^\# \ }\longrightarrow
\Theta^0_+\simeq \Omega^0_-\overset{\ d\ }\longrightarrow
\Omega^1_-,
\end{displaymath}
and {\rm Ker}$(d)$ is the maximal proper submodule.

\vskip .3cm

\end{theorem}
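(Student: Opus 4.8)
The plan is to run the argument of Theorem~\ref{th:r1} essentially verbatim, with Proposition~\ref{prop:s1} replaced by Proposition~\ref{prop:5.1} and the modules $\T^k_+$, $\O^k_-$ now understood over $L=S(1,n)_+$ (so that all highest weights lose their first coordinate, as recorded after Proposition~\ref{prop:d2}). By Theorem~\ref{th:1}(e), since $F$ is already irreducible, $\mathrm{Ind}_{\l0}^L F$ is reducible if and only if it contains a non-trivial singular vector. First I would read off from Proposition~\ref{prop:5.1} the admissible highest weights $\bar\la_*$ of $F$: the inducing vectors $v_n$, $v_1$, $w$ in cases (a)--(d) carry weights $(0,\dots,0,-p)$ with $p\ge 1$, $(q,1,\dots,1)$ with $q\ge 1$, and $(1,\dots,1)$ for both (c) and (d). Together with the edge weight $(0,\dots,0)$ these are exactly the weights appearing in (a) and (b) of the statement, and the decisive new feature compared with the $W$ case is that the single weight $(1,\dots,1)$, i.e.\ $q=1$, carries simultaneously the two singular vectors of types (c) and (d).

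Next I would produce the submodules. The $S$-versions of Proposition~\ref{prop:d1}(3) and Proposition~\ref{prop:d2} identify $\mathrm{Ind}_{\l0}^L F$ with $\T^p_+$ in case (a) and with $\O^q_-$ in case (b), and supply the $L$-module maps $d$ and $d^\#$, whose kernels and images are submodules. For $q=1$ the composition $\alpha = d\circ d^\#$ through the isomorphism $\T^0_+\simeq\O^0_-$ of the statement gives a further submodule. I would verify, directly from the factorization of $\alpha$ and from $d^2=0$ together with the identity $\mathrm{Ker}(d)=\CC(t^{-1}dt)+\mathrm{Im}(d)$ on $\O^1_-$ (Proposition~\ref{prop:d2}(4)), the chain of inclusions $\mathrm{Im}(\alpha)\subseteq\mathrm{Im}(d)\subseteq\mathrm{Ker}(d)$, and I would match generators: as in Remark~\ref{rm:8} the type-(c) vector generates $\mathrm{Im}(d)$, while the genuinely new type-(d) vector of Proposition~\ref{prop:5.1} generates $\mathrm{Im}(\alpha)$.

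The core of the proof is maximality, i.e.\ irreducibility of each quotient. Exactly as in Theorem~\ref{th:r1}, exactness of the conformal de Rham complex and of its dual (Proposition~\ref{prop:d1}(3) and Proposition~\ref{prop:d2}(4)) realizes every quotient by the listed submodule as an image of $d$ or $d^\#$ inside a neighbouring induced module:
\begin{gather*}
\T^p_+/d^\#\T^{p+1}_+\simeq d^\#\T^p_+\subseteq\T^{p-1}_+,\\
\O^q_-/d\,\O^{q-1}_-\simeq d\,\O^q_-\subseteq\O^{q+1}_-\quad (q\ge 2),\\
\O^1_-/\mathrm{Ker}(d)\simeq d\,\O^1_-\subseteq\O^2_-.
\end{gather*}
A submodule of an induced module is irreducible once it is generated by every highest singular vector it contains, and Proposition~\ref{prop:5.1} furnishes, in each of these target modules, a unique such vector (up to scalar) that is precisely the generator of the image in question; irreducibility of the quotients, hence maximality of the submodules, then follows.

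The step I expect to be the main obstacle is the case $q=1$. Here the clean bookkeeping of the $W$ case breaks down because $\O^1_-$ carries several singular vectors at the weight $(1,\dots,1)$ and because $\mathrm{Ker}(d)$ strictly contains $\mathrm{Im}(d)$ by the one-dimensional space $\CC(t^{-1}dt)$, so it is $\mathrm{Ker}(d)$ rather than $\mathrm{Im}(d)$ that must be shown to be maximal. I would argue that the submodules generated by all non-trivial singular vectors of $\O^1_-$ are contained in $\mathrm{Ker}(d)$, that $\mathrm{Im}(d)$ and $\mathrm{Im}(\alpha)$ exhaust the remaining proper submodules, and that the quotient $\O^1_-/\mathrm{Ker}(d)\simeq d\,\O^1_-\subseteq\O^2_-$ is cyclic on its unique singular vector and therefore irreducible; checking that this list of proper submodules is complete, with no residual singular vectors surviving in the quotient (cf.\ Remark~\ref{rmk:1}), is the delicate point.
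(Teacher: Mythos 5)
Your proposal follows essentially the same route as the paper: reduce to the singular-vector list of Proposition~\ref{prop:5.1} via Theorem~\ref{th:1}(e), realize the induced modules as $\T^p_+$ and $\O^q_-$ with submodules cut out by $d$, $d^\#$ and $\alpha$, and prove maximality by identifying each quotient with the image of $d$ or $d^\#$ inside a neighbouring induced module, generated there by a unique singular vector. The paper's own proof is in fact terser --- it merely notes that the weight $(1,\dots,1)$ supports three types of singular vectors ((b), (c), (d)) and hence three submodules of $\O^1_-$, then appeals to ``the same argument as for $W(1,n)_+$'' --- so your explicit treatment of the chain $\mathrm{Im}(\alpha)\subseteq\mathrm{Im}(d)\subseteq\mathrm{Ker}(d)$ in the $q=1$ case is a faithful expansion of what the authors intend.
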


\begin{proof} Similarly to the case of $W(1,n)_+$, the modules $\hbox{\rm Ind}_{\l0}^L F$
are irreducible except when they have a singular vector and the
highest weights of
 such $F$, when it could happen, are listed in (a), (b), (c) and (d) of the above
Proposition \ref{prop:5.1}. The weight $(1,\ldots , 1)$ is special
here because it is relevant to (b), (c) and (d). There are three
types of singular vectors possible in this case. The corresponding
module Ind$(F)=\Omega^1_-$ has three different submodules and all
three vectors are present. The same argument as for
$W(1,n)_+$-modules allows us easily to conclude that the listed
submodules are the only ones and the factors are irreducible.
\end{proof}

\begin{corollary}\label{cor:5.3} The theorem gives us a description of
finite continuous irreducible $S(1,n)_+$-modules when $n \geq 2$.
Such a module is either $\hbox{\rm Ind}_{\l0}^L F$ for an
irreducible finite-dimensional $L_0$-module $F$ where the highest
weight of $T$ does not belong to the types listed in (a), (b) of
the theorem or the factor of an induced module from (a), (b)   by
the submodule {\rm Ker}$(d)$.
\end{corollary}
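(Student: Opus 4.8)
The plan is to deduce the corollary directly from Theorem~\ref{th:5.2} by passing through the general correspondence of Theorem~\ref{th:1}. By Theorem~\ref{th:1}(d), every finite continuous irreducible $L$-module (with $L=S(1,n)_+$) is of the form $I(F)$ for a unique irreducible finite-dimensional $L_0$-module $F$, where $I(F)$ is the quotient of $M(F)=\hbox{Ind}_{\l0}^L F$ by its unique maximal submodule; moreover $M(F)\in P(L,\l0)$ by Theorem~\ref{th:1}(a), and quotients of modules in this category remain in it. Thus it suffices to compute $I(F)$ in terms of the highest weight $\bar\la_*$ of $F$ and to check that the resulting list coincides with the one asserted.

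First I would treat the non-degenerate case. If $\bar\la_*$ is \emph{not} of the form listed in (a) or (b) of Theorem~\ref{th:5.2}, then that theorem says $M(F)=\hbox{Ind}_{\l0}^L F$ is already irreducible (equivalently, by Theorem~\ref{th:1}(e), $M(F)$ carries no non-trivial singular vectors), so its maximal submodule is zero and $I(F)=\hbox{Ind}_{\l0}^L F$. This produces the first alternative in the statement.

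Next I would handle the degenerate cases, where the task is to identify the maximal submodule of the reducible module $M(F)$ with $\hbox{Ker}(d)$. In case (a) Theorem~\ref{th:5.2} gives $M(F)=\Theta^p_+$ with the single non-trivial proper submodule $d^\#\Theta^{p+1}_+$, which by Proposition~\ref{prop:d1}(3) coincides with $\hbox{Ker}(d^\#)$; being the only proper submodule it is maximal. In case (b) with $q\geq 2$, Theorem~\ref{th:5.2} gives $M(F)=\O^q_-$ with unique proper submodule $d\O^{q-1}_-$, which equals $\hbox{Ker}(d)$ by Proposition~\ref{prop:d2}(4); for $q=1$ there are three proper submodules, but Theorem~\ref{th:5.2} singles out $\hbox{Ker}(d)$ as the maximal one. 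In every degenerate case, therefore, $I(F)=\hbox{Ind}_{\l0}^L F/\hbox{Ker}(d)$, giving the second alternative. Distinctness and exhaustiveness of the two alternatives then follow from the injectivity and surjectivity, respectively, of the map $F\mapsto I(F)$ in Theorem~\ref{th:1}(d).

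The main obstacle I expect is the $q=1$ subcase of (b): here $M(F)=\O^1_-$ carries three proper submodules $\hbox{Im}(d)$, $\hbox{Ker}(d)$ and $\hbox{Im}(\alpha)$, and one must verify that $\hbox{Ker}(d)$ genuinely dominates the other two (so that it is the unique maximal submodule and the quotient is irreducible). This is precisely where the detailed submodule analysis packaged in Theorem~\ref{th:5.2}---ultimately resting on the geometry of the conformal de Rham complex via Propositions~\ref{prop:d1} and~\ref{prop:d2} and on the singular-vector classification of Proposition~\ref{prop:5.1}---must be invoked. Once that input is granted, the remainder of the argument is bookkeeping through the bijection of Theorem~\ref{th:1}.
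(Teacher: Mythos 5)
Your proposal is correct and follows exactly the route the paper intends: the corollary is an immediate consequence of Theorem~\ref{th:5.2} combined with the bijection $F\mapsto I(F)$ of Theorem~\ref{th:1}, with the maximal submodule in each degenerate case identified with $\mathrm{Ker}(d)$ (or $\mathrm{Ker}(d^\#)$) via Propositions~\ref{prop:d1} and~\ref{prop:d2}. The paper leaves this deduction implicit, and your bookkeeping --- including the observation that the delicate $q=1$ subcase is already settled inside Theorem~\ref{th:5.2} --- fills it in faithfully.
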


\begin{corollary}\label{cor:5.4} The Lie superalgebras $S(1,n)_+$ and $S(1,n)'_+$ have
the same finite continuous irreducible modules, and they are
described by the previous corollary.
\end{corollary}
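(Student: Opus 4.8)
The plan is to exhibit restriction as a bijection between the finite continuous irreducible modules of $L:=S(1,n)_+$ and those of $L':=S(1,n)'_+$, reducing the whole question to the absence of certain singular vectors. By (\ref{eq:semi}), $L'$ is an ideal of codimension one in $L$ with complement $\CC z$, where $z=\xi_*\p_0=\xi_1\cdots\xi_n\,\p_0$; since $\deg z=n-1\geq 1$ for $n\geq 2$, we have $z\in L_+$, so $L$ and $L'$ share all graded components except in degree $n-1$. In particular $L_{-1}=L'_{-1}=\langle\p_0,\dots,\p_n\rangle$ and $L_0=L'_0=sl(1|n)$, whence by Theorem~\ref{th:1} the irreducible modules of both algebras are parametrized by the same irreducible finite-dimensional $sl(1|n)$-modules $F$, via $F\mapsto I_L(F)$ and $F\mapsto I_{L'}(F)$. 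Since $L_-=L'_-$, the generalized Verma modules agree on restriction through a graded $L'$-isomorphism $\Phi\colon M_L(F)|_{L'}\to M_{L'}(F)$ (both being $U(L_{-1})\otimes F$). As $J_L\subset M_L(F)$ is a proper $L'$-submodule it lands inside the maximal $L'$-submodule $J_{L'}$, producing a graded $L'$-surjection $q\colon I_L(F)|_{L'}\twoheadrightarrow I_{L'}(F)$ that is an isomorphism on the top component $\bar F$.

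Everything then reduces to showing $q$ is injective, i.e. that $I_L(F)$ remains irreducible over $L'$. I would argue by leading terms: if $K=\ker q$ is nonzero, then $K$ is a graded $L'$-submodule with $K\cap\bar F=0$, so its top graded piece lies in strictly negative degree and, by maximality of that degree, consists of $L'$-singular vectors. Thus it suffices to prove that $I_L(F)$ carries no nonzero $L'$-singular vector of negative degree. Granting this, $q$ is an isomorphism and $I_L(F)|_{L'}\cong I_{L'}(F)$; the converse direction is then automatic, since every finite continuous irreducible $L'$-module is $I_{L'}(F)$ for a unique $F$ (Theorem~\ref{th:1} applied to $L'$) and hence the restriction of $I_L(F)$, with distinct $F$ giving non-isomorphic modules on both sides. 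Together with Corollary~\ref{cor:5.3} this identifies the two classification lists.

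Before addressing that, I would record a soft reduction isolating the difficulty. The element $z\in L_+$ acts locally nilpotently on $I_L(F)$, since degrees are bounded above by $0$ and $z$ raises degree by $n-1$; moreover $[L'_{>0},z]\subseteq L'_{>0}$, from which one checks that $zu$ is again $L'$-singular whenever $u$ is. Iterating, the last nonzero power $z^ru$ is annihilated by all of $L_{>0}=L'_{>0}\oplus\CC z$, hence is a genuine $L$-singular vector; but $\mathrm{Sing}_L(I_L(F))=\bar F$ by Theorem~\ref{th:1}(d). A short degree bookkeeping then shows that a negative-degree $L'$-singular vector can only occur in degree exactly $-(n-1)$, necessarily with $0\neq zu\in\bar F$. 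The claim is therefore equivalent to excluding this one boundary configuration.

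The hard part will be ruling out that configuration, and I expect it to require the explicit analysis rather than a soft argument. The route I would take is to re-run, for $L'$, the reduction lemmas underlying Proposition~\ref{prop:5.1} --- the bound on the degree in $\p$ and the determination of the shape of a singular vector --- using only the surviving $L'_{>0}$-equations, and to check that the resulting list is unchanged, so that the single deleted relation $z\cdot m=\xi_*\p_0\cdot m=0$ (the (S5)-type condition for $f=\xi_*$) is in fact a consequence of the others. For large $n$ this deleted relation plays no role in those reductions and redundancy is immediate; the delicate case is small $n$, most sharply $n=2$, where $\xi_*=\xi_1\xi_2$ is itself one of the low-degree functions used in the reduction, so that a separate direct computation is needed to confirm that dropping its equation produces no new singular vector. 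That small-$n$ verification is the genuine obstacle; once settled, the bijection and the identification of the module list follow as above.
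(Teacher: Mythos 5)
Your overall strategy is the paper's: reduce the statement to the claim that $\mathrm{Ind}_{\l0}^{L}F$ has the same singular vectors whether one imposes the conditions coming from all of $L_{>0}=S(1,n)_{+,>0}$ or only those coming from $L'_{>0}=S(1,n)'_{+,>0}$, the two differing by the single generator $z=\xi_1\cdots\xi_n\p_0$ of degree $n-1\geq 1$ (cf.\ (\ref{eq:semi})). Your soft scaffolding (the surjection $q$, the locally nilpotent action of $z$ on singular vectors, the localization of any hypothetical new singular vector to degree $-(n-1)$ with $zu\in\bar F$) is sound and somewhat more elaborate than anything the paper writes down, but it only repackages the problem: everything still hinges on showing that the one deleted relation $z\cdot m=0$ is redundant among the singular-vector equations, and you explicitly leave that verification open, flagging $n=2$ as "the genuine obstacle." That deferred step is not a side issue --- it \emph{is} the content of the paper's proof, so as written your argument has a genuine gap at its decisive point.

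The way the paper closes it is a bookkeeping check against the equations already derived in Section~\ref{subsec:w3}. The element $z$ enters the list of singular-vector conditions only through (s3) of (\ref{eq:s3}) with $g=\xi_1\cdots\xi_n$ and $j=0$. Running this condition against the normal form of a candidate singular vector (Lemma~\ref{lem:deg2}), one finds it contributes nothing at all for $n>2$ (the products $(\p_i\xi_*)\xi_I$ with $|I|\geq n-1$ vanish by degree), and for $n=2$ it contributes exactly equation (\ref{eq:v8}), $E_{i0}(v_j)=E_{j0}(v_i)$. Since (\ref{eq:v8}) is never invoked in the derivations leading to Proposition~\ref{prop:s1} (hence Proposition~\ref{prop:5.1}), deleting it does not enlarge the solution set, the singular vectors of $S(1,n)_+$ and $S(1,n)'_+$ coincide, and Theorem~\ref{th:5.2} and Corollary~\ref{cor:5.3} apply verbatim to $S(1,n)'_+$. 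To complete your proof you would need to supply precisely this tracking of which of (\ref{eq:v1})--(\ref{eq:v13}) the deleted condition produces, or an equivalent direct computation excluding your boundary configuration in degree $-(n-1)$.
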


\begin{proof} In order to see that Theorem \ref{th:5.2} also holds for $S(1,n)'_+$, it
is basically enough  to see that  Proposition \ref{prop:5.1} holds
in this case. But, if we check the proof in the classification of
singular vectors, we see that the element $\xi_1\cdots \xi_n \p_0$
(cf. (\ref{eq:semi})) appears  only in the condition (s3) in
(\ref{eq:s3}) in the special case $g=\xi_1\cdots \xi_n$ and $j=0$.
If we track the details of the proof, we see that this special
constrain  only produces equation (\ref{eq:v8}) for $n=2$, but in
any case, this equation is not used in the rest of the proof.
 Therefore, the singular vectors are the same for both Lie superalgebras $S(1,n)'_+$ and
 $S(1,n)_+$, finishing the proof.
\end{proof}

\

Now, as in the $W_n$ case,  Theorem~{\ref{th:5.2}} and
Corollary~{\ref{cor:5.4}}, along with Section~{\ref{sec:2}} and
Propositions~{\ref{prop:1}}, ~{\ref{prop:2}} and  ~{\ref{prop:3}}
give us a complete description of finite irreducible $S_n$-modules
($n \geq 2$): it is given by Theorem \ref{th:w} in which $W_n$ is
replaced by $S_n$ and $gl(1|n)$ is replaced by $sl(1|n)$.

\begin{remark} Under the standard isomorphism between $S_2$ and small N=4 conformal
superalgebra it is easy to see that our result agrees with the
classification given in \cite{CL}. Indeed, in \cite{CL} (Theorem
6.1) the classification of irreducible modules was given in terms
of parameters $\Lambda$ and $\Delta$, and these parameters are
related to ours as follows,
\begin{eqnarray}
\lambda_1&=& -\Delta+\frac{\Lambda}{2},\\
\lambda_2&=& -\Delta-\frac{\Lambda}{2}.
\end{eqnarray}
Therefore, the case $2\Delta - \Lambda=0$ ($\Lambda\in\ZZ_+$)
corresponds to the family $\O_{\Lambda,\alpha}^*/\hbox{\rm Ker
}\tilde d^*$ of rank $4\Lambda$, and the case $2\Delta +\Lambda
+2=0$ ($\Lambda\in\ZZ_+$) corresponds to
$\O_{\Lambda+1,\alpha}/\hbox{\rm Ker }\tilde d$ of rank
$4\Lambda+8$. Therefore, we have one  module of rank 4 that
corresponds to $\Omega^*_1 / Ker\ \tilde d^*$, and by
Remark~\ref{rm:12}, the dual of this module is $\Omega_0$ (Ker is
trivial in this case) and (using Proposition~\ref{prop:d2})
$\Omega_0$ is the tensor module $Tens(V)$ where $V$ is the trivial
representation, therefore it is reducible with a maximal submodule
of codimension 1 (over $\CC$).
\end{remark}

\vskip 1cm

%
%

\section{Lie conformal superalgebras $S_{n,b}$ and $\tilde S_n$, and their
finite irreducible modules}\label{sec:5}

\noindent{\it  - Case $S_{n,b}$:}

\

\noindent For any $b\in \mathbb C$, $b\neq 0$, we take
\begin{displaymath}
S(1,n,b)=\{D\in W(1,n) | div (e^{bx} D)=0\}.
\end{displaymath}
This is a formal distribution subalgebra of $W(1,n)$. The
associated Lie conformal superalgebra is constructed explicitly as
follows. Let $D=\sum_{i=1}^n P_i(\p , \xi) \p_i + f(\p, \xi)$ be
an element of $W_n$. We define the deformed divergence as
\begin{displaymath}
div_bD=div D + bf.
\end{displaymath}
It still satisfies equation \ref{eq:div}, therefore
\begin{displaymath}
S_{n,b}= \{D\in W_n | div_b D=0\}
\end{displaymath}
is a subalgebra of $W_n$, which is simple for $n\geq 2$ and has
rank $n2^n$. Since $S_{n,0} = S_n$ has been discussed in the
previous section, we can (and will) assume that $b \neq 0$.

If $b\neq 0$, the extended annihilation algebra is given by
\begin{displaymath}
(Alg (S_{n,b}))^+=\mathbb C ad(\p_0 - b \sum_{i=1}^n \xi_i \p_i)
\ltimes S(1,n)_+\simeq CS(1,n)'_+
\end{displaymath}
where $CS(1,n)'_+$ is obtained from $S(1,n)'_+$ by enlarging
$sl(1,n)$ to $gl(1,n)$ in the 0th-component.

Therefore, the construction of all finite irreducible modules over
$S_{n,b}$ is the same as that for $W_n$, but without   twisting by
$\alpha$. Hence, using Theorem \ref{th:w}, we have

\begin{theorem} \label{th:sb}
The following is a complete list of finite irreducible
$S_{n,b}$-modules $(n\geq 2, b\in \CC , b \neq 0 )$:
\alphaparenlist
\begin{enumerate}
\item 
{\rm Tens}$ V$, where $V$ is a finite-dimensional irreducible
$gl(1|n)$-module different from $\Lambda^k(\CC^{1|n})^*, k=1,2,
\ldots $ and $\Lambda^k(\CC^{1|n})$,  $k=0,1,2,...$,
\item 
$\O_{k}^*/\hbox{\rm Ker }\tilde d^*, k=1,2,\ldots$, and the same
modules with reversed parity,
\item 
$S_{n,b}$-modules dual to $(b)$, with $k>1$.
\end{enumerate}
\end{theorem}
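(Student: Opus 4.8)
The plan is to deduce Theorem~\ref{th:sb} from the $W_n$ classification by examining the extended annihilation algebra. By Proposition~\ref{prop:1}, a finite irreducible $S_{n,b}$-module is the same as a finite irreducible continuous module over $(\Alg(S_{n,b}))^+$, which for $b\neq 0$ has just been identified with $CS(1,n)'_+$, namely $S(1,n)'_+$ with its degree-zero component enlarged from $sl(1|n)$ to $gl(1|n)$. First I would record the graded structure: the enlargement affects only $L_0$, so $L_{-1}=\langle\p_0,\ldots,\p_n\rangle$ and every $L_m$ with $m>0$ are exactly those of $S(1,n)'_+$, while $L_0=gl(1|n)$ as in the $W$ case. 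Theorem~\ref{th:1} then applies directly: irreducible modules in $P(L,\l0)$ correspond bijectively to irreducible finite-dimensional $gl(1|n)$-modules $F$, with $\text{Ind}_{\l0}^L F$ irreducible exactly when it carries no non-trivial singular vector.

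The second step is to run the singular-vector analysis. Since the enlargement touches only $L_0$ and leaves $L_{>0}$ untouched, the conditions $L_{>0}\cdot m=0$ are literally the divergence-free conditions already solved in Proposition~\ref{prop:5.1} (and extended to the derived algebra in Corollary~\ref{cor:5.4}); the only change is that the carrying space $V$ is now a $gl(1|n)$-module, so each weight acquires its extra first coordinate and the highest-weight data is read off in $gl(1|n)$ rather than $sl(1|n)$. I would then identify the resulting degenerate induced modules with the differential-form modules through Propositions~\ref{prop:d1} and~\ref{prop:d2}, realize the irreducible quotients as $\O_k^*/\text{Ker}\,\tilde d^*$ by means of Proposition~\ref{prop:d}, and obtain the modules of part~(c) as their conformal duals using Propositions~\ref{prop:22} and~\ref{prop:dd}, exactly as in the proof of Theorem~\ref{th:w}. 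This yields a list of the same shape as the $W_n$ answer.

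The essential new feature, and the heart of the argument, is that the twist parameter $\alpha$ does not occur here. For $W_n$, and likewise for $S_n=S_{n,0}$, the derivation defining the extended annihilation algebra is the \emph{inner} derivation $ad(\p_0)$, with $\p_0$ an honest element of the annihilation algebra; it therefore splits off a central line $\CC a$ acting by an arbitrary scalar $-\alpha$, which is precisely the origin of the families $\text{Tens}_\alpha V$ and $\O_{k,\alpha}$. For $b\neq 0$ the defining derivation is $ad(\p_0-b\sum_{i=1}^n\xi_i\p_i)$, and its degree-zero part $\sum_i\xi_i\p_i$ has divergence $-n\neq 0$, so it lies \emph{outside} $S(1,n)_+$. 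Hence this derivation is outer: adjoining it produces no central element but instead exactly the trace direction completing $sl(1|n)$ to $gl(1|n)$. I would make this precise by checking that $\sum_i\xi_i\p_i$ together with $sl(1|n)$ spans $gl(1|n)$ and that its eigenvalue on a highest-weight vector is thereby pinned down by the $gl(1|n)$-weight of $V$, so that no free scalar survives and the classification is Theorem~\ref{th:w} with the subscript $\alpha$ suppressed.

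The main obstacle I anticipate is the bookkeeping in the last two steps rather than any new conceptual input: one must verify that, once the twist is absorbed into $gl(1|n)$, the degenerate weights translate into precisely the exclusion list of part~(a), in which $\bar\O_c^k$ ($k\geq1$) is replaced by $\Lambda^k(\CC^{1|n})$ for all $k\geq0$. The additional degeneracy — in particular at $k=0$ — reflects that the $S$-analysis carries one more family of singular vectors than the $W$-analysis, namely case~(d) of Proposition~\ref{prop:5.1}, which is absent from Proposition~\ref{prop:s1}. Confirming that these singular vectors, read over $gl(1|n)$, produce exactly the exceptional families listed and no others, and that the duality identifications of Remark~\ref{rm:12} remain valid for $k>1$, is where I would concentrate; the remaining verifications are routine given the results already established for $W_n$ and $S_n$.
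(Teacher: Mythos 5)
Your proposal follows the paper's own (very terse) argument: the paper likewise identifies $(\Alg(S_{n,b}))^+$ with $CS(1,n)'_+$ for $b\neq 0$ and concludes that the construction is that of $W_n$ without the twist by $\alpha$, invoking Theorem~\ref{th:w}. You correctly supply the details the paper leaves implicit — in particular that $ad(\p_0-b\sum_{i=1}^n\xi_i\p_i)$ is an outer derivation of $S(1,n)'_+$ furnishing the trace direction of $gl(1|n)$, which is precisely why no free parameter $\alpha$ survives, and that the degenerate weights must be read off from the $S$-type singular-vector analysis of Proposition~\ref{prop:5.1} rather than Proposition~\ref{prop:s1}, which accounts for the modified exclusion list in part (a).
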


\vskip 1cm

\noindent{\it - Case $\tilde S_{n}$:}

\

\noindent Let $n\in\ZZ_+$ be an even integer. We take
\begin{displaymath}
\tilde S(1,n)=\{D\in W(1,n) | div((1+\xi_1\dots \xi_n)D)=0 \}.
\end{displaymath}
This is a formal distribution subalgebra of $W(1,n)$. The
associated Lie conformal superalgebra $\tilde S_n$ is constructed
explicitly as follows:
\begin{displaymath}
\tilde S_n=\{D\in W_n | div((1+\xi_1\dots
\xi_n)D)=0\}=(1-\xi_1\dots \xi_n)S_n.
\end{displaymath}
The Lie conformal superalgebra $\tilde S_n$ is simple for $n\geq
2$ and has rank $n2^n$.

The extended annihilation algebra is given by
\begin{displaymath}
(Alg (\tilde S_{n}))^+=\mathbb C ad(\p_0 - \xi_1\dots\xi_n \p_0)
\ltimes S(1,n)'_+ \simeq  S(1,n)_+.
\end{displaymath}

Therefore, the construction of all finite irreducible modules over
$\tilde S_{n}$ is the same as that for $S_n$, but without the
twist by $\alpha$.

\vskip 1cm

\noindent{\bf Acknowledgment.} C. Boyallian and J. Liberati were
supported in part by grants of Conicet, ANPCyT, Fundaci\'on
Antorchas, Agencia Cba Ciencia, Secyt-UNC and Fomec (Argentina).
Special thanks go to MSRI (Berkeley) for the hospitality during
our stay there, where part of this work started.



\noindent{\bf Authors'addresses}

\

\noindent Department of Mathematics, M.I.T., Cambridge, MA 02139,
USA,

\noindent email: kac@math.mit.edu.

\

\noindent Famaf-CIEM, Ciudad Universitaria, (5000) Cordoba,
Argentina,

\noindent email: boyallia@mate.uncor.edu, liberati@mate.uncor.edu.

\

\noindent Department of Mathematics, NTNU, Gl\o shaugen, N-7491
Trondheim, Norway,

\noindent email: rudakov@math.ntnu.no

\end{document}